\newcommand*\rel@kern[1]{\kern#1\dimexpr\macc@kerna}
\newcommand*\widebar[1]{%
  \begingroup
  \def\mathaccent##1##2{%
    \rel@kern{0.8}%
    \overline{\rel@kern{-0.8}\macc@nucleus\rel@kern{0.2}}%
    \rel@kern{-0.2}%
  }%
  \macc@depth\@ne
  \let\math@bgroup\@empty \let\math@egroup\macc@set@skewchar
  \mathsurround\z@ \frozen@everymath{\mathgroup\macc@group\relax}%
  \macc@set@skewchar\relax
  \let\mathaccentV\macc@nested@a
  \macc@nested@a\relax111{#1}%
  \endgroup
}
\theoremstyle{plain}
\newtheorem{theo}{Theorem}[section]
\newtheorem{lem}[theo]{Lemma}
\newtheorem{propo}[theo]{Proposition}
\newtheorem{cor}[theo]{Corollary}
\theoremstyle{definition}
\numberwithin{equation}{section}
\def\bbR{\mathbb{R}}
\def\bbC{\mathbb{C}}
\def\M{\mathcal{M}}
\def\A{\mathcal{A}}
\def\E{\mathcal{E}}
\def\H{\mathcal{H}}
\def\L{\mathcal{L}}
\def\V{\mathcal{V}}
\def\W{\mathcal{W}}
\def\G{\mathcal{G}}
\def\PP{\mathcal{P}}
\def\OO{\mathcal{O}}
\def\Sol{\mathrm{Sol}}
\def\Ker{\mathrm{Ker}}
\def\id{\mathrm{id}}
\def\supp{\mathrm{supp}}
\def\vol{\mathrm{vol}_M^{}}
\def\vols{\mathrm{vol}_\Sigma^{}}
\def\sc{\mathrm{sc}}
\def\tc{\mathrm{tc}}
\def\1{\mathbbm{1}}
\def\GG{\mathfrak{G}}
\def\beq{\begin{equation}}
\def\eeq{\end{equation}}
\title{Quantization of the linearised Einstein-Klein-Gordon system on arbitrary backgrounds and the special case of perturbations in Inflation}
\author{Thomas-Paul Hack\vspace{4mm}\\{\small Dipartimento di Matematica}\\{\small Universit{\`a} degli Studi di Genova –- Via Dodecaneso 35, 16146 Genova, Italy.}\vspace{2mm}\\ {\footnotesize hack@dima.unige.it} }
\date{\today}
\begin{document}

\maketitle

\begin{abstract} We quantize the linearised Einstein-Klein-Gordon system on arbitrary on-shell backgrounds in a manifestly covariant and gauge-invariant manner. For the special case of perturbations in Inflation, i.e. on-shell backgrounds of Friedmann-Lema\^itre-Robertson-Walker type, we compare our general quantization construction with the standard approach to the quantum theory of perturbations in Inflation. We find that not all local quantum observables of the linearised Einstein-Klein-Gordon system can be split into local observables of scalar and tensor type as in the standard approach. However, we argue that this subclass of observables is sufficient for measuring perturbations which vanish at spatial infinity, which is in line with standard assumptions. Finally, we comment on a recent observation that, upon standard quantization, the quantum Bardeen potentials display a non-local behaviour and argue that a similar phenomenon occurs in any local quantum field theory. It is the hope of the author that the present work may constitute a bridge between the generally applicable and thus powerful framework of algebraic quantum field theory in curved spacetimes and the standard treatment of perturbations in Inflation.
\end{abstract}

%%%%%%%%%%%%%%%%%%%%%%%%%%%%%%%%%%%%%%%%%%%%%%%%%%%%%%%
%%%%%%%%%%%%%%%%%%%%%%%%%%%%%%%%%%%%%%%%%%%%%%%%%%%%%%%

\section{\label{sec_intro}Introduction}

The inflationary paradigm, see e.g. the monographs \cite{Ellis, Mukhanov:2005sc, Straumann:2005mz}, is by now an important cornerstone of modern cosmology. In the simplest models for Inflation, one assumes that a classical real Klein-Gordon field $\phi$ with a suitable potential $V(\phi)$, coupled to spacetime metric via the Einstein equations, drives a phase of exponential expansion in the early universe. After this phase, the universe respectively its matter-energy content is thought to be almost completely homogenised, whereby the quantized perturbations of the scalar field and the metric are believed to constitute the seeds for the small-scale inhomogeneities in the universe that we observe today.

Mathematically, this idea is usually implemented by considering the coupled Einstein-Klein-Gordon system on a Friedmann-Lema\^itre-Robertson-Walker (FLRW) spacetime. Given a suitable potential $V(\phi)$, this coupled system will have solutions which display the wanted exponential behaviour. In order to analyse the perturbations in Inflation, the Einstein-Klein-Gordon system is linearised and the resulting linear field theory is quantized on the background solution in the framework of quantum field theory in curved spacetimes. The theory of perturbations in Inflation thus constitutes one of the major applications of this framework; a general treatment of quantum field theory in curved spacetimes may be found e.g. in the monographs and reviews \cite{Benini:2013fia, Birrell:1982ix, Hollands:2014eia, Mukhanov:2007zz, Parker:2009uva, Wald:1995yp}.

However, a direct quantization of the linearised Einstein-Klein-Gordon system is potentially obstructed by the fact that this system has gauge symmetries. Thus the usual approach to the quantization of perturbations in Inflation, see e.g. \cite{Bardeen:1980kt, Ellis,  Mukhanov:1990me, Mukhanov:2005sc, Straumann:2005mz} and the recent work \cite{Eltzner:2013soa}, consists of first splitting the degrees of freedom of the perturbed metric into components which transform as scalars, vectors and tensors under the isometry group of the FLRW background, the Euclidean group. Subsequently, gauge-invariant linear combinations of these components as well as the perturbed scalar field are identified, which are then quantized in the standard manner. Thereby it turns out that the tensor components of the perturbed metric are manifestly gauge-invariant, whereas the vector components are essentially pure gauge and thus unphysical. The scalar perturbations instead are usually quantized in terms of the gauge-invariant Mukhanov-Sasaki variable, which is essentially a Klein-Gordon field with a time-dependent mass. In the recent work \cite{Eltzner:2013soa}, this choice of dynamical variable has been shown to be uniquely fixed by certain natural requirements. The relation between the quantized perturbations of the Einstein-Klein-Gordon system and the small-scale inhomogeneities in the present universe is usually established by relating the power-spectrum of the latter to the power spectrum of the former in several non-trivial steps, cf. e.g. \cite{Ellis, Mukhanov:1990me, Mukhanov:2005sc, Straumann:2005mz}. An approach which differs in the way this relation is made, and is closer to the spirit of stochastic gravity, may be found in the recent work \cite{Pinamonti:2013zba}.
 
The conceptual drawback of the standard approach to quantizing perturbations in Inflation is that this approach makes heavy use of the isometry group and the related preferred coordinate system of FLRW spacetimes and is thus inherently non-covariant. In that sense, it is a bottom-up approach, which is of course well-motivated by the fact that it allows one to make explicit computations. Notwithstanding, it seems advisable to check whether the same results can be obtained in a rather top-down approach, as this would provide a firm conceptual underpinning of the standard approach. Motivated by this, we develop the quantum theory of the linearised Einstein-Klein-Gordon system on arbitrary on-shell backgrounds, and with arbitrary potential $V(\phi)$ and non-minimal coupling to the scalar curvature $\xi$, in the first part of this work. In order to deal with the gauge symmetries of this system, we follow the ideas of \cite{DimockVector}, which deals with the gauge-invariant quantization of the vector potential on curved spacetimes. This approach was later used in \cite{Fewster:2012bj} for quantizing linearised pure gravity on cosmological vacuum spacetimes and axiomatised in \cite{HS} in order to encompass arbitrary (Bosonic and Fermionic) linear gauge theories on curved spacetimes. In contrast to the BRST/BV approach to quantum gauge theories, see e.g. \cite{Hollands:2007zg, Fredenhagen:2011mq}, and \cite{Brunetti:2013maa} for an application to perturbative pure quantum gravity on curved spacetimes, the formalism used here works without the introduction of auxiliary fields, at the expense of being applicable only to linear field theories. As this  method to quantize linear gauge field theories is very much in the spirit of algebraic quantum field theory, a framework usually not encountered in the discussion of perturbations of Inflation, we have outlined the essential idea by means of a toy model in Section \ref{sec_toymodel}, where also the relation to the basic idea behind the standard approach to perturbations in Inflation is drawn.

In the second part of this work, we consider the special case of FLRW backgrounds and compare the quantum theory of the linearised Einstein-Klein-Gordon system constructed in the first part to the standard quantization of perturbations in Inflation by studying the set of quantum observables in both constructions. Indeed, we find that set of quantum observables in the standard approach, which is spanned by local observables of scalar and tensor type, is contained in the set of observables obtained in the general construction, but strictly smaller. However, we also find that this discrepancy seems to be alleviated if one restricts to configurations of the linearised Einstein-Klein-Gordon system which vanish at spatial infinity, which apparently is a general assumption in the standard approach, see e.g. \cite{Makino:1991sg}. Namely, we argue that local observables of scalar and tensor type are sufficient for measuring this subset of configurations. As a by-product of our analysis, we comment on a recent observation in \cite{Eltzner:2013soa} that, upon standard quantization, the quantum Bardeen potentials do not commute at space-like separations in contrast to the quantum Mukhanov-Sasaki variable. We argue that this occurs due to the manifestly non-local character of the Bardeen potentials and that the occurrence of non-local observables in a local quantum field theory and their failure to satisfy local commutation relations is generic and not at variance with physical principles.
 
This paper is organised at follows. After introducing our notations and conventions in Section \ref{sec_notations}, we discuss the equations of motion and gauge symmetries of the linearised Einstein-Klein-Gordon system in Section \ref{sec_eometc}. Using a suitable field redefinition, we cast these equations into a form which satisfies the axioms of linear gauge theories in \cite{HS}, in order to be able to directly apply the quantization construction devised in this work. Before doing so, we briefly sketch the analysis of the Cauchy problem for the linearised Einstein-Klein-Gordon system in Section \ref{sec_Cauchy}. In Section \ref{sec_quantization}, we then apply the results of \cite{HS} in order to quantize the linearised Einstein-Klein-Gordon system on arbitrary on-shell backgrounds. To this avail, we first construct a suitable presymplectic space and then quantize it in a canonical manner. Afterwards, we  demonstrate that this presymplectic space can be constructed in an equivalent way, which is better suited for computations. Finally, we argue that the field redefinition introduced in the beginning is physically irrelevant. In the second part of the paper, we first review the classification of perturbations in Inflation in Section \ref{sec_classpert}. We then discuss the splitting of the equations of motions and gauge symmetries as well as the standard gauge-invariant potentials in Section \ref{sec_spliteom}, where we already obtain a few intermediate results necessary for the comparison of the the general and standard quantization approaches. The main results of this comparative analysis can be found in Section \ref{sec_inflationresults}, where we demonstrate that the set of observables in the standard approach is contained in, but strictly smaller than the set of observables in the general construction. Subsequently, we argue that this smaller set of observables is still separating on configurations which vanish at spatial infinity in Section \ref{sec_separability}, which also contains a proof that the presymplectic space in the general quantization construction is actually symplectic on FLRW backgrounds. We close the paper with a few comments on the non-locality of the Bardeen potentials in Section \ref{sec_noncomm}. The Appendix contains several technical details and proofs as well as a list of symbols.

%%%%%%%%%%%%%%%%%%%%%%%%%%%%%%%%%%%%%%%%%%%%%%%%
%%%%%%%%%%%%%%%%%%%%%%%%%%%%%%%%%%%%%%%%%%%%%%%%

\section{\label{sec_general}Quantization of the linearised Einstein-Klein-Gordon system on arbitrary backgrounds}
\label{sec_quantgen}

\subsection{Notation and conventions}
\label{sec_notations}
 In the following, $M$ is a four-dimensional smooth (infinitely often differentiable) manifold on which we consider only smooth Lorentzian metrics $g$ which render $(M,g)$ globally hyperbolic, see e.g. \cite{Bar:2007zz} for a definition and properties. In particular, $M$ is diffeomorphic to $\bbR \times \Sigma$, i.e. to a Cartesian product of ''time'' and ''space'', and contains smooth Cauchy surfaces diffeomorphic to $\Sigma$. \cite{Bernal:2004gm, Bernal:2005qf}. We denote the causal future (past) of a subset $\OO\subset M$ by $J^+(\OO)$, ($J^-(\OO)$) and call $\OO\subset M$ time-like compact if there exist two Cauchy surfaces $\Sigma^\pm$ of $(M,g)$ with $\Sigma^-\subset J^-(\Sigma^+)$ and $\Sigma^+\cap\Sigma^-=\emptyset$, such that $\OO\subset J^-(\Sigma^+)\cap J^+(\Sigma^-)$. Moreover, we say that $\OO\subset M$ is space-like compact if its intersection with any Cauchy surface of $(M,g)$ is compact. See e.g. \cite{Sanders:2012ac, Baernew} for a detailed discussion of such sets.
 
 We introduce the vector bundles over $M$ $\V:= \bigvee^2 T^*M \oplus \left(M\times \bbR\right)$, where $\bigvee$ denotes the symmetric tensor product, and $\W := TM$. The space of smooth sections of a vector bundle such as $\V$ will be denoted by $\Gamma(\V)$. Important subspaces of $\Gamma(\V)$ are $\Gamma_0(\V)$, $\Gamma_\tc(\V)$ and $\Gamma_\sc(\V)$ the space of smooth sections of compact, time-like compact and space-like compact support, respectively. 
 
 Our metric sign convention is mostly plus and the curvature tensor conventions used are $(\nabla_a\nabla_b-\nabla_b\nabla_a)v_c = R_{abc}^{\phantom{abc}d}v_d$, $R_{ab}=R_{acb}^{\phantom{abc}c}$, $R=R_{a}^{\phantom{a}a}$, where $\nabla$ denotes the Levi-Civita covariant derivative and raising and raising and lowering indices is always defined with respect to the background metric $g$. Finally, we use units in which $8\pi G=1$, $G$ being Newton's gravitational constant; this renders the Klein-Gordon field dimensionless.  

A list of the symbols which are used across different sections of the text may be found in the Appendix \ref{sec_listsymbols}. 

\subsection{Equations of motion and gauge invariance}
\label{sec_eometc}

We consider the following coupled system of a Lorentzian metric $g$ and a scalar field $\phi$ on a four-dimensional smooth manifold $M$.
\begin{align}\label{eq_fullcoupled}
G_{ab}(g)=T_{ab}(g,\phi)\\
-g^{ab}\left(\nabla_g\right)_a\left(\nabla_g\right)_b \phi + \xi R(g) \phi +  \partial_\phi V(\phi)=0.\notag
\end{align}
Here $G_{ab}=R_{ab}-\frac12 R g_{ab}$ is the Einstein tensor, 
\begin{align*}
T_{ab}=&\;(1-2\xi)\left(\nabla_a \phi\right) \nabla_{b}\phi-2\xi\phi\nabla_{a}\nabla_b\phi+\xi G_{ab}\phi^2\\&\;+g_{ab}\left\{2\xi \phi\nabla_c\nabla^c\phi+\left(2\xi-\frac12\right)\left(\nabla^c\phi\right)\nabla_{c}\phi-V\right\}
\end{align*}
 is the stress-energy tensor of the scalar field $\phi$ and $V(\phi)$ is an arbitrary potential, whereas $\xi$ denotes a possible non-minimal coupling of the scalar field to the scalar curvature. We have emphasised the dependence of the Levi-Civita covariant derivative as well as the dependence of the curvature tensors on the metric in \eqref{eq_fullcoupled} but will omit this dependence, as well as the dependence of $V$ on $\phi$ in the following. Note that the system \eqref{eq_fullcoupled} is not conformally invariant if $\xi=\frac16$ and $V=\lambda \phi^4$, i.e. not invariant under the transformation $g\mapsto g \Omega^2$, $\phi\mapsto \phi \Omega^{-1}$ for $\Omega: M\to (0,\infty)$\footnote{In the context of inflation driven by the scalar field $\phi$, conformal coupling $\xi=\frac16$ is disfavoured because the number of ``e-foldings'' $N_e$ is proportional to $\xi-\frac{1}{6}$, cf. \cite{Makino:1991sg}.}. As is well-known, one can absorb the non-minimal coupling $\xi\neq0$ by a re-definition of $g$ and $\phi$, see e.g. \cite{Futamase:1987ua, Salopek:1988qh, Makino:1991sg} and a short review in Section \ref{sec_EinsteinJordan}. However, we chose to perform our computations without this detour in order to keep them more transparent. Yet, we will use this idea as an inspiration when considering the special case of perturbations in inflation.
 
\subsubsection{The original equations of motion and gauge transformations}
 
 We are interested in the field theory defined by the linearisation of the Einstein-Klein-Gordon system \eqref{eq_fullcoupled}. To this avail, we consider a smooth one-parameter family $\lambda \mapsto\GG(\lambda):= (g(\lambda), \phi(\lambda))^T$ of smooth solutions to \eqref{eq_fullcoupled}, and define 
 $\Gamma:=(\gamma,\varphi)^T:=\frac{d}{d\lambda}\GG(\lambda)|_{\lambda=0}$, $\GG:=\GG(0)=(g(0), \phi(0))^T=:(g, \phi)^T$. In this work, we restrict to $g$ with the property that $(M,g)$ is globally hyperbolic. The perturbation $\Gamma$ is a smooth section of the vector bundle $\V$, $\Gamma\in \Gamma(\V)$, cf. Section \ref{sec_notations}. We introduce on such sections a symmetric and non-degenerate bilinear form by
 \beq\label{eq_bilinearV}\langle \Gamma_1, \Gamma_2\rangle_{\V} := \int\limits_M \vol \left(g^{ab} g^{cd}\gamma_{1,ac}\gamma_{2,bd}+\varphi_1\varphi_2\right)\eeq
 which is well-defined for all sections $\Gamma_1, \Gamma_2\in \Gamma(\V)$ with compact overlapping support. Analogously we introduce a symmetric and non-degenerate bilinear form on smooth sections $\varsigma_1, \varsigma_2\in \Gamma(\W)$ with compact overlapping support by 
\begin{equation}\label{eq_formW}\langle\varsigma_1, \varsigma_2\rangle_\W  := \int\limits_M \vol \;g^{ab}\varsigma_{1,a}\varsigma_{2,b}\,.\end{equation}

 For the ensuing discussion, we write the Einstein-Klein-Gordon equations in the form 
$$\Gamma(\V)\ni E=\begin{pmatrix}E^2_{ab}\\E^0\end{pmatrix}=\begin{pmatrix}\frac{1}{2}(G_{ab}-T_{ab})\\-\nabla_a\nabla^a \phi + \xi R \phi +  \partial_\phi V\end{pmatrix}=0\,.$$
 The factor of $\frac12$ in $E^2_{ab}$ follows from $E=0$ being the Euler-Lagrange equations obtained by varying the Einstein-Hilbert-Klein-Gordon action w.r.t to the field tuple $\GG=(g,\phi)^T$. The linearised equation of motion is obtained from 
 \begin{equation}\label{eq_deflinearisedeom}\frac{d}{d\lambda}E(\GG(\lambda))|_{\lambda=0}=:P \Gamma=0\,.\end{equation}
 This defines the partial differential operator (see \eqref{eq_originaleom} in the Appendix for the complete expression)
 
\begin{gather*}P:\Gamma(\V)\to\Gamma(\V)\qquad P=\begin{pmatrix}P_0 & P_2\\P_3 & P_1\end{pmatrix}\notag\\
(P_0 \gamma)_{ab}=\frac{1-\xi\phi^2}{4}\left(-\nabla^c\nabla_c \gamma_{ab} + 2 \nabla^c\nabla^{\phantom{c}}_{(a}\gamma_{b)c}-g_{ab}\nabla^c\nabla^d\gamma_{cd}-\nabla_a\nabla_b{\gamma}_{c}^{\phantom{c}c}+\right.\\
\left.+g_{ab}\nabla_c\nabla^c{\gamma}_{d}^{\phantom{c}d}\right)+\text{ lower derivative orders}\\
(P_2\varphi)_{ab}=\left(-\xi g_{ab}\phi \nabla_c\nabla^c+\xi \phi \nabla_a\nabla_b\right)\varphi+\text{ lower derivative orders}\notag\\
P_3 \gamma = \left(\xi\phi \nabla^c\nabla^d-\xi\phi\nabla_a\nabla^a g^{cd}\right) \gamma_{cd}+\text{ lower derivative orders}\notag\\
P_1\varphi=\left(-\nabla_c\nabla^c+\xi R + \partial^2_\phi V\right)\varphi\notag\,.
\end{gather*} 
 
 Expanding the Einstein-Hilbert-Klein-Gordon action 
$$S(\GG):=\int \limits_M \vol\left(\frac{R}{2}-\frac{(\nabla\phi)^2}{2}-\frac{\xi\phi^2 R}{2}-V\right)$$
with respect to the perturbation $\Gamma$ up to second order, one obtains
\begin{equation}\label{eq_quadraticaction}S(\GG+\Gamma) = S(\GG)-\langle E,\Gamma\rangle_V- \frac12 \langle \Gamma, (P+A) \Gamma\rangle_\V+{\cal O}(\Gamma^3)=:S^{(2)}(\Gamma)+{\cal O}(\Gamma^3)\,,\end{equation}
with $$A:=\begin{pmatrix}
-2 {E}_{(a}^{2\phantom{d}c} {g}_{b)}^d+\frac12 E^2_{ab}g^{cd}& 0\\
\frac12 E_0 g^{cd}& 0
\end{pmatrix}.$$
 The Euler-Lagrange equations of $S^{(2)}(\Gamma)$ (w.r.t. compactly supported variations of $\Gamma$) are
 $$\frac{1}{2}\left(P^{\dagger}+P+A^\dagger+A\right)\Gamma=(P+A)\Gamma=-E$$
 where the formal adjoint $P^{\dagger}$ of $P$ is defined by 
 $$\langle P^{\dagger}\Gamma_1, \Gamma_2\rangle_\V:=\langle \Gamma_1, P^{}\Gamma_2\rangle_\V$$
 for all sections $\Gamma_1, \Gamma_2\in \Gamma(\V)$ with compact overlapping support and $A^\dagger$ is defined analogously. From the formal selfadjointness of $P+A$ we can infer
 $$\left(P^{\dagger}-P^{}\right)=\left(A^{}-A^{\dagger}\right)=\begin{pmatrix}
 \frac12\left(E^2_{ab}{g}^{cd}-g^{\phantom{2}}_{ab}{E^2}^{cd}\right)& -\frac12 E^0 g_{ab}\\
 \frac12 E^0 g^{cd} & 0
 \end{pmatrix}.$$
Thus, $P$ is formally selfadjoint and the equation \eqref{eq_deflinearisedeom} is the Euler-Lagrange equation of $S^{(2)}(\Gamma)$ if and only if the background fields $(g,\phi)$ are on-shell. The phenomenon that the linearised equation of motion operator $P$ is neither formally selfadjoint nor the Euler-Lagrange operator of the quadratic term in $S^{(2)}(\Gamma)$ is attributable to the metric dependence of $\langle \cdot,\cdot\rangle_\V$ \eqref{eq_bilinearV} via the volume element $\vol=\sqrt{|\det g|} \,d^4x$ and the inverse metric. Indeed, linearising $\sqrt{|\det g|}(g^{ac}g^{bd}E^2_{cd},E^0)^T$ rather than $E$ leads to the ``correct'' equation $(P+A)\Gamma=0$ also on off-shell backgrounds.

The Einstein-Hilbert-Klein-Gordon action is invariant under diffeomorphisms of $M$, thus we expect that the linearised theory is invariant under ''linearised diffeomorphisms''\footnote{In fact, \eqref{eq_PK} follows at first order in $\varsigma$ from the diffeomorphism invariance of $\langle E,\Gamma\rangle_V$, cf. \cite{Stewart:1974uz}, whereas the diffeomorphism invariance of $S(\GG)$ implies $K^\dagger E=0$.}, i.e. under the transformation
\beq\label{eq_gaugetrafosorig}\Gamma\mapsto \Gamma + \L_\varsigma \GG
= \Gamma + \begin{pmatrix}2 \nabla_{(a}\varsigma_{b)}\\\varsigma^a\nabla_a\phi\end{pmatrix}=:\Gamma + K\varsigma\,,\eeq
 where $\L$ denotes the Lie derivative, $_{(a\;b)}$ denotes idempotent symmetrisation in the indices $a, b$ and $\varsigma\in \Gamma(\W)$ (recall $\W=TM$). This defines a partial differential operator $K:\Gamma(\W)\to \Gamma(\V)$. Indeed, a necessary and sufficient condition for invariance of the quadratic action \eqref{eq_quadraticaction} w.r.t. this transformation for arbitrary compactly supported $\varsigma$ is 
 \begin{equation}\label{eq_gaugeinvariance}(P+A) \circ K = 0\qquad\text{and}\qquad K^{\dagger} \circ (P+A)\circ K = 0,\end{equation} with the adjoint $K^{\dagger}:\Gamma(\V)\to\Gamma(\W)$ of $K^{}$ being defined by $\langle K^{\dagger}\Gamma, \varsigma\rangle_\W:=\langle \Gamma, K^{}\varsigma\rangle_\V$ for arbitrary $\Gamma\in\Gamma(\V)$, $\varsigma\in\Gamma(\W)$ with compact overlapping support. One can either compute or argue via the chain rule for the Lie derivative that for arbitrary $\varsigma\in\Gamma(\W)$
 \beq \label{eq_PK}P K\varsigma = P\L_\varsigma \GG= \L_\varsigma E=\begin{pmatrix} \varsigma^c\nabla_c E^2_{ab}+2E^2_{c(a}\nabla^{\phantom{2}}_{b)}\varsigma^c\\\varsigma^c\nabla_c E^0\end{pmatrix}\eeq
and thus $P \circ K = 0$ if and only if $E^2_{ab}=0$ and $E^0$ is constant, in particular if the background fields $(g,\phi)$ are on-shell. In this case, \eqref{eq_gaugeinvariance} follows from the vanishing of $A$. Thus, in the following, we shall always assume that the background metric and scalar field are on-shell. While this is in line with the common approach to perturbation theory, it also assures that the linearised Einstein-Klein-Gordon equations are the Euler-Lagrange-equations of the action expanded to second order and that this expanded action is gauge-invariant w.r.t. compactly supported gauge transformations. This restriction to on-shell backgrounds is an artefact of truncating the expansion of the action $S(\GG+\Gamma)$ at second order in $\Gamma$ respectively the diffeomorphism $e^{\displaystyle\varsigma}$ at second order in $\varsigma$, in analogy to the pure gravity case, cf. \cite{Brunetti:2013maa}.
 
Hyperbolic properties of linear equations of motion in field theory are essential for two reasons. On the one hand, they guarantee causal propagation of initial data, such that value of a solution $\Gamma(x)$ of a hyperbolic equation at a point $x\in M$ depends only on the values of $\Gamma$ in the causal past $J^-(x)$ of $x$. On the other hand they guarantee that the propagation of initial data is deterministic, i.e. unique solutions exist for given initial data on a Cauchy surface, in other words, the Cauchy problem for the equations of motion is well-posed. From $P \circ K = 0$ we can infer that $P$ is not hyperbolic as the equation $P \Gamma=0$ has solutions with compact support in time and thus can not have a well-posed Cauchy problem. Notwithstanding, on account of the gauge freedom of the theory one should rather look at gauge-equivalence classes of solutions; then it is sufficient to check whether each such class contains an element which solves a hyperbolic equation, i.e. whether a gauge-fixing exists which turns  $P \Gamma=0$ into a hyperbolic equation. 

\subsubsection{The redefined equations of motion and gauge transformations and their properties}

The original form of the operator $P$ is quite difficult to handle and to analyse in that respect because the principal symbol -- the coefficient of the second derivative -- is quite complicated. To cope with this, we introduce a field redefinition, i.e. a fibre-wise map on $\Gamma(\V)$, which is a generalisation of the usual ''trace-reversal'' map used in linearised pure gravity. 
\begin{equation}\label{eq_tracereversal}\widebar {\,\cdot\,}:\Gamma(\V)\mapsto \Gamma(\V)\qquad \Gamma=\begin{pmatrix}\gamma_{ab}\\\varphi\end{pmatrix}\mapsto\widebar {\Gamma}=\begin{pmatrix}\displaystyle\frac{\alpha}{4}\left(\gamma_{ab}-\frac12 g_{ab}{\gamma}_{c}^{\phantom{c}c}\right)+\frac{\xi\phi}{2} g_{ab}\varphi\\\displaystyle\left(1+\frac{2\xi^2\phi^2}{\alpha}\right)\varphi+\frac{\xi\phi}{2}{\gamma}_{c}^{\phantom{c}c}\end{pmatrix},\end{equation}
where
$$\alpha := 1-\xi\phi^2\qquad\beta := 1+\kappa\phi^2\qquad \kappa := \left(6\xi-1\right)\xi\,.$$
The inverse transformation is given by
\begin{equation}\label{eq_tracereversalinverse}\widebar {\,\cdot\,}^{-1}:\Gamma(\V)\mapsto \Gamma(\V)\end{equation}
$$\Gamma=\begin{pmatrix}\gamma_{ab}\\\varphi\end{pmatrix}\mapsto\widebar {\Gamma}^{-1}=\begin{pmatrix}\displaystyle
\frac{4}{\alpha}\left(\gamma_{ab}-\frac12 g_{ab}{\gamma}_{c}^{\phantom{c}c}+\xi \phi g_{ab}\frac{\alpha\varphi+2\xi\phi{\gamma}_{c}^{\phantom{c}c} }{2\beta}\right)\\\displaystyle
\frac{\alpha\varphi+2\xi\phi{\gamma}_{c}^{\phantom{c}c} }{\beta}
\end{pmatrix}\,.$$

This field redefinition and its inverse are always well-defined if $\xi=0$. For $\xi\neq0$ singularities occur if $\alpha=1-\xi\phi^2=0$ or $\beta=1+\left(6\xi-1\right)\xi\phi^2= 0$ somewhere on $M$. Without going into details as this would go beyond the scope of this work, we briefly argue why we exclude these cases. At points where $1-\xi\phi^2=0$ the Einstein tensor $G_{ab}$ cancels from the background equations \eqref{eq_fullcoupled} and thus these equations completely change their character. A somewhat weaker degeneracy occurs in the case where $1+\left(6\xi-1\right)\xi\phi^2= 0$, which implies that the Ricci scalar cancels from the Einstein equation, i.e. the first equation in \eqref{eq_fullcoupled}, and thus only the trace-free part of the Ricci tensor remains. One might expect that solutions of the Einstein-Klein-Gordon equations which display one of the two above-mentioned degeneracies are singular at the degenerate points, but we are not aware of any results in this direction. Thus, for simplicity, we choose to discard these presumably pathological cases and restrict to backgrounds where $1-\xi\phi^2\neq0$ and $1+\left(6\xi-1\right)\xi\phi^2\neq0$ on all $M$. Note that in the conformally coupled case, $\xi=\frac16$, the condition $1-\xi\phi^2\neq0$ is sufficient.

Using this field redefinition, we now define 
\beq \label{eq_def_redefinedq}{\widebar {P}}:=P\circ \widebar {\,\cdot\,}^{-1}\qquad \widebar {K}:=\widebar {\,\cdot\,}\circ K\qquad \langle \cdot ,\cdot\rangle_{\widebar {\V}}  := \langle \widebar {\,\cdot\,}^{-1} ,\cdot\rangle_\V\qquad \Theta:=\begin{pmatrix}\theta_{ab}\\\zeta\end{pmatrix}:=\widebar {\Gamma}\,.\eeq
These definitions are tailored in such a way that the second order action for $\Gamma$ on on-shell backgrounds can now be re-written as
$$S^{(2)}(\Gamma) = S(\GG) -\frac12 \langle \Gamma, P \Gamma\rangle_\V=S(\GG)-\frac12\langle \Theta,{\widebar {P}}\Theta\rangle_{\widebar {\V}} =:\widebar {S}^{(2)}(\Theta)\,.$$
Moreover, the bilinear form $\langle \cdot ,\cdot\rangle_{\widebar {\V}} $ is non-degenerate as $\langle \cdot ,\cdot\rangle_\V$ is non-degenerate and $\widebar {\,\cdot\,}^{-1}$ is injective due to our standing assumptions; one can also check that the re-defined bilinear form is symmetric\footnote{The symmetry of the re-defined bilinear form is not automatic, but only holds if $\widebar {\,\cdot\,}$ is given by a symmetric automorphism on the fibres of $\V$. To achieve this also for $\xi\neq0$, our normalisation of the tensor part of $\widebar {\,\cdot\,}$ deviates from the one induced by a direct generalisation of standard trace reversal by a factor of $\frac14$.}, cf. \eqref{eq_redefinedForm}. Furthermore, provided the background fields $(g,\phi)$ are on-shell, ${\widebar {P}}$ is formally selfadjoint with respect to  $\langle \cdot ,\cdot\rangle_{\widebar {\V}} $ and ${\widebar {P}}\circ \widebar {K}=0$. Thus, in this case, ${\widebar {P}}\Theta=0$ is the Euler-Lagrange equation for the action $\widebar {S}^{(2)}(\Theta)$, which is invariant with respect to $\Theta\mapsto \Theta+\widebar {K}\varsigma$ for arbitrary compactly supported $\varsigma\in \Gamma(\W)$. We stress that the field re-definition used here is merely a computational trick in order to cast the linearised Einstein-Klein-Gordon system in a more manageable form. In particular, all physical properties of the quantized field theory we shall discuss in the following, such as e.g. gauge-invariance and the properties of the commutation relations, do not depend on this field re-definition. After all constructions are performed, one may re-write the results in terms of the old field variables without further effort. We shall sketch this at the end of the next section.

We now provide the expressions for ${\widebar {P}}$, $\widebar {K}$ and $\langle \cdot, \cdot\rangle_{\widebar {\V}} $ (see \eqref{eq_redefinedP} in the Appendix for the complete form of ${\widebar {P}}$). As we always assume that the background metric and scalar field satisfy the full Einstein-Klein-Gordon equations, we have used these to simplify ${\widebar {P}}$.

\begin{equation*}
{\widebar {P}}:\Gamma(\V)\to\Gamma(\V)\qquad {\widebar {P}}=\begin{pmatrix}{\widebar {P}}_0 & {\widebar {P}}_2\\{\widebar {P}}_3 & {\widebar {P}}_1\end{pmatrix}\end{equation*}

$$({\widebar {P}}_0 \theta)_{ab}=-\nabla_c\nabla^c \theta_{ab}+2\nabla^c \nabla_{(a} \theta_{b)c} - g_{ab} \nabla^c\nabla^d \theta_{cd}+\text{ lower derivative orders}$$
 
$${\widebar {P}}_2 \zeta = \text{``no second derivatives''}$$ 
 
$${\widebar {P}}_3 \theta = 4\frac{\xi\phi}{\alpha}\nabla^a\nabla^b\theta_{ab}+\text{ lower derivative orders}$$
 
$${\widebar {P}}_1 \zeta = -\nabla_a\nabla^a\zeta+\text{ lower derivative orders}$$

\begin{equation}\label{eq_redefinedK}\widebar {K}:\Gamma(\W)\to\Gamma(\V)\qquad \widebar {K}\varsigma=\begin{pmatrix}\displaystyle\frac{\alpha}{2}\left( \nabla_{(a}\varsigma_{b)}-\frac12 g_{ab} \nabla_c\varsigma^c\right)+\frac{\xi\phi}{2} g_{ab}\varsigma^c\nabla_c\phi\\\displaystyle\left(1+\frac{2\xi^2\phi^2}{\alpha}\right)\varsigma^c\nabla_c\phi+\xi\phi\nabla_c\varsigma^c\end{pmatrix}\end{equation}

\begin{equation}\label{eq_redefinedForm}\langle \Theta_1,\Theta_2\rangle_{\widebar {\V}} =
\int\limits_M \vol \left(\frac{4}{\alpha}{\theta_1}_{ab}\theta_2^{ab}+\frac{4\xi^2\phi^2-2\beta}{\beta\alpha}{\theta_1}_{c}^{\phantom{c}c}{\theta_2}_{d}^{\phantom{c}d}+\frac{2\xi\phi}{\beta}\left({\theta_1}_{c}^{\phantom{c}c}\zeta_2+{\theta_2}_{c}^{\phantom{c}c}\zeta_1 \right)+\frac{\alpha}{\beta}\zeta_1\zeta_2\right)\end{equation}

Coming back to our aim of using gauge-invariance to put the equation of motion ${\widebar {P}}\Theta=0$ into a hyperbolic form, we note that ${\widebar {P}}$ has the form of a  normally hyperbolic differential operator, i.e. $-\nabla_c\nabla^c +"\text{lower orders}"$, up to terms which contain derivatives of $\theta_{ab}$ of the form $\nabla^b\theta_{ba}$. These terms are not present for any field configuration $\Theta\in\Gamma(\V)$ which satisfies 
$$\widebar {K}^\dagger \Theta = -2\nabla^b \theta_{ba} + (\nabla_a \phi)\zeta=0$$
in addition to ${\widebar {P}}\Theta=0$. Here, $\widebar {K}^\dagger:\Gamma(\V)\to \Gamma(\W)$ is the adjoint of $\widebar {K}$ defined by $\langle \widebar {K}^\dagger \Theta,\varsigma\rangle_\W := \langle  \Theta,\widebar {K}\varsigma\rangle_{\widebar {\V}} $ for all $\Theta\in\Gamma(\V)$, $\varsigma\in\Gamma(\W)$ with compact overlapping support. Given any solution $\Theta$ of ${\widebar {P}}\Theta=0$, we can find a gauge-equivalent solution $\Theta^\prime = \Theta + \widebar {K}\varsigma$ which satisfies $\widebar {K}^\dagger \Theta^\prime=0$ by solving $\widebar {K}^\dagger \widebar {K} \varsigma = - \widebar {K}^\dagger \Theta$ for $\varsigma$. This is possible because 
$$\widebar {K}^\dagger\circ  \widebar {K}: \Gamma(\W)\to \Gamma(\W)$$
\begin{equation}\label{eq_KDaggerK}( \widebar {K}^\dagger \widebar {K} \varsigma)_a = -\frac{\alpha}{2} \nabla^b\nabla_b \varsigma_a +\xi \phi (\nabla^b \phi)\nabla_b \varsigma_a-\frac{\alpha}{2}R_{ab}\varsigma^b-\end{equation}$$-\frac{2\xi+\alpha-3\xi\alpha}{\alpha}(\nabla_a \phi)(\nabla^b \phi)\varsigma_b-\frac{\xi \phi}{4}(\nabla_a \nabla_b\phi)\varsigma^b$$
is a multiple of a normally hyperbolic operator and thus has a well-posed Cauchy problem for arbitrary sources, see \cite[Thm 3.2.11]{Bar:2007zz} and \cite[Corollary 5]{Ginoux}. One may call the gauge defined by $\widebar {K}^\dagger \Theta=0$ ``generalised de Donder gauge''. Introducing a ``gauge-fixing operator'' 
\begin{equation}\label{eq_T}
T:\Gamma(\W)\to\Gamma(\V)\qquad T:= \frac{2}{\alpha} \widebar {K}\,,
\end{equation}
we can define a ``gauge-fixed equation of motion operator'' by (the full expression is displayed in \eqref{eq_PTilde} in the Appendix)
$$
\widetilde{P}:\Gamma(\V)\to\Gamma(\V)\qquad \widetilde{P}:= {\widebar {P}} + T\circ \widebar {K}^\dagger$$

$$\widetilde{P}=\begin{pmatrix}-\nabla_c\nabla^c & 0\\0 & -\nabla_c\nabla^c\end{pmatrix}+\text{ lower derivative orders}.$$

$\widetilde{P}$ is indeed normally hyperbolic and thus unique solutions to $\widetilde{P}\Theta=0$ exist for arbitrary initial data, i.e. arbitrary prescriptions for $\Theta$ and its normal derivative, on any Cauchy surface of $(M,g)$ \cite[Thm 3.2.11]{Bar:2007zz}. Note that this property does not fix $T$ uniquely, in fact every $T^\prime$ which differs from $T$ by terms without derivatives leads to a normally hyperbolic gauge-fixed equation of motion. 

A further important property of $T$ \eqref{eq_T} is that $Q:= \widebar {K}^\dagger \circ T$ is normally hyperbolic and has a well-defined Cauchy problem as well. Indeed, we can compute
$$Q:\Gamma(\W)\to\Gamma(\W)\qquad   Q:= \widebar {K}^\dagger \circ T $$
\begin{equation}\label{eq_KDaggerT}\left( Q \varsigma\right)_a = -\nabla^b\nabla_b \varsigma_a  + \frac{4\xi \phi}{\alpha} (\nabla_a \phi)\nabla^b \varsigma_b-R_{ab}\varsigma^b+\end{equation}$$+\frac{\xi^2\left(\beta+3\alpha-2\xi\right)}{\alpha^2}(\nabla_a \phi)(\nabla^b \phi)\varsigma_b-\frac{2\xi \phi}{\alpha}(\nabla_a \nabla_b\phi)\varsigma^b-\frac{2\xi \phi}{\alpha} (\nabla^b \phi)\nabla_a \varsigma_b\,.$$
This property implies that the constraint $\widebar {K}^\dagger \Theta=0$ is compatible with the time evolution induced by $\widetilde{P}$ and thus one can obtain solutions of $\widebar {P}\Theta=0$ by solving $\widetilde{P}\Theta=0$ subject to the constraint $\widebar {K}^\dagger \Theta=0$, cf. Section \ref{sec_Cauchy}.

We now collect the already discussed properties of the linearised Einstein-Klein-Gordon theory in the following theorem. At this point we can view this model as a field theory defined by the data $(\M,\V,\W,{\widebar {P}},\widebar {K})$, where $\M:=(M,g,\phi)$ is shorthand for the smooth manifold $M$ with the background fields $(g,\phi)$. Moreover, the spaces of smooth sections $\Gamma(\V)$ and $\Gamma(\W)$ of the real vector bundles $\V$ and $\W$ over $M$ are endowed with the bilinear forms $\langle\cdot,\cdot\rangle_{\widebar {\V}} $ and $\langle\cdot,\cdot\rangle_\W$ respectively.

\begin{theo}
\label{prop_propLEKGS}The linearised Einstein-Klein-Gordon system defined by $(\M,\V,\W,{\widebar {P}},\widebar {K})$, where 
\begin{itemize}
\item $\M:=(M,g,\phi)$ with $(M,g)$ a smooth (connected, Hausdorff, orientable and time-orientable) four-dimensional globally hyperbolic spacetime and $(g,\phi)$ a solution of the Einstein-Klein-Gordon equations \eqref{eq_fullcoupled} with $1-\xi\phi^2\neq 0$ and $1+\left(6\xi-1\right)\xi\phi^2\neq 0$ on all $M$
\item $\V:= \bigvee^2 T^*M \oplus \left(M\times \bbR\right)$ and $\W:=TM$ real vector bundles over $M$
\item the spaces of smooth sections $\Gamma(\V)$ and $\Gamma(\W)$ of $\V$ and $\W$ over $M$ are endowed with the bilinear forms $\langle\cdot,\cdot\rangle_{\widebar {\V}} $ \eqref{eq_redefinedForm} and $\langle\cdot,\cdot\rangle_\W$ \eqref{eq_formW}
\item ${\widebar {P}}$ is the differential operator ${\widebar {P}}:\Gamma(\V)\to\Gamma(\V)$ defined in \eqref{eq_redefinedP} and $\widebar {K}$ is the differential operator $\widebar {K}:\Gamma(\W)\to\Gamma(\V)$ defined in \eqref{eq_redefinedK}
\end{itemize}
has the following properties.
\begin{enumerate}
\item $\langle\cdot,\cdot\rangle_{\widebar {\V}} $ and $\langle\cdot,\cdot\rangle_\W$ are symmetric and non-degenerate.
\item ${\widebar {P}}$ is formally selfadjoint with respect to $\langle\cdot,\cdot\rangle_{\widebar {\V}} $ and satisfies ${\widebar {P}}\circ \widebar {K}=0$.
\item The differential operator $R:\Gamma(\W)\to\Gamma(\W)$, $R:=\widebar {K}^\dagger \circ \widebar {K}$ \eqref{eq_KDaggerK}, with $\widebar {K}^\dagger:\Gamma(\V)\to\Gamma(\W)$, $\langle \widebar {K}^\dagger\cdot,\cdot \rangle_\W:=\langle \cdot,\widebar {K}\cdot \rangle_{\widebar {\V}} $ is a multiple of a normally hyperbolic operator and thus has a well-posed Cauchy problem.
\item There exists a differential operator $T:\Gamma(\W)\to\Gamma(\V)$, e.g. $T=\frac{2}{1-\xi\phi^2}\widebar {K}$, such that $\widetilde{P}:\Gamma(\V)\to\Gamma(\V)$, $\widetilde{P}:= {\widebar {P}} + T\circ \widebar {K}^\dagger$ \eqref{eq_PTilde} and $Q:\Gamma(\W)\to\Gamma(\W)$, $Q:= \widebar {K}^\dagger \circ T$ \eqref{eq_KDaggerT} are normally hyperbolic and have a well-posed Cauchy problem.
\end{enumerate}
\end{theo}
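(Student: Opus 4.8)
The plan is to derive all four assertions from the computations already assembled above, so that the only genuinely new labour is the bookkeeping of principal symbols and two integrations by parts. Throughout I write $B$ for the fibre automorphism $\widebar{\,\cdot\,}$ of \eqref{eq_tracereversal}, so that $\widebar{P}=P\circ B^{-1}$, $\widebar{K}=B\circ K$ and $\langle\cdot,\cdot\rangle_{\widebar{\V}}=\langle B^{-1}\cdot,\cdot\rangle_\V$ by \eqref{eq_def_redefinedq}. For the first item, symmetry and non-degeneracy of $\langle\cdot,\cdot\rangle_\W$ are immediate from \eqref{eq_formW}, since the fibre pairing is $g^{ab}$, which is symmetric and, $g$ being a Lorentzian metric, non-degenerate. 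For $\langle\cdot,\cdot\rangle_{\widebar{\V}}$, symmetry is equivalent to $B^{-1}$ (equivalently $B$) being self-adjoint on the fibres with respect to the pairing of \eqref{eq_bilinearV}, which one reads off directly from the manifestly $\Theta_1\leftrightarrow\Theta_2$-symmetric expression \eqref{eq_redefinedForm}; non-degeneracy then follows because $\langle\cdot,\cdot\rangle_\V$ is non-degenerate and $B$ is a fibre isomorphism under the standing assumptions $\alpha\neq0$, $\beta\neq0$.

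The second item reduces algebraically to the corresponding properties of $P$. One has $\widebar{P}\circ\widebar{K}=P\circ B^{-1}\circ B\circ K=P\circ K$, which vanishes on-shell by \eqref{eq_PK}. For formal self-adjointness, $\langle\widebar{P}\Theta_1,\Theta_2\rangle_{\widebar{\V}}=\langle B^{-1}P B^{-1}\Theta_1,\Theta_2\rangle_\V=\langle P B^{-1}\Theta_1,B^{-1}\Theta_2\rangle_\V=\langle B^{-1}\Theta_1,P B^{-1}\Theta_2\rangle_\V=\langle\Theta_1,\widebar{P}\Theta_2\rangle_{\widebar{\V}}$, where the second and fourth equalities use the self-adjointness of $B^{-1}$ established in the first item and the third uses the on-shell self-adjointness of $P$ with respect to $\langle\cdot,\cdot\rangle_\V$.

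For the third and fourth items the substantive work is to compute $\widebar{K}^\dagger$ and the three compositions, and to extract their principal parts. Integrating \eqref{eq_redefinedForm} against \eqref{eq_redefinedK} by parts yields $\widebar{K}^\dagger\Theta=-2\nabla^b\theta_{ba}+(\nabla_a\phi)\zeta$, and composing with $\widebar{K}$ gives \eqref{eq_KDaggerK}, whose principal part is $-\frac{\alpha}{2}\nabla^b\nabla_b\varsigma_a$. Since $\alpha=1-\xi\phi^2$ is nowhere zero, $R=\frac{\alpha}{2}\cdot\big(\frac{2}{\alpha}R\big)$ exhibits $R$ as a nowhere-vanishing scalar multiple of the normally hyperbolic operator $\frac{2}{\alpha}R$; as multiplication by a nowhere-vanishing function does not affect solvability, the Cauchy problem for $R$ is well-posed by \cite[Thm 3.2.11]{Bar:2007zz} and \cite[Corollary 5]{Ginoux}. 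With $T=\frac{2}{\alpha}\widebar{K}$ of \eqref{eq_T} one then has $Q=\widebar{K}^\dagger\circ T=\frac{2}{\alpha}R$, which is directly normally hyperbolic, cf. \eqref{eq_KDaggerT}, with the same conclusion.

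The \emph{main obstacle} is the verification that $\widetilde{P}=\widebar{P}+T\circ\widebar{K}^\dagger$ is normally hyperbolic, i.e. that adding $T\circ\widebar{K}^\dagger=\frac{2}{\alpha}\widebar{K}\circ\widebar{K}^\dagger$ exactly cancels every second-derivative term of $\widebar{P}$ except the diagonal $-\nabla_c\nabla^c$. The obstructing terms, namely $2\nabla^c\nabla_{(a}\theta_{b)c}-g_{ab}\nabla^c\nabla^d\theta_{cd}$ in $\widebar{P}_0$ and $4\frac{\xi\phi}{\alpha}\nabla^a\nabla^b\theta_{ab}$ in $\widebar{P}_3$, are all built from $\nabla^b\theta_{ba}$, which is precisely the principal part of $\widebar{K}^\dagger$; I would check at the level of principal symbols that $\sigma(T)\,\sigma(\widebar{K}^\dagger)$ reproduces these terms with opposite sign, so that $\widetilde{P}$ reduces to $\mathrm{diag}(-\nabla_c\nabla^c,-\nabla_c\nabla^c)+\text{lower orders}$. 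Normal hyperbolicity then yields the well-posed Cauchy problem by the same references. I expect this cancellation to be the only delicate point, since it is exactly what fixes the factor $\frac{2}{\alpha}$ in $T$; everything else is either a direct reduction to earlier results or a routine integration by parts.
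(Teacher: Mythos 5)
Your proposal is correct and follows essentially the same route as the paper, whose proof of this theorem is exactly the chain of computations preceding it: items 1 and 2 are reduced algebraically to the on-shell properties of $P$ (formal selfadjointness and $P\circ K=0$, cf.\ \eqref{eq_PK}) through the fibre automorphism, and items 3 and 4 rest on the explicit expressions \eqref{eq_KDaggerK}, \eqref{eq_PTilde}, \eqref{eq_KDaggerT} together with the cited well-posedness results. Your observation that the obstructing second-derivative terms are all built from $\nabla^b\theta_{ba}$, the principal part of $\widebar{K}^\dagger$, and that this cancellation is what fixes the factor $\frac{2}{\alpha}$ in $T$, is precisely the mechanism behind the displayed form \eqref{eq_PTilde}.

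One step needs correcting, although it does not damage the argument: the operator identity $Q=\widebar{K}^\dagger\circ T=\frac{2}{\alpha}R$ with $R=\widebar{K}^\dagger\circ\widebar{K}$ is false, because the first-order operator $\widebar{K}^\dagger$ does not commute with multiplication by the function $\frac{2}{\alpha}$. Explicitly, for $\Theta=(\theta_{ab},\zeta)^T$ one has
\begin{equation*}
\widebar{K}^\dagger\!\left(\tfrac{2}{\alpha}\,\Theta\right)_a=\tfrac{2}{\alpha}\left(\widebar{K}^\dagger\Theta\right)_a-2\left(\nabla^b\tfrac{2}{\alpha}\right)\theta_{ba}\,,
\end{equation*}
so that $Q$ and $\frac{2}{\alpha}R$ differ by first-order terms proportional to $\xi\phi\nabla\phi$; they coincide only for $\xi=0$ or constant $\phi$. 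This is visible in the paper's own formulas: \eqref{eq_KDaggerT} is not $\frac{2}{\alpha}$ times \eqref{eq_KDaggerK} --- the former contains the first-derivative terms $\frac{4\xi\phi}{\alpha}(\nabla_a\phi)\nabla^b\varsigma_b-\frac{2\xi\phi}{\alpha}(\nabla^b\phi)\nabla_a\varsigma_b$, whereas $\frac{2}{\alpha}$ times \eqref{eq_KDaggerK} would produce $\frac{2\xi\phi}{\alpha}(\nabla^b\phi)\nabla_b\varsigma_a$ as its only first-derivative term. Since the discrepancy is of lower order, $Q$ still has principal part $-\nabla^b\nabla_b$ and is therefore normally hyperbolic, which is what \eqref{eq_KDaggerT} exhibits directly; your conclusion thus stands once the false identity is weakened to ``equal up to terms of lower order''.
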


\subsection{Existence and uniqueness of solutions to the redefined equations of motion}
\label{sec_Cauchy}

Using the normal hyperbolicity of $\widetilde{P}$ as well as the properties of $T$ and $\widebar {K}$, we can analyse the Cauchy problem for ${\widebar {P}}$, i.e. existence of uniqueness of solutions to ${\widebar {P}}\Theta=0$ with suitable initial conditions. We only sketch this analysis as it is the straightforward generalisation of the linearised pure gravity case which has been discussed in Chapters 3.1 and 3.2 of \cite{Fewster:2012bj}. Moreover, we shall not need the results on the Cauchy problem for ${\widebar {P}}$ for the construction of the classical and quantum theory of the linearised Einstein-Klein-Gordon system, for this only the immediate properties of  $(\M,\V,\W,{\widebar {P}},\widebar {K})$ listed in Theorem \ref{prop_propLEKGS} are needed. Notwithstanding, a good understanding of the Cauchy problem for ${\widebar {P}}$ is e.g. necessary in order to convince oneself that the space of solutions to ${\widebar {P}}\Theta=0$ is ''sufficiently large'' and in particular non-empty.

To this avail, we pick an arbitrary but fixed Cauchy surface $\Sigma$ of $(M,g)$ with future-pointing unit normal vector field $n$. We define a map 
$$N:\Gamma(\V)|_\Sigma\to\Gamma(\W)|_\Sigma\qquad \Theta|_\Sigma=\left.\begin{pmatrix}\theta_{ab}\\\zeta\end{pmatrix}\right|_\Sigma\mapsto N(\Theta|_\Sigma)_a:=n^b (\theta_{ab})|_\Sigma.$$ One can check that $N({\widebar {P}}\Theta|_\Sigma)$ does not contain second derivatives with respect to $n$, thus $N({\widebar {P}}\Theta|_\Sigma)=0$ is a necessary constraint for initial data for ${\widebar {P}}\Theta=0$. We express this constraint in terms of a linear map\footnote{\eqref{eq_ConstraintP} defines a map on $\Gamma(\V)|_\Sigma\oplus \Gamma(\V)|_\Sigma$ because $\Gamma(\V)\ni\Theta\mapsto(\Theta|_\Sigma,\nabla_n\Theta|_\Sigma)\in\Gamma(\V)|_\Sigma\oplus \Gamma(\V)|_\Sigma$ is surjective and $N({\widebar {P}}\Theta|_\Sigma)$ does not contain second derivatives with respect to $n$.}
\begin{equation}\label{eq_ConstraintP}
C:\Gamma(\V)|_\Sigma\oplus \Gamma(\V)|_\Sigma\to \Gamma(\W)|_\Sigma\qquad C\left((\Theta|_\Sigma,\nabla_n\Theta|_\Sigma)\right):=N({\widebar {P}}\Theta|_\Sigma)
\end{equation}
and consider arbitrary initial data $(\Theta|_\Sigma, \nabla_n \Theta|_\Sigma)=(\Theta^0,\Theta^1)\in \Gamma(\V)|_\Sigma\oplus \Gamma(\V)|_\Sigma$ subject to $C((\Theta^0,\Theta^1))=0$. Using the method outlined in the proof of \cite[Theorem 3.1]{Fewster:2012bj}, this initial data can be smoothly extended to an auxiliary $\Theta^\prime\in \Gamma(\V)$. Owing to the normal hyperbolicity of $\widebar {K}^\dagger\circ \widebar {K}$, there exists a $\varsigma\in\Gamma(\W)$ s.t. $\Theta^{\prime\prime}:= \Theta^\prime + \widebar {K}\varsigma$ satisfies $\widebar {K}^\dagger \Theta^{\prime\prime}=0$. We now solve $\widetilde{P}\Theta^{\prime\prime\prime}=0$ with the Cauchy data $(\Theta^{\prime\prime}|_\Sigma, \nabla_n \Theta^{\prime\prime}|_\Sigma)$. Note that this Cauchy data still satisfies the necessary constraint $N({\widebar {P}}\Theta^{\prime\prime}|_\Sigma)=0$ because ${\widebar {P}}\circ \widebar {K}=0$, and that $\widebar {K}^\dagger \Theta^{\prime\prime}|_\Sigma=0$ is a further constraint on this Cauchy data because $\widebar {K}^\dagger$ contains derivatives of at most first order. We thus obtain $\Theta^{\prime\prime\prime}\in\Gamma(\V)$ which satisfies $\widetilde{P}\Theta^{\prime\prime\prime}=0$, $N({\widebar {P}}\Theta^{\prime\prime\prime}|_\Sigma)=0$ and $\widebar {K}^\dagger \Theta^{\prime\prime\prime}|_\Sigma=0$. One can now compute $0=N({\widebar {P}}\Theta^{\prime\prime\prime}|_\Sigma)=N((\widetilde{P}-T\circ \widebar {K})\Theta^{\prime\prime\prime}|_\Sigma)=-N(T\widebar {K}\Theta^{\prime\prime\prime})=-\nabla_n \widebar {K}^\dagger \Theta^{\prime\prime\prime}|_\Sigma$, where in the last steps one needs $\widebar {K}^\dagger \Theta^{\prime\prime\prime}|_\Sigma=0$ and the fact that $n_{a}n^{b}+\delta_{a}^{b}$ projects vectors fields on $M$ to their components tangential to $\Sigma$. To show that this implies $\widebar {K}^\dagger \Theta^{\prime\prime\prime}$ on the full spacetime, we 
compute  
$$\widebar {K}^\dagger \circ \widetilde{P} = \widebar {K}^\dagger \circ {\widebar {P}} +\widebar {K}^\dagger \circ T \circ \widebar {K}^\dagger  = \left({\widebar {P}}\circ \widebar {K}\right)^\dagger + Q \circ \widebar {K}^\dagger = Q \circ \widebar {K}^\dagger $$
with $Q$ as in \eqref{eq_KDaggerT}.
Thus, each solution of $\widetilde{P}\Theta^{\prime\prime\prime}=0$ satisfies $Q\widebar {K}^\dagger \Theta^{\prime\prime\prime}=0$ with $Q$ normally hyperbolic, and the unique solution of the latter equation with initial data $\nabla_n \widebar {K}^\dagger \Theta^{\prime\prime\prime}|_\Sigma=0$ and $\widebar {K}^\dagger \Theta^{\prime\prime\prime}|_\Sigma=0$ is $\widebar {K}^\dagger \Theta^{\prime\prime\prime}=0$. Hence, the previously constructed solution of $\widetilde{P}\Theta^{\prime\prime\prime}=0$ satisfies $\widebar {K}^\dagger \Theta^{\prime\prime\prime}=0$, and consequently ${\widebar {P}}\Theta^{\prime\prime\prime}=0$. Finally, setting $\Theta:= \Theta^{\prime\prime\prime}-\widebar {K}\varsigma$ we obtain a solution of ${\widebar {P}}\Theta=0$ with the original Cauchy data $(\Theta|_\Sigma, \nabla_n \Theta|_\Sigma)=(\Theta^0,\Theta^1)$, which was arbitrary barring the constraint $C((\Theta^0,\Theta^1))=N({\widebar {P}}\Theta|_\Sigma)=0$.

To analyse uniqueness, we assume that ${\widebar {P}}\Theta={\widebar {P}}\Theta^\prime=0$ with $\Theta$ and $\Theta^\prime$ having the same Cauchy data on an arbitrary but fixed Cauchy surface $\Sigma$ with future-pointing unit normal vector field $n$. Thus $\Theta^{\prime\prime}:=\Theta^{\prime}-\Theta^{}$ solves ${\widebar {P}}\Theta^{\prime\prime}=0$ with vanishing Cauchy data. Using once more the hyperbolicity of $\widebar {K}^\dagger \circ \widebar {K}$, we can write $\Theta^{\prime\prime}=\Theta^{\prime\prime\prime}+\widebar {K}\varsigma$, where $\varsigma$ solves $\widebar {K}^\dagger \widebar {K}\varsigma=\widebar {K}^\dagger \Theta^{\prime\prime}$ with vanishing Cauchy data on $\Sigma$, and $\widebar {K}^\dagger \Theta^{\prime\prime\prime}=0$. Using the properties of $\varsigma$ and its Cauchy data, as well as the vanishing of the Cauchy data of $\Theta^{\prime\prime}$, one can compute that $\Theta^{\prime\prime\prime}$ has vanishing Cauchy data as well. As $\Theta^{\prime\prime\prime}$ solves the normally hyperbolic equation $\widetilde{P}\Theta^{\prime\prime\prime}=0$, $\Theta^{\prime\prime\prime}$ must vanish identically because zero is the only solution of a normally hyperbolic equation with vanishing Cauchy data. This in turn implies that a solution to ${\widebar {P}}\Theta=0$ with given Cauchy data is unique up to gauge transformations. Summing up, we have found the following.

\begin{theo}
The properties of $(\M,\V,\W,{\widebar {P}},\widebar {K})$ proved in Theorem \ref{prop_propLEKGS} imply the following for the Cauchy problem of ${\widebar {P}}$. Let $\Sigma$ be an arbitrary but fixed Cauchy surface of $(M,g)$ with future pointing unit normal vector field $n$.
\begin{enumerate}
\item For every $(\Theta^0,\Theta^1)\in \Gamma(\V)|_\Sigma\oplus \Gamma(\V)|_\Sigma$ subject to the constraint $C((\Theta^0,\Theta^1))=0$ with $C$ as in \eqref{eq_ConstraintP} there exists $\Theta\in \Gamma(\V)$ with ${\widebar {P}}\Theta=0$, $(\Theta|_\Sigma,\nabla_n\Theta|_\Sigma)=(\Theta^0,\Theta^1)$.
\item If ${\widebar {P}}\Theta={\widebar {P}}\Theta^\prime=0$ and $(\Theta|_\Sigma,\nabla_n\Theta|_\Sigma)=(\Theta^\prime|_\Sigma,\nabla_n\Theta^\prime|_\Sigma)$ then there exists $\varsigma\in\Gamma(\W)$ s.t. $\Theta^\prime-\Theta=\widebar {K}\varsigma$.
\end{enumerate}
\end{theo}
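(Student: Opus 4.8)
The plan is to reduce the non-hyperbolic equation ${\widebar{P}}\Theta=0$ to the normally hyperbolic equation $\widetilde{P}\Theta=0$ by exploiting the gauge freedom together with the generalised de Donder gauge $\widebar{K}^\dagger\Theta=0$, exactly as in the linearised pure gravity treatment of \cite[Chapters 3.1--3.2]{Fewster:2012bj}. All structural input I need is supplied by Theorem \ref{prop_propLEKGS}: the selfadjointness relation ${\widebar{P}}\circ\widebar{K}=0$, the well-posedness of the Cauchy problem for $R=\widebar{K}^\dagger\circ\widebar{K}$, and the normal hyperbolicity of both $\widetilde{P}={\widebar{P}}+T\circ\widebar{K}^\dagger$ and $Q=\widebar{K}^\dagger\circ T$.

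For existence, I would first extend the constrained initial data $(\Theta^0,\Theta^1)$ to an auxiliary section $\Theta'\in\Gamma(\V)$ realising these values on $\Sigma$, using surjectivity of the jet map as in the proof of \cite[Theorem 3.1]{Fewster:2012bj}. Using the well-posedness of $R$, I solve $R\varsigma=-\widebar{K}^\dagger\Theta'$ to obtain $\Theta''=\Theta'+\widebar{K}\varsigma$ lying in the de Donder gauge, $\widebar{K}^\dagger\Theta''=0$. I then solve the normally hyperbolic problem $\widetilde{P}\Theta'''=0$ with the Cauchy data inherited from $\Theta''$. Since $\widebar{K}^\dagger$ is first order, its value on $\Sigma$ depends only on the Cauchy data, so $\widebar{K}^\dagger\Theta'''|_\Sigma=0$; moreover the constraint $N({\widebar{P}}\Theta'''|_\Sigma)=0$ holds because $C((\Theta^0,\Theta^1))=0$ and ${\widebar{P}}\circ\widebar{K}=0$. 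Finally I set $\Theta=\Theta'''-\widebar{K}\varsigma$, which carries the prescribed Cauchy data and will solve ${\widebar{P}}\Theta=0$ once the gauge condition is shown to propagate off $\Sigma$.

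The crux, and the step I expect to be the main obstacle, is proving that the de Donder gauge holds on all of $M$ and not merely on $\Sigma$. The key identity is $\widebar{K}^\dagger\circ\widetilde{P}=({\widebar{P}}\circ\widebar{K})^\dagger+Q\circ\widebar{K}^\dagger=Q\circ\widebar{K}^\dagger$, so every solution of $\widetilde{P}\Theta'''=0$ satisfies the normally hyperbolic equation $Q(\widebar{K}^\dagger\Theta''')=0$. To conclude $\widebar{K}^\dagger\Theta'''\equiv0$ by uniqueness for $Q$, I must show the Cauchy data of $\widebar{K}^\dagger\Theta'''$ vanishes; its value on $\Sigma$ is zero by construction, and for the normal derivative I would compute $0=N({\widebar{P}}\Theta'''|_\Sigma)=N((\widetilde{P}-T\circ\widebar{K}^\dagger)\Theta'''|_\Sigma)=-N(T\,\widebar{K}^\dagger\Theta''')$, then use $T=\frac{2}{\alpha}\widebar{K}$ and the projector $n_a n^b+\delta_a^b$ onto directions tangent to $\Sigma$ to identify the right-hand side with $-\nabla_n\widebar{K}^\dagger\Theta'''|_\Sigma$. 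This is the one genuinely delicate computation: one must check that, on the surface $\widebar{K}^\dagger\Theta'''|_\Sigma=0$, the only surviving piece of $N(T\,\widebar{K}^\dagger\Theta''')$ is precisely the normal-derivative term. Granting this, $\widebar{K}^\dagger\Theta'''=0$ everywhere, whence ${\widebar{P}}\Theta'''=\widetilde{P}\Theta'''-T\,\widebar{K}^\dagger\Theta'''=0$ and existence follows.

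For uniqueness up to gauge, I would take two solutions with coinciding Cauchy data and set $\Theta''=\Theta'-\Theta$, which solves ${\widebar{P}}\Theta''=0$ with vanishing data. Solving $R\varsigma=\widebar{K}^\dagger\Theta''$ with vanishing Cauchy data on $\Sigma$ and writing $\Theta''=\Theta'''+\widebar{K}\varsigma$ with $\widebar{K}^\dagger\Theta'''=0$, a short computation using the vanishing data of $\Theta''$ and of $\varsigma$ shows that $\Theta'''$ also has vanishing Cauchy data. Since $\widetilde{P}\Theta'''=0$ is normally hyperbolic, $\Theta'''\equiv0$, so $\Theta''=\widebar{K}\varsigma$ is pure gauge, which is precisely the assertion that solutions with identical Cauchy data differ by a gauge transformation $\widebar{K}\varsigma$.
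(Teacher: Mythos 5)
Your proposal is correct and takes essentially the same route as the paper's own proof: extend the constrained data, pass to the generalised de Donder gauge by solving $\widebar{K}^\dagger\widebar{K}\varsigma=-\widebar{K}^\dagger\Theta^\prime$, solve the normally hyperbolic problem $\widetilde{P}\Theta^{\prime\prime\prime}=0$, propagate the gauge condition via $\widebar{K}^\dagger\circ\widetilde{P}=Q\circ\widebar{K}^\dagger$ together with the vanishing of the Cauchy data of $\widebar{K}^\dagger\Theta^{\prime\prime\prime}$, and argue uniqueness by the identical gauge-splitting of the difference of two solutions. Your identification of the delicate step --- showing $N(T\,\widebar{K}^\dagger\Theta^{\prime\prime\prime})=\nabla_n\widebar{K}^\dagger\Theta^{\prime\prime\prime}|_\Sigma$ on the zero set of $\widebar{K}^\dagger\Theta^{\prime\prime\prime}|_\Sigma$ --- matches the paper exactly (where you write $\widetilde{P}-T\circ\widebar{K}^\dagger$ you have in fact corrected a small typo in the paper's displayed computation).
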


 \subsection{Quantization}
\label{sec_quantization}

In \cite{HS} a general procedure how to quantize arbitrary linear gauge theories on curved spacetimes has been developed by generalising the quantization of the Maxwell field in \cite{DimockVector} and the quantization of linearised gravity in \cite{Fewster:2012bj}. As a matter of fact it is demonstrated in  \cite{HS} how to quantize any system $(\M,\V,\W,{\widebar {P}},\widebar {K})$ which satisfies the conditions $1-4$ in Theorem \ref{prop_propLEKGS} in a consistent and gauge-invariant manner. Thus, we can just apply the results of \cite{HS} in order to obtain a quantum theory for the linearised Einstein-Klein-Gordon system on arbitrary on-shell backgrounds (satisfying the conditions $\alpha=1-\xi\phi^2\neq 0$ and $\beta=1+\left(6\xi-1\right)\xi\phi^2\neq 0$ on all $M$). In this section we review this construction and add a few more details as a preparation for looking at the special case of perturbations in Inflation.

\subsubsection{Motivation: a toy model}
\label{sec_toymodel}

The quantization construction performed in \cite{HS} works without introducing auxiliary fields as in the BRST/BV formalism, see e.g. \cite{Hollands:2007zg,Fredenhagen:2011mq}, but is restricted to linear field theories, for which it gives the same results as the BRST/BV formalism. As it is very much in the spirit of the algebraic approach to QFT, and some of the readers might not be familiar with this framework, we would like to briefly sketch and motivate the construction by means of a very simple gauge theory model.

We consider as a gauge field $\Phi=(\phi_1,\phi_2)^T$ a tuple of two scalar fields on a spacetime $(M,g)$ satisfying the equation of motion
$$P\Phi =\begin{pmatrix}-\nabla_a\nabla^a & \phantom{-}\nabla_a\nabla^a\\\phantom{-}\nabla_a\nabla^a & -\nabla_a\nabla^a\end{pmatrix}\begin{pmatrix}\phi_1\\\phi_2\end{pmatrix}=0\,.$$
The gauge transformations are $\Phi\mapsto \Phi + K \varsigma:=\Phi+(\varsigma,\varsigma)^T$ and indeed $P\circ K=0$ holds which is equivalent to the gauge-invariance of the action $S(\Phi):=\frac12\langle \Phi,P\Phi\rangle$ with $\langle \Phi,\Phi^\prime\rangle:=\int_M \vol( \phi^{\phantom{\prime}}_1 \phi^\prime_1+\phi^{\phantom{\prime}}_2 \phi^\prime_2)$. Clearly, the linear combination $\psi:=\phi_1-\phi_2$ is gauge-invariant and satisfies $-\nabla_a\nabla^a\psi=0$, and it would be rather natural to quantize $\Phi$ by directly quantizing $\psi$ as a massless scalar field. This would be much in the spirit of the usual quantization of perturbations in Inflation, where gauge-invariant linear combinations of the gauge field components, e.g. the Bardeen-Potentials or the Mukhanov-Sasaki variable, are taken as the fundamental fields for quantization. However, in more general cases such as the linearised Einstein-Klein-Gordon system on arbitrary backgrounds it may be rather difficult to directly identify a gauge-invariant fundamental field like $\psi$ whose quantum theory is equivalent to the quantum theory of the original gauge field. Notwithstanding, an indirect characterisation of such a gauge-invariant linear combination of gauge-field components, which can serve as a fundamental field for quantization, is still possible. In the toy model under consideration we consider a tuple  $f=(f_1,f_2)$ of test functions $f_i$, i.e. infinitely often differentiable functions which vanish outside of a compact set in $M$. We ask that $K^\dagger f:=f_1+f_2=0$, where $K^\dagger$ is the adjoint of the gauge transformation operator $K$ i.e. $\int_M\vol \varsigma K^\dagger f=\langle K\varsigma, f\rangle$. Clearly, any $f$ satisfying this condition is of the form $f=(h,-h)^T$ for a test function $h$. We now observe that the pairing between a gauge field configuration $\Phi$ and such an $f$ is gauge-invariant, i.e. $\langle \Phi + K\sigma, f\rangle = \langle \Phi, f\rangle + \int_M\vol \varsigma K^\dagger f=\langle \Phi, f\rangle$. Thus we can consider the ``smeared field'' $\langle \Phi, f\rangle$, with $f=(h,-h)^T$ and arbitrary $h$, as a gauge-invariant linear combination of gauge-field components which is suitable for playing the role of a fundamental field for quantization. We can compute $\langle \Phi, f\rangle=\int_M\vol \psi h$, and observe that, up to the ``smearing'' with $h$, this indirect choice of gauge-invariant fundamental field is exactly the one discussed in the beginning. If one chooses $h$ to be the delta distribution $\delta(x,y)$ rather than a test function, one even finds $\langle \Phi, f\rangle=\psi(x)$, but using the more regular test functions is advantageous for mathematical consistency and also more realistic in physical terms, since the extended support of $h$ in spacetime reflects the finite spatial and temporal dimensions of a measurement and thus $\langle \Phi, f\rangle$ can be interpreted as a weighted, gauge-invariant measurement of the field $\Phi$. Moreover, as already anticipated, in general gauge theories with more complicated gauge transformation operators $K$ it might be extremely difficult to classify all solutions of $K^\dagger f=0$, which would be equivalent to a direct characterisation of one or several fundamental gauge-invariant fields such as $\psi$, whereas working implicitly with the condition $K^\dagger f=0$ is always possible.

\subsubsection{The presymplectic space of classical linear observables}

After discussing the toy model gauge field, we turn back to the quantization of the linearised Einstein-Klein-Gordon system specified by $(\M,\V,\W,{\widebar {P}},\widebar {K})$ as in Theorem \ref{prop_propLEKGS}. In analogy to the toy model example, we consider test sections, i.e. smooth and compactly supported sections $h=(k_{ab},f)^T\in \Gamma_0(\V)$ which satisfy the condition
\beq \widebar {K}^\dagger h=-2\nabla^b k_{ba} + (\nabla_a \phi)f=0\,.\eeq
For any configuration $\Theta\in\Gamma(\V)$ of the gauge field and any such $h$, the pairing $\langle \Theta,h\rangle_{\widebar {\V}} $ is gauge invariant, i.e.
$$\langle \Theta+\widebar {K}\varsigma,h\rangle_{\widebar {\V}} =\langle \Theta,h\rangle_{\widebar {\V}} +\langle \varsigma,\widebar {K}^\dagger h\rangle_\W=\langle \Theta,h\rangle_{\widebar {\V}} \qquad \forall \varsigma\in\Gamma(\W)\,.$$
Thus, we consider the ``smeared field'' $\langle \Theta,h\rangle_{\widebar {\V}} $ to be the fundamental gauge-invariant dynamical field.
As we are interested in configurations $\Theta$ which satisfy ${\widebar {P}}\Theta=0$, we observe that all $h$ which are of the form $h={\widebar {P}}h^\prime$, $h^\prime \in\Gamma_0(\V)$, will give zero once paired with a solution $\Theta$ and thus correspond to the trivial observable. This motivates the following definitions of configuration and test section spaces, cf. Section \ref{sec_notations} for the definition of $\Gamma_\sc(\W)$ and $\Gamma_\sc(\V)$.
\beq\label{eq_defSol}\widebar {\Sol}:= \{\Theta\in \Gamma(\V)\,|\, {\widebar {P}}\Theta=0\}\qquad \widebar {\Sol}_\text{sc}:= \widebar {\Sol}\cap \Gamma_\sc(\V)\eeq
\beq\label{eq_defcalG}\widebar {\G}:=\widebar {K}\left[\Gamma(\W)\right]\qquad {\widebar {\G}}_\text{sc}:= {\widebar {\G}}\cap \Gamma_\sc(\V)\qquad \widebar {\G}_{\sc,0}:=\widebar {K}[\Gamma_\sc(\W)] \eeq
\beq\label{eq_defcalE} \Ker_0(\widebar {K}^\dagger):=\{h\in\Gamma_0(\V)\,|\,\widebar {K}^\dagger h=0\}\qquad \widebar {\E}:= \Ker_0(\widebar {K}^\dagger)/{\widebar {P}}[\Gamma_0(\V)]\eeq
We have $\widebar {\G}_{\sc,0}\subset{\widebar {\G}}_\sc\subset\widebar {\Sol}_\sc$ and similarly ${\widebar {\G}}\subset\widebar {\Sol}$, because ${\widebar {P}}\circ \widebar {K}=0$. Thus we can consider the quotients $\widebar {\Sol}/{\widebar {\G}}$, $\widebar {\Sol}_\sc/{\widebar {\G}}_\sc$ and $\widebar {\Sol}_\sc/\widebar {\G}_{\sc,0}$. Elements of $\widebar {\Sol}/{\widebar {\G}}$ are gauge-equivalence classes of solutions and $\widebar {\Sol}/{\widebar {\G}}$ can be viewed as the space of (pure) states in the classical field theory of the linearised Einstein-Klein-Gordon system. The other two quotients are related to observables as we will discuss later. Observe that $\widebar {\E}$ is well defined due to $\widebar {K}^\dagger \circ {\widebar {P}}=0$ (which follows from the formal selfadjointness of ${\widebar {P}}$ and ${\widebar {P}}\circ \widebar {K}=0$). 

Our previous discussion can now be re-phrased by saying that the non-degenerate bi-linear form $\langle \cdot, \cdot\rangle_{\widebar {\V}} $ induces a pairing between $\widebar {\Sol}/{\widebar {\G}}$ and $\widebar {\E}$, because for any $[\Theta]\in \widebar {\Sol}/{\widebar {\G}}$ and any $[h]\in\widebar {\E}$, $\langle \Theta,h\rangle_{\widebar {\V}} $ is independent of the representatives of the two equivalence classes. We will denote this pairing by the same symbol $\langle \cdot, \cdot\rangle_{\widebar {\V}} $. Via this pairing, elements of $\widebar {\E}$ can be considered as labels for the most simple gauge-invariant observables $$\widebar {\Sol}/{\widebar {\G}}\ni [\Theta]\mapsto\langle \Theta,h\rangle_{\widebar {\V}} \qquad [h]\in\widebar {\E}$$  on the space of classical states. After quantization, these observables can be viewed as ``smeared quantum fields''.

 A natural question is whether $\widebar {\E}$ is separating on $\widebar {\Sol}/{\widebar {\G}}$, i.e. whether $$\langle \Theta,h\rangle_{\widebar {\V}} =0\quad\forall\; [h]\in\widebar {\E}\qquad\Rightarrow\qquad[\Theta]=[0]\,.$$
Physically this would imply that the set of classical observables labelled by $\widebar {\E}$ is large enough for distinguishing all classical pure states. While this question can be answered positively in pure electromagnetism on arbitrary curved spacetimes due to the rather rich algebraic structure of the relevant differential operators, cf.\ \cite{Benini:2013tra,Benini:2013ita}, we expect that an analysis of the separability for the gauge theory considered here would be difficult without any restrictions on the background spacetime $(M,g)$ as the known strategies to tackle this issue involve the analysis of certain elliptic operators on Cauchy surfaces of $(M,g)$ -- see e.g. \cite{Fewster:2012bj} and \cite{HS} for the case of linearised pure gravity and linearised pure supergravity -- whose properties in turn strongly depend on $(M,g)$. While we do not address the question of separability in full generality, we will demonstrate in the next section that, at least on cosmological backgrounds $(M,g,\phi)$, a subset of $\widebar {\E}$ is indeed separating on a subset of $\widebar {\Sol}/{\widebar {\G}}$.

The next step in the quantization procedure is to endow the linear observables labelled by $\widebar {\E}$ with a Poisson bracket, which upon quantization defines the canonical commutation relations of the smeared quantum fields. To this avail we consider the causal propagator \beq\label{eq_def_G}G^{\widetilde{P}}:\Gamma_0(\V)\to\Gamma(\V)\,,\qquad G^{\widetilde{P}}:=G_+^{\widetilde{P}}-G_-^{\widetilde{P}}\eeq
 of the gauge-fixed equation of motion operator $\widetilde{P}={\widebar {P}}+T\circ \widebar {K}^\dagger$ (cf. Theorem \ref{prop_propLEKGS}) built from the retarded/advanced Green's operators 
$$G_\pm^{\widetilde{P}}:\Gamma_0(\V)\to\Gamma(\V),\qquad \widetilde{P}G_\pm^{\widetilde{P}}=\id_{\Gamma_0(\V)}\,,\qquad \supp\, G_\pm^{\widetilde{P}} h\subset J^\pm(\supp \,h)\quad\forall \,h\in\Gamma_0(\V)$$ 
  of $\widetilde{P}$ which exist and are unique because $\widetilde{P}$ is normally hyperbolic \cite{Bar:2007zz}. The normal hyperbolicity of $\widetilde{P}$ implies that the kernel of $G^{\widetilde{P}}$ is precisely the image of $\widetilde{P}$ restricted to $\Gamma_0(\V)$ and that the image of $G^{\widetilde{P}}$ are precisely the space-like compact solutions of $\widetilde{P}\Theta=0$. Consequently, $G^{\widetilde{P}}$ descends to a bijective map between $\Gamma_0(\V)/\widetilde{P}[\Gamma_0(\V)]$ and the space-like compact solutions of $\widetilde{P}\Theta=0$. 
In fact, one can show that similar relations hold between $G^{\widetilde{P}}$ and ${\widebar {P}}$ rather than $\widetilde{P}$, such that, in a certain sense, $G^{\widetilde{P}}$ can be considered as an effective causal propagator for ${\widebar {P}}$ itself.

\begin{theo}\label{prop_prop_G} The causal propagator $G^{\widetilde{P}}$ \eqref{eq_def_G} of $\widetilde{P}={\widebar {P}}+T\circ \widebar {K}^\dagger$ \eqref{eq_PTilde} satisfies the following relations.
\begin{enumerate}
\item $h\in\Ker_0(\widebar {K}^\dagger)$ and $G^{\widetilde{P}}h\in\widebar {\G}_{\sc,0}$ if and only if $h\in{\widebar {P}}[\Gamma_0(\V)]$, with $\widebar {\G}_{\sc,0}$ defined in \eqref{eq_defcalG}.
\item Every $\Theta\in \widebar {\Sol}_\sc$ 
can be split as $h=h_1+h_2$ with $h_1\in G^{\widetilde{P}}\left[\Ker_0(\widebar {K}^\dagger)\right]$ and $h_2\in \widebar {\G}_{\sc,0}$, with $\widebar {\Sol}_\sc$ defined in \eqref{eq_defSol}.
\item $G^{\widetilde{P}}$ descends to a bijective map $G^{\widetilde{P}}:\Ker_0(\widebar {K}^\dagger)/{\widebar {P}}[\Gamma_0(\V)]=\widebar {\E}\to \widebar {\Sol}_\sc/\widebar {\G}_{\sc,0}$.
\item $G^{\widetilde{P}}$ is formally skew-adjoint w.r.t. $\langle\cdot,\cdot\rangle_{\widebar {\V}} $ on $\Ker_0(\widebar {K}^\dagger)$, i.e. $\langle h_1,G^{\widetilde{P}}h_2\rangle_{\widebar {\V}} =-\langle G^{\widetilde{P}}h_1,h_2\rangle_{\widebar {\V}} $ for all $h_1,h_2\in\Ker_0(\widebar {K}^\dagger)$.
\item Let $T^\prime:\Gamma(\W)\to\Gamma(\V)$ be any differential operator satisfying property 4 in Theorem \ref{prop_propLEKGS} and let $G^{\widetilde{P}^\prime}$ be the causal propagator of $\widetilde{P}^\prime:={\widebar {P}}+T^\prime\circ \widebar {K}^\dagger$. Then $G^{\widetilde{P}^\prime}$ satisfies the properties 1-4 above.
\end{enumerate}
\end{theo}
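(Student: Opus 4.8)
The plan is to establish the five claims in sequence, exploiting the normal hyperbolicity of $\widetilde{P}$ together with the two identities ${\widebar {P}}\circ\widebar {K}=0$ and $\widebar {K}^\dagger\circ\widetilde{P}=Q\circ\widebar {K}^\dagger$ (derived in Section~\ref{sec_Cauchy}), plus the dual relation $\widetilde{P}\circ T = T\circ Q$ which follows from $\widetilde{P}\circ\widebar {K}={\widebar {P}}\circ\widebar {K}+T\circ\widebar {K}^\dagger\circ\widebar {K}=T\circ R$ after comparison with the definitions. The central observation I would carry throughout is that on $\Ker_0(\widebar {K}^\dagger)$ one has $\widetilde{P}h={\widebar {P}}h$, so $G^{\widetilde{P}}$ acts like an effective Green's operator for ${\widebar {P}}$ on gauge-admissible test sections.

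For property~1, I would first note that if $h={\widebar {P}}h^\prime$ with $h^\prime\in\Gamma_0(\V)$, then $\widebar {K}^\dagger h=\widebar {K}^\dagger {\widebar {P}}h^\prime=0$ (since $\widebar {K}^\dagger\circ{\widebar {P}}=({\widebar {P}}\circ\widebar {K})^\dagger=0$), and $G^{\widetilde{P}}h=G^{\widetilde{P}}{\widebar {P}}h^\prime=G^{\widetilde{P}}(\widetilde{P}-T\widebar {K}^\dagger)h^\prime=-G^{\widetilde{P}}T\widebar {K}^\dagger h^\prime=-G^{\widetilde{P}}Th^{\prime\prime}$ for a suitable $h^{\prime\prime}$; using $\widetilde{P}T=TQ$ one identifies $G^{\widetilde{P}}T=TG^Q$ with $G^Q$ the causal propagator of $Q$, so $G^{\widetilde{P}}h=-TG^Q\widebar {K}^\dagger h^\prime=\widebar {K}(\cdots)\in\widebar {\G}_{\sc,0}$ after absorbing the factor from $T=\frac2\alpha\widebar {K}$. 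For the converse I would assume $h\in\Ker_0(\widebar {K}^\dagger)$ and $G^{\widetilde{P}}h=\widebar {K}\varsigma$ with $\varsigma\in\Gamma_\sc(\W)$; applying $\widetilde{P}$ gives $0=\widetilde{P}G^{\widetilde{P}}h-h$ on the image complement, and the condition $\widebar {K}^\dagger h=0$ collapses $\widetilde{P}$ to ${\widebar {P}}$ so that $h={\widebar {P}}(\text{something in }\Gamma_0)$ follows from the kernel characterisation of $G^{\widetilde{P}}$.

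Properties~2 and~3 I would treat together, since~3 is essentially the well-definedness and bijectivity statement packaging~1 and~2. For~2, given $\Theta\in\widebar {\Sol}_\sc$, the bijectivity of $G^{\widetilde{P}}$ between $\Gamma_0(\V)/\widetilde{P}[\Gamma_0(\V)]$ and space-like compact solutions of $\widetilde{P}\Theta=0$ (quoted before the theorem) lets me write $\Theta=G^{\widetilde{P}}\tilde h$ for some $\tilde h\in\Gamma_0(\V)$; I then gauge-fix $\tilde h$ by subtracting an element of $T[\Gamma_0(\W)]$ to arrange $\widebar {K}^\dagger$-annihilation, landing in $\Ker_0(\widebar {K}^\dagger)$, and collect the remainder into $\widebar {\G}_{\sc,0}$ using property~1. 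This is the step I expect to be the main obstacle: ensuring the split respects compact versus space-like compact supports simultaneously and that the gauge-fixing of the test section is itself compactly supported. It hinges on solving $Q\varsigma=\widebar {K}^\dagger\tilde h$ (with $Q$ normally hyperbolic) to adjust $\tilde h$, and one must check the resulting supports are controlled; the well-posed Cauchy problem for $Q$ and the support properties of $G^Q$ are exactly what make this work. With~1 and~2 in hand, injectivity in~3 is immediate from~1 and surjectivity from~2.

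Property~4 is a short formal computation: for $h_1,h_2\in\Ker_0(\widebar {K}^\dagger)$ one has $\langle h_1,G^{\widetilde{P}}h_2\rangle_{\widebar {\V}} =\langle{\widebar {P}}(G^{\widetilde{P}}h_1'),G^{\widetilde{P}}h_2\rangle_{\widebar {\V}} $-type manipulations reduce, via formal selfadjointness of ${\widebar {P}}$ (Theorem~\ref{prop_propLEKGS}, property~2) and skew-adjointness of the causal propagator of a formally selfadjoint normally hyperbolic operator, to the standard relation $G_+^{\widetilde{P}\dagger}=G_-^{\widetilde{P}}$; the point is that $\widetilde{P}$ fails to be selfadjoint only through the $T\circ\widebar {K}^\dagger$ term, which is annihilated on $\Ker_0(\widebar {K}^\dagger)$, so on this subspace the effective operator is the selfadjoint ${\widebar {P}}$ and its propagator is skew. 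Finally, property~5 requires observing that none of the arguments for~1--4 used any feature of $T$ beyond property~4 of Theorem~\ref{prop_propLEKGS}, namely that $\widetilde{P}^\prime={\widebar {P}}+T^\prime\circ\widebar {K}^\dagger$ and $Q^\prime=\widebar {K}^\dagger\circ T^\prime$ are normally hyperbolic with well-posed Cauchy problems; since every identity invoked ($\widetilde{P}^\prime h={\widebar {P}}h$ on $\Ker_0(\widebar {K}^\dagger)$, $\widebar {K}^\dagger\circ\widetilde{P}^\prime=Q^\prime\circ\widebar {K}^\dagger$, and $\widetilde{P}^\prime\circ T^\prime=T^\prime\circ Q^\prime$) holds verbatim for the primed objects, the proofs transcribe without change.
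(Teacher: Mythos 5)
Your proposal founders on two points, both of which trace back to conflating the operator pairs $(\widebar{K},R)$ and $(T,Q)$. First, the intertwining relation you build everything on, $\widetilde{P}\circ T = T\circ Q$ and its consequence $G^{\widetilde{P}}\circ T = T\circ G^Q$, is false. Your computation $\widetilde{P}\circ\widebar{K} = {\widebar{P}}\circ\widebar{K} + T\circ\widebar{K}^\dagger\circ\widebar{K} = T\circ R$ is correct, but it does not ``compare'' to $\widetilde{P}\circ T = T\circ Q$: that would additionally require ${\widebar{P}}\circ T = 0$, and ${\widebar{P}}\circ T = {\widebar{P}}\circ\tfrac{2}{\alpha}\circ\widebar{K} = [{\widebar{P}},\tfrac{2}{\alpha}]\circ\widebar{K}$ is a nonvanishing first-order operator whenever $\xi\neq 0$, because the differential operator ${\widebar{P}}$ does not commute with multiplication by the non-constant function $2/\alpha$ (only for $\xi=0$, where $\alpha\equiv 1$, does your identity hold). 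The correct Green's-operator relation --- the one used in the cited source \cite[Thm.~3.12]{HS} and spelled out in the paper's own Appendix, Theorem \ref{prop_surjinj} --- is $G^{\widetilde{P}}\circ T = \widebar{K}\circ G^R$. This is not cosmetic: in your forward direction of property 1 you end with $G^{\widetilde{P}}h = -T G^Q\widebar{K}^\dagger h' = -\tfrac{2}{\alpha}\widebar{K}G^Q\widebar{K}^\dagger h'$, and multiplying an element of $\widebar{K}[\Gamma_\sc(\W)]$ by $2/\alpha$ does not leave it inside $\widebar{K}[\Gamma_\sc(\W)]=\widebar{\G}_{\sc,0}$; with the correct relation one gets $-\widebar{K}G^R\widebar{K}^\dagger h'\in\widebar{\G}_{\sc,0}$ as required.

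Second, property 2 fails at its first step: an element $\Theta\in\widebar{\Sol}_\sc$ satisfies ${\widebar{P}}\Theta=0$, not $\widetilde{P}\Theta=0$ (rather $\widetilde{P}\Theta = T\widebar{K}^\dagger\Theta$, which has no reason to vanish), so $\Theta$ is not in the image of $G^{\widetilde{P}}$ and you cannot write $\Theta = G^{\widetilde{P}}\tilde{h}$. The missing idea --- and the reason property 3 of Theorem \ref{prop_propLEKGS} (Cauchy-hyperbolicity of $R=\widebar{K}^\dagger\circ\widebar{K}$, an operator your proposal never invokes) is an axiom at all --- is to gauge-fix the \emph{solution} first: solve $R\varsigma = -\widebar{K}^\dagger\Theta$ with $\varsigma\in\Gamma_\sc(\W)$, so that $\Theta'':=\Theta+\widebar{K}\varsigma$ obeys $\widebar{K}^\dagger\Theta''=0$ and hence $\widetilde{P}\Theta''=0$; only then does a representation $\Theta''=G^{\widetilde{P}}\tilde{h}$ exist. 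Your subsequent adjustment is moreover circular: a compactly supported solution of $Qf=\widebar{K}^\dagger\tilde{h}$ exists iff $\widebar{K}^\dagger\tilde{h}\in Q[\Gamma_0(\W)]$, i.e.\ iff $G^Q\widebar{K}^\dagger\tilde{h}=\widebar{K}^\dagger G^{\widetilde{P}}\tilde{h}=0$, which is exactly the gauge condition you have not arranged. The converse half of property 1 suffers similarly: from $G^{\widetilde{P}}h=\widebar{K}\varsigma$ one must deduce $R\varsigma = G^Q\widebar{K}^\dagger h = 0$, hence $\varsigma=G^Rf$ with $f\in\Gamma_0(\W)$, hence $G^{\widetilde{P}}(h-Tf)=0$, hence $h-Tf=\widetilde{P}h''$ with $h''\in\Gamma_0(\V)$, and finally $f=-\widebar{K}^\dagger h''$ by Cauchy-hyperbolicity of $Q$, giving $h={\widebar{P}}h''$; your appeal to ``the kernel characterisation of $G^{\widetilde{P}}$'' skips all of this. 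Your sketch of property 4 is essentially sound, and property 5 is fine once the identities are repaired, but as written properties 1--3 --- the heart of the theorem --- do not go through.
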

These results have been proved in \cite[Theorem 3.12+Theorem 5.2]{HS} using the abstract properties 1-4 of $(\M,\V,\W,{\widebar {P}},\widebar {K})$ proven in Theorem \ref{prop_propLEKGS}. Thereby it is essential that for every $h\in\Ker_0(\widebar {K}^\dagger)$, $\widetilde{P}h=\widetilde{P}^\prime h={\widebar {P}}h$. Property 5 above indicates that $G^{\widetilde{P}}$ restricted to $\Ker_0(\widebar {K}^\dagger)$ is independent of the particular form of the gauge fixing operator $T$ and enforces the point of view that $G^{\widetilde{P}}$ restricted to $\Ker_0(\widebar {K}^\dagger)$ is effectively a causal propagator for ${\widebar {P}}$.

By means of $G^{\widetilde{P}}$ we define the bilinear map
\beq\label{eq_def_tau}\widebar {\sigma}:\widebar {\E}\times\widebar {\E}\mapsto \bbR,\qquad([h_1],[h_2])\mapsto\widebar {\sigma}([h_1],[h_2]):=\langle h_1,G^{\widetilde{P}}h_2\rangle_{\widebar {\V}} \,.\eeq
Theorem \ref{prop_prop_G} now immediately implies the following properties of $\widebar {\sigma}$.
\begin{cor}\label{prop_proptau}
The bilinear form $\widebar {\sigma}$ on $\widebar {\E}$, defined respectively in \eqref{eq_def_tau} and \eqref{eq_defcalE}, has the following properties.
\begin{enumerate}
\item $\widebar {\sigma}$ is well-defined, i.e. $\langle h_1,G^{\widetilde{P}}h_2\rangle_{\widebar {\V}} $ is independent of the representatives $h_i\in[h_i]$.
\item $\widebar {\sigma}$ is antisymmetric.
\item Let $T^\prime:\Gamma(\W)\to\Gamma(\V)$ be any differential operator satisfying property 4 in Theorem \ref{prop_propLEKGS} and define $\widebar {\sigma}^\prime$ in analogy to \eqref{eq_def_tau} but with the causal propagator $G^{{\widetilde{P}}^\prime}$ of $\widetilde{P}^\prime:={\widebar {P}}+T^\prime\circ \widebar {K}^\dagger$ instead of $G^{{\widetilde{P}}}$. Then $\widebar {\sigma}^\prime=\widebar {\sigma}$.
\end{enumerate}
\end{cor}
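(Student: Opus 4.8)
\emph{Strategy.} All three assertions should follow formally from Theorem~\ref{prop_prop_G}, together with the symmetry and formal self-adjointness recorded in Theorem~\ref{prop_propLEKGS}; no fresh manipulation of the explicit operators ${\widebar {P}}$, $\widebar {K}$ is needed. Throughout I work with representatives $h_1,h_2\in\Ker_0(\widebar {K}^\dagger)$ of classes $[h_1],[h_2]\in\widebar {\E}$, and I use repeatedly that $\widebar {K}^\dagger\circ{\widebar {P}}=0$, so that ${\widebar {P}}[\Gamma_0(\V)]\subset\Ker_0(\widebar {K}^\dagger)$ and the quotient $\widebar {\E}=\Ker_0(\widebar {K}^\dagger)/{\widebar {P}}[\Gamma_0(\V)]$ is meaningful.

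\textbf{Well-definedness and antisymmetry.} I would verify invariance of $\widebar {\sigma}$ under a change of representative in each argument separately. Changing $h_2\mapsto h_2+{\widebar {P}}h_2'$ with $h_2'\in\Gamma_0(\V)$ alters $\widebar {\sigma}$ by $\langle h_1,G^{\widetilde{P}}{\widebar {P}}h_2'\rangle_{\widebar {\V}} $; by the forward direction of property~1 of Theorem~\ref{prop_prop_G} one has $G^{\widetilde{P}}{\widebar {P}}h_2'\in\widebar {\G}_{\sc,0}=\widebar {K}[\Gamma_\sc(\W)]$, say $G^{\widetilde{P}}{\widebar {P}}h_2'=\widebar {K}\varsigma$, and since $h_1$ is compactly supported this pairing equals $\langle\widebar {K}^\dagger h_1,\varsigma\rangle_\W=0$ because $h_1\in\Ker_0(\widebar {K}^\dagger)$. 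Changing $h_1\mapsto h_1+{\widebar {P}}h_1'$ alters $\widebar {\sigma}$ by $\langle{\widebar {P}}h_1',G^{\widetilde{P}}h_2\rangle_{\widebar {\V}} $; the formal self-adjointness of ${\widebar {P}}$ (property~2 of Theorem~\ref{prop_propLEKGS}) moves ${\widebar {P}}$ onto $G^{\widetilde{P}}h_2\in\widebar {\Sol}_\sc$, and since ${\widebar {P}}G^{\widetilde{P}}h_2=0$ (property~3 of Theorem~\ref{prop_prop_G}) this vanishes. Thus $\widebar {\sigma}$ descends to $\widebar {\E}\times\widebar {\E}$. Antisymmetry is then immediate from the formal skew-adjointness of $G^{\widetilde{P}}$ on $\Ker_0(\widebar {K}^\dagger)$ (property~4 of Theorem~\ref{prop_prop_G}) combined with the symmetry of $\langle\cdot,\cdot\rangle_{\widebar {\V}} $ (property~1 of Theorem~\ref{prop_propLEKGS}): $\widebar {\sigma}([h_1],[h_2])=\langle h_1,G^{\widetilde{P}}h_2\rangle_{\widebar {\V}} =-\langle G^{\widetilde{P}}h_1,h_2\rangle_{\widebar {\V}} =-\langle h_2,G^{\widetilde{P}}h_1\rangle_{\widebar {\V}} =-\widebar {\sigma}([h_2],[h_1])$.

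\textbf{Independence of the gauge fixing (the main obstacle).} This is the only point with genuine content. I would prove the stronger statement that $G^{\widetilde{P}}h=G^{\widetilde{P}^\prime}h$ for every $h\in\Ker_0(\widebar {K}^\dagger)$, whence $\widebar {\sigma}^\prime=\widebar {\sigma}$ follows at once. Starting from $\widetilde{P}G_\pm^{\widetilde{P}}h=h$ and applying $\widebar {K}^\dagger$, the relation $\widebar {K}^\dagger\circ\widetilde{P}=Q\circ\widebar {K}^\dagger$ (established in Section~\ref{sec_Cauchy}) gives $Q\,\widebar {K}^\dagger G_\pm^{\widetilde{P}}h=\widebar {K}^\dagger h=0$. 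The section $\widebar {K}^\dagger G_\pm^{\widetilde{P}}h$ is supported in $J^\pm(\supp h)$, hence has vanishing Cauchy data on a Cauchy surface lying to the past/future of the compact set $\supp h$, and as $Q$ is normally hyperbolic with a well-posed Cauchy problem (property~4 of Theorem~\ref{prop_propLEKGS}) it must vanish identically: $\widebar {K}^\dagger G_\pm^{\widetilde{P}}h=0$. Consequently $\widetilde{P}^\prime G_\pm^{\widetilde{P}}h=({\widebar {P}}+T^\prime\circ\widebar {K}^\dagger)G_\pm^{\widetilde{P}}h={\widebar {P}}G_\pm^{\widetilde{P}}h=\widetilde{P}G_\pm^{\widetilde{P}}h=h$, so $G_\pm^{\widetilde{P}}h$ is also a retarded/advanced fundamental solution of the normally hyperbolic $\widetilde{P}^\prime$, and uniqueness of such solutions forces $G_\pm^{\widetilde{P}}h=G_\pm^{\widetilde{P}^\prime}h$, i.e. $G^{\widetilde{P}}h=G^{\widetilde{P}^\prime}h$. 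I expect the delicate step to be exactly this support-and-uniqueness argument for $Q$ and $\widetilde{P}^\prime$, which is where normal hyperbolicity and the identity $\widebar {K}^\dagger\circ\widetilde{P}=Q\circ\widebar {K}^\dagger$ are indispensable; abstractly, the resulting coincidence of the two propagators on $\Ker_0(\widebar {K}^\dagger)$ is precisely the content underlying property~5 of Theorem~\ref{prop_prop_G}.
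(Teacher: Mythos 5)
Your proposal is correct. For assertions 1 and 2 it coincides with the paper's own (essentially one-line) proof: the paper states that these properties follow immediately from Theorem \ref{prop_prop_G}, and your verification --- property 1 of that theorem to handle a change of representative in the second slot, formal selfadjointness of $\widebar{P}$ together with $G^{\widetilde{P}}\left[\Ker_0(\widebar{K}^\dagger)\right]\subset\widebar{\Sol}_\sc$ for the first slot, and property 4 plus symmetry of $\langle\cdot,\cdot\rangle_{\widebar{\V}}$ for antisymmetry --- is exactly the deduction left to the reader. Where you genuinely deviate is assertion 3. The paper derives it from property 5 of Theorem \ref{prop_prop_G}, whose proof is outsourced to \cite{HS}; note that property 5 as literally stated (that $G^{\widetilde{P}^\prime}$ also satisfies properties 1--4) does not by itself force $\widebar{\sigma}^\prime=\widebar{\sigma}$, since two maps can share those properties without agreeing --- what is really needed, and what the remark following Theorem \ref{prop_prop_G} alludes to, is the coincidence of the two propagators on $\Ker_0(\widebar{K}^\dagger)$. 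You prove exactly this stronger statement directly: the intertwining relation $\widebar{K}^\dagger\circ\widetilde{P}=Q\circ\widebar{K}^\dagger$ from Section \ref{sec_Cauchy}, the support property $\supp\, G_\pm^{\widetilde{P}}h\subset J^\pm(\supp\, h)$, and uniqueness of the Cauchy problem for the normally hyperbolic $Q$ yield $\widebar{K}^\dagger G_\pm^{\widetilde{P}}h=0$, whence $G_\pm^{\widetilde{P}}h$ solves $\widetilde{P}^\prime G_\pm^{\widetilde{P}}h=h$ with retarded/advanced support and must equal $G_\pm^{\widetilde{P}^\prime}h$. Both hyperbolicity inputs you rely on (existence of Cauchy surfaces disjoint from $J^\pm(\supp\, h)$ for compact $\supp\, h$, and vanishing of solutions with vanishing Cauchy data) are used by the paper itself in Section \ref{sec_Cauchy}, so your argument is sound; what it buys is self-containedness --- the third assertion no longer rests on the external reference --- at the modest price of reproving what is in substance the content of \cite[Theorem 3.12]{HS}.
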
 
The last property indicates that $\widebar {\sigma}$ is independent of the gauge-fixing operator $T$ and in this sense, gauge-invariant. Indeed, we shall demonstrate later that $\widebar {\sigma}$ can be rewritten in a manifestly gauge-invariant form. The form of $\widebar {\sigma}$ given here can be derived directly from the action $\widebar {S}^{(2)}(\Theta)=\frac12 \langle \Theta,{\widebar {P}}\Theta\rangle_{\widebar {\V}} $ by Peierls' method  in analogy to the derivation for electromagnetism in \cite{Sanders:2012sf}, see also \cite{Khavkine:2012jf, Khavkine:2014kya}) for a broader context.  

\subsubsection{Quantization of the presymplectic space}
\label{sec_quant}
The pair $(\widebar {\E},\widebar {\sigma})$ forms a (pre)symplectic space, which can be quantized in several, essentially equivalent, ways, that can all be called ``canonical''. One rather technical possibility is to quantize this presymplectic space in terms of a $C^\ast$-algebra \cite{degenerateCCR} (see also \cite{Araki:1970zza} for earlier work in this direction), which is essentially generated by exponentiated smeared quantum fields. A different option is to consider the polynomial algebra $\A$ of smeared quantum fields $\Theta(h)$ which correspond to the quantization of the classical linear observables $\langle \Theta,h\rangle_{\widebar {\V}} $. I.e. $\A$ is generated by a unit element $\1$ and sums of products of $\Theta(h_i)$ with $[h_i]\in\widebar {\E}_\bbC:=\widebar {\E}\otimes_\bbR\bbC$. There is an antilinear involution $^\dagger$ on $\A$ defined by
$$(A_1 A_2)^\dagger=A_2^\dagger A_1^\dagger\quad\forall A_i\in\A\,,\qquad \Theta(h)^\dagger=\Theta(h^\ast)$$
where $h^\ast$ is the complex conjugate of $h$. Finally the smeared quantum fields $\Theta(h)$ satisfy the (covariant) canonical commutation relations (CCR)\footnote{We tacitly extend $G_{\widetilde{P}}$, $\langle\cdot,\cdot\rangle_{\widebar {\V}} $ and $\widebar {\sigma}$ to complexified domains by $\bbR$-linearity.}
\beq\label{eq_def_CCR}\left[\Theta(h_1),\Theta(h_2)\right]:=\Theta(h_1)\Theta(h_2)-\Theta(h_2)\Theta(h_1)=i\widebar {\sigma}([h_1],[h_2])\1=i\langle h_1, G^{\widetilde{P}}h_2\rangle_{\widebar {\V}} \1\,.\eeq
We shall soon demonstrate that $\widebar {\sigma}$ is equivalent to a bi-linear form which can be computed on an arbitrary, but fixed Cauchy surface of $(M,g)$, i.e. at ``equal time''. This implies that the covariant CCR \eqref{eq_def_CCR} can be equivalently expressed in terms of ``equal-time'' CCR of field and canonical momentum. One may be tempted to read \eqref{eq_def_CCR} as ``$[\Theta(x),\Theta(y)]=i G^{\widetilde{P}}(x,y)\1$'', where $G^{\widetilde{P}}(x,y)$ is the integral kernel of the operator $G^{\widetilde{P}}$, but, while this makes sense for theories without gauge invariance, it does not make sense here because \eqref{eq_def_CCR} only holds for $h_i\in[h_i]\in\widebar {\E}$, i.e. for $h_i$ s.t. $\widebar {K}^\dagger h_i=0$. 

Note that the all elements of the quantum field algebra $\A$ constructed as above are gauge-invariant because both the generators $\Theta(h)$ and the commutation relation are gauge-invariant in the sense discussed before. In particular, all elements of $\A$ are physically meaningful observables. Given a state on the algebra $\A$, i.e. a positive and normalised linear functional $$\langle\;\;\rangle_\omega:\A\to\bbC,\qquad \langle \1\rangle_\omega=1,\qquad \langle A^\dagger A\rangle_\omega\ge 0\quad\forall A\in\A\,,$$
one can construct a Hilbert space representation $\pi$ of $\A$ by means of the GNS-theorem \cite{Haag}. This gives a Hilbert space $\H$ with a vector $\Omega$ s.t. the smeared quantum fields $\Theta(h)$ become operators $\pi(\Theta(h))$ on $\H$ and $\langle A\rangle_\omega=\langle\Omega|\pi(A)|\Omega\rangle$, $\pi(A)^\dagger=\pi(A^\dagger)$ for all $A\in\A$. Discussing the existence of states -- and thus of Hilbert space representations of $\A$ -- is beyond the scope of this paper, but we presume that one can show existence by methods similar to the ones used in \cite{Brunetti:2013maa} on general backgrounds or by generalising the construction of so-called states of low energy on cosmological backgrounds \cite{Olbermann:2007gn} from the scalar case to the one at hand. Note that the abstract properties 1-4 of $(\M,\V,\W,{\widebar {P}},\widebar {K})$ imply that the algebra $\A$ satisfies the so-called time-slice axiom, see \cite[Proposition 4.11]{HS}, i.e. for each generator $\Theta(h)$ there exists a $h\in[h]\in\widebar {\E}$ which has support in an arbitrarily small neighbourhood of an arbitrary but fixed Cauchy surface of $(M,g)$ -- a ``time slice'' -- and thus $\A$ is fully determined by information contained in this time slice.

An interesting question is whether the presymplectic form $\widebar {\sigma}$ is non-degenerate and thus symplectic, i.e. whether
$$\widebar {\sigma}([h_1],[h_2])=0\quad\forall\, [h_1]\in \widebar {\E}\qquad\Rightarrow\qquad [h_2]=[0]\,.$$
If this were not the case, then the quantum field algebra $\A$ would contain a non-trivial centre, i.e. elements not proportional to the identity which commute with all elements in $\A$. The question whether or not this happens is loosely related to the question whether the classical linear observables labelled by $\widebar {\E}$ are separating on $\widebar {\Sol}/\widebar {\G}$, and similar methods can be used to tackle this issue. Thus, for the same reasons outlined in the brief discussion of the separability of $\widebar {\Sol}/\widebar {\G}$, we believe that a full classification of the non-degeneracy of $\widebar {\sigma}$ is presumably possible with additional assumptions on $(M,g)$ using the methods of \cite{Fewster:2012bj} and \cite{HS}, but that a general statement for arbitrary $(M,g)$ is difficult to obtain. In the following Theorem \ref{propo_proptau2} we shall state a necessary condition for non-degeneracy of $\sigma$, in other words, a sufficient condition for degeneracy of $\sigma$.

\subsubsection{Equivalent formulation of the presymplectic space}
\label{sec_solform}
For explicit computations such as the ones we intend to perform for the special case of perturbations in inflation it is advantageous to have an equivalent form of the Poisson bracket $\widebar {\sigma}$ at our disposal. To this avail we split any on-shell configuration $\Theta\in\widebar {\Sol}$ into a ``future part'' $\Theta^+$ and a ``past part'' $\Theta^-$, where
we call $\Theta^\pm\in\Gamma(\V)$ a future/past part of $\Theta\in\Gamma(\V)$ if there exist two Cauchy surfaces $\Sigma^\pm$ such that $\Theta^\pm$ and $\Theta$ coincide on $J^\pm(\Sigma^\pm)$ and $\Theta^\pm$ vanishes on $J^\mp(\Sigma^\mp)$. A possibility to construct a future/past part of $\Theta$ is to pick two Cauchy surfaces $\Sigma^\pm$ of $(M,g)$ which satisfy $\Sigma^-\subset J^-(\Sigma^+)$ and $\Sigma^+\cap\Sigma^-=\emptyset$ and to pick a smooth partition of unity $\chi^++\chi^-=1$ s.t. $\chi^-$ vanishes on $J^+(\Sigma^+)$ and equals 1 on $J^-(\Sigma^-)$. Using this, we define $\Theta^\pm:=\chi^\pm \Theta$. Of course the such constructed future/past part is non-unique, but the difference $\Theta^+-\Theta^{+\prime}$ of any two future parts $\Theta^+$, $\Theta^{+\prime}$ of $\Theta$ has time-like compact support. Using any definition of $\Theta^+$, we construct a bilinear form on $\widebar {\Sol}$ by
\beq\label{eq_def_tau2}\langle \cdot, \cdot \rangle_{\widebar {\Sol}}:\widebar {\Sol}\times\widebar {\Sol} \to \bbR,\qquad (\Theta_1,\Theta_2)\mapsto\langle \Theta_1, \Theta_2 \rangle_{\widebar {\Sol}}:=\langle {\widebar {P}} \Theta^+_1, \Theta_2 \rangle_{\widebar {\V}} \,.\eeq
Employing once more the abstract properties 1-4 of $(\M,\V,\W,{\widebar {P}},\widebar {K})$ demonstrated in Theorem \ref{prop_propLEKGS}, one can prove the following properties of this bilinear form, cf. \cite[Proposition 5.1+Theorem 5.2]{HS}\footnote{In fact, some of the statements in Theorem \ref{propo_proptau2} are slight generalisations of \cite[Proposition 5.1]{HS} or are not spelt out explicitly there, but they can be proved using exactly the same steps used in the proof of said proposition.}.
\begin{theo}\label{propo_proptau2}
The bilinear form $\langle \cdot, \cdot \rangle_{\widebar {\Sol}}$ on $\widebar {\Sol}$ defined in \eqref{eq_def_tau2} has the following properties. 
\begin{enumerate}
\item $\langle \Theta_1, \Theta_2 \rangle_{\widebar {\Sol}}$ is well-defined for all $\Theta_1$, $\Theta_2\in\widebar {\Sol}$ with space-like compact overlapping support. In particular, it is independent of the choice of future part entering the definition \eqref{eq_def_tau2}.
\item $\langle \cdot, \cdot \rangle_{\widebar {\Sol}}$ is antisymmetric.
\item $\langle \cdot, \cdot \rangle_{\widebar {\Sol}}$ is gauge-invariant, i.e. $\langle \Theta_1, \Theta_2+\widebar {K}\varsigma \rangle_{\widebar {\Sol}}=\langle \Theta_1, \Theta_2\rangle_{\widebar {\Sol}}$ for all $\Theta_1,\Theta_2\in\widebar {\Sol}$, $\varsigma\in \Gamma(\W)$ s.t. $\Theta_1$ and $\varsigma$ have space-like compact overlapping support.
\item For all $\Theta\in\widebar {\Sol}$ and all $h\in\Ker_0(\widebar {K}^\dagger)$, $\langle \Theta, G^{\widetilde{P}}h\rangle_{\widebar {\Sol}}=\langle \Theta, h\rangle_{\widebar {\V}} $.
\item In particular, for all $h_1,h_2\in\Ker_0(\widebar {K}^\dagger)$, $\langle G^{\widetilde{P}}h_1, G^{\widetilde{P}}h_2\rangle_{\widebar {\Sol}}=\widebar {\sigma}([h_1],[h_2])$.
\item $G^{\widetilde{P}}$ descends to an isomorphism of presymplectic spaces $G^{\widetilde{P}}:(\widebar {\E},\widebar {\sigma})\to (\widebar {\Sol}_\sc/\widebar {\G}_{\sc,0},\langle\cdot,\cdot\rangle_{\widebar {\Sol}})$.
\item If $\widebar {\G}_{\sc,0}\subsetneq{\widebar {\G}}_\sc$, then $\widebar {\sigma}$ is degenerate, i.e. there exists $[0]\neq[h]\in\widebar {\E}$ s.t. $\widebar {\sigma}([h],\widebar {\E})=0$.
\end{enumerate}
\end{theo}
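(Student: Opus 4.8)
The plan is to establish the seven items in the order listed, drawing only on the formal selfadjointness of $\widebar{P}$, the identities $\widebar{P}\circ\widebar{K}=0$ and $\widebar{K}^\dagger\circ\widebar{P}=0$, the relation $\widebar{K}^\dagger\circ\widetilde{P}=Q\circ\widebar{K}^\dagger$, and the properties of $G^{\widetilde{P}}$ collected in Theorem \ref{prop_prop_G}. For item 1, I fix future/past parts $\Theta_i^\pm=\chi^\pm\Theta_i$ and use that $\widebar{P}\Theta_i^+=[\widebar{P},\chi^+]\Theta_i=-\widebar{P}\Theta_i^-$ is supported in the time-like compact slab where $\chi^+$ varies, since $\widebar{P}\Theta_i=0$. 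Hence $\supp(\widebar{P}\Theta_1^+)\cap\supp\,\Theta_2$ is both time-like and space-like compact, thus compact, whenever $\Theta_1,\Theta_2$ have space-like compact overlapping support, so $\langle\widebar{P}\Theta_1^+,\Theta_2\rangle_{\widebar{\V}}$ converges. Independence of the chosen future part follows because the difference $D$ of two future parts of $\Theta_1$ is time-like compact, so $\langle\widebar{P}D,\Theta_2\rangle_{\widebar{\V}}=\langle D,\widebar{P}\Theta_2\rangle_{\widebar{\V}}=0$; the integration by parts is licensed precisely because $\supp D$ is time-like compact and $\widebar{P}\Theta_2=0$.

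Item 2 is the step I expect to be the main obstacle, because the naive manipulation wants to move $\widebar{P}$ by parts between the full future parts $\Theta_i^+$, whose supports overlap in a future-unbounded region where no integration by parts is legitimate. The device I would use is to exploit the freedom granted by item 1 and stagger the transition regions: choose the Cauchy surfaces so that the transition slab $T_2$ of $\Theta_2$ lies strictly to the past of the transition slab $T_1$ of $\Theta_1$. Then $\langle\widebar{P}\Theta_1^+,\Theta_2^-\rangle_{\widebar{\V}}=0$ and $\langle\widebar{P}\Theta_2^+,\Theta_1^+\rangle_{\widebar{\V}}=0$ by disjointness of supports, leaving
$$\langle\Theta_1,\Theta_2\rangle_{\widebar{\Sol}}+\langle\Theta_2,\Theta_1\rangle_{\widebar{\Sol}}=\langle\widebar{P}\Theta_1^+,\Theta_2^+\rangle_{\widebar{\V}}+\langle\widebar{P}\Theta_2^+,\Theta_1^-\rangle_{\widebar{\V}}.$$
In the last term the supports of $\Theta_1^-$ and $\Theta_2^+$ meet only in the time-like compact slab between the two staggered pairs of surfaces, so integrating $\widebar{P}$ by parts is now legitimate; using $\widebar{P}\Theta_1^-=-\widebar{P}\Theta_1^+$ this cancels the first term exactly, yielding antisymmetry.

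For item 3 I would compute $\langle\Theta_1,\widebar{K}\varsigma\rangle_{\widebar{\Sol}}=\langle\widebar{P}\Theta_1^+,\widebar{K}\varsigma\rangle_{\widebar{\V}}=\langle\widebar{K}^\dagger\widebar{P}\Theta_1^+,\varsigma\rangle_{\W}=0$, again using the time-like compactness of $\widebar{P}\Theta_1^+$ and $\widebar{K}^\dagger\circ\widebar{P}=0$; antisymmetry then gives invariance in the first slot as well. Item 4 rests on the identity $\widebar{P}G_+^{\widetilde{P}}h=h$ for $h\in\Ker_0(\widebar{K}^\dagger)$: since $\widetilde{P}G_+^{\widetilde{P}}h=h$ and $\widetilde{P}=\widebar{P}+T\circ\widebar{K}^\dagger$, it suffices to show $\widebar{K}^\dagger G_+^{\widetilde{P}}h=0$, and this follows because $\widebar{K}^\dagger\circ\widetilde{P}=Q\circ\widebar{K}^\dagger$ forces $Q(\widebar{K}^\dagger G_+^{\widetilde{P}}h)=0$ while $\widebar{K}^\dagger G_+^{\widetilde{P}}h$ has past-compact support, so it vanishes by uniqueness for the normally hyperbolic $Q$ of Theorem \ref{prop_propLEKGS}. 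Taking $G_+^{\widetilde{P}}h$ as the future part of $G^{\widetilde{P}}h$ and invoking item 2 then identifies $\langle\Theta,G^{\widetilde{P}}h\rangle_{\widebar{\Sol}}$ with $\langle\Theta,h\rangle_{\widebar{\V}}$, the overall sign being fixed by the conventions for $G^{\widetilde{P}}$ and for the future part. Item 5 is then immediate: apply item 4 with $\Theta=G^{\widetilde{P}}h_1\in\widebar{\Sol}_\sc$ and invoke the formal skew-adjointness of $G^{\widetilde{P}}$ on $\Ker_0(\widebar{K}^\dagger)$ (item 4 of Theorem \ref{prop_prop_G}) together with the definition \eqref{eq_def_tau} of $\widebar{\sigma}$.

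Finally, item 6 combines the bijectivity of $G^{\widetilde{P}}:\widebar{\E}\to\widebar{\Sol}_\sc/\widebar{\G}_{\sc,0}$ (item 3 of Theorem \ref{prop_prop_G}) with item 5, which says precisely that this bijection intertwines $\widebar{\sigma}$ and $\langle\cdot,\cdot\rangle_{\widebar{\Sol}}$; item 3 guarantees that $\langle\cdot,\cdot\rangle_{\widebar{\Sol}}$ descends to the quotient by $\widebar{\G}_{\sc,0}$. For item 7 I would pick $\Theta=\widebar{K}\varsigma\in\widebar{\G}_\sc\setminus\widebar{\G}_{\sc,0}$; its class in $\widebar{\Sol}_\sc/\widebar{\G}_{\sc,0}$ is nonzero, so by item 6 there is $[0]\neq[h]\in\widebar{\E}$ with $G^{\widetilde{P}}h\equiv\widebar{K}\varsigma$ modulo $\widebar{\G}_{\sc,0}$. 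Then for every $[h']\in\widebar{\E}$ one has $\widebar{\sigma}([h],[h'])=\langle\widebar{K}\varsigma,G^{\widetilde{P}}h'\rangle_{\widebar{\Sol}}=0$ by item 3, so $[h]$ lies in the radical of $\widebar{\sigma}$ and the form is degenerate. The only genuinely delicate ingredient in the whole argument is the support bookkeeping underlying items 1 and 2; everything else is a formal consequence of the already established structural identities.
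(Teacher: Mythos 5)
Your strategy is, in substance, the one the paper itself relies on: the paper does not prove Theorem \ref{propo_proptau2} in-text but defers to \cite[Prop.~5.1, Thm.~5.2]{HS}, and the ingredients there are exactly yours --- the slab support of $\widebar{P}\Theta^+=-\widebar{P}\Theta^-$, the implication $h\in\Ker_0(\widebar{K}^\dagger)\Rightarrow\widebar{K}^\dagger G_\pm^{\widetilde{P}}h=0\Rightarrow\widebar{P}G_\pm^{\widetilde{P}}h=h$ (via $\widebar{K}^\dagger\circ\widetilde{P}=Q\circ\widebar{K}^\dagger$, support properties and uniqueness for the normally hyperbolic $Q$), and the identification $(G^{\widetilde{P}}h)^+=G_+^{\widetilde{P}}h$. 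Items 1, 2, 3, 6 and 7 are essentially correct; your staggering of the transition slabs in item 2 is a legitimate variant of the standard argument (which instead flips $\widebar{P}\Theta_i^+=-\widebar{P}\Theta_i^-$ in each offending term so that every pairing has compact overlapping support). One small repair in item 1: the integration by parts in $\langle\widebar{P}D,\Theta_2\rangle_{\widebar{\V}}=\langle D,\widebar{P}\Theta_2\rangle_{\widebar{\V}}$ is licensed by compactness of $\supp D\cap\supp\Theta_2$, not by time-like compactness of $\supp D$ alone; this holds if both future parts are of cutoff form $\chi^+\Theta_1$, so that $\supp D\subset\supp\Theta_1$, but not for arbitrary future parts.

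The genuine gap is in items 4 and 5, and it cannot be absorbed into ``the overall sign being fixed by the conventions''. Run your own argument to the end: with $(G^{\widetilde{P}}h)^+=G_+^{\widetilde{P}}h$ and $\widebar{P}G_+^{\widetilde{P}}h=h$, item 2 gives
$$\langle\Theta,G^{\widetilde{P}}h\rangle_{\widebar{\Sol}}=-\langle G^{\widetilde{P}}h,\Theta\rangle_{\widebar{\Sol}}=-\langle\widebar{P}G_+^{\widetilde{P}}h,\Theta\rangle_{\widebar{\V}}=-\langle\Theta,h\rangle_{\widebar{\V}}\,,$$
the opposite sign to the printed item 4. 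Moreover, your plan to obtain item 5 ``immediately'' from item 4 fails for any sign choice: if item 4 held as printed, then setting $\Theta=G^{\widetilde{P}}h_1$ and using the skew-adjointness of $G^{\widetilde{P}}$ (Theorem \ref{prop_prop_G}, property 4) would give $\langle G^{\widetilde{P}}h_1,G^{\widetilde{P}}h_2\rangle_{\widebar{\Sol}}=\langle G^{\widetilde{P}}h_1,h_2\rangle_{\widebar{\V}}=-\widebar{\sigma}([h_1],[h_2])$, contradicting item 5 unless $\widebar{\sigma}\equiv0$. In other words, items 4 and 5 are mutually inconsistent as stated, and a proof attempt must detect this rather than smooth it over. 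Under the paper's conventions ($G^{\widetilde{P}}=G_+^{\widetilde{P}}-G_-^{\widetilde{P}}$, future part on the first argument in \eqref{eq_def_tau2}), item 5 is the correct one and should be proved directly --- insert $(G^{\widetilde{P}}h_1)^+=G_+^{\widetilde{P}}h_1$ into the definition to get $\langle G^{\widetilde{P}}h_1,G^{\widetilde{P}}h_2\rangle_{\widebar{\Sol}}=\langle h_1,G^{\widetilde{P}}h_2\rangle_{\widebar{\V}}=\widebar{\sigma}([h_1],[h_2])$, with no sign ambiguity --- whereas item 4 is correct only with the arguments transposed, $\langle G^{\widetilde{P}}h,\Theta\rangle_{\widebar{\Sol}}=\langle h,\Theta\rangle_{\widebar{\V}}$, equivalently with a minus sign in the printed ordering. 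So you should derive item 5 independently of item 4 and explicitly flag the sign discrepancy; as written, your items 4 and 5 cannot both be true, and the appeal to conventions conceals this.
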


The fourth property in the above theorem shows that one can view the observable $\widebar {\Sol}/{\widebar {\G}}\ni[\Theta]\mapsto \langle \Theta, h\rangle_{\widebar {\V}} $, i.e. the smeared classical field, equivalently as ``(pre)symplectically smeared classical field'' $\widebar {\Sol}/{\widebar {\G}}\ni[\Theta]\mapsto \langle \Theta, H\rangle_{\widebar {\Sol}}$ with $H\in[H]\in \widebar {\Sol}_\sc/\widebar {\G}_{\sc,0}$. Moreover, because of the fifth (and the ensuing sixth) property, the quantum field algebra can be viewed as generated by the symplectically smeared quantum field $\Theta(H)$, corresponding to the quantization of $\langle \Theta, H\rangle_{\widebar {\Sol}}$, and satisfying the CCR
$$\left[\Theta(H_1),\Theta(H_2)\right]=i\langle H_1,H_2\rangle_{\widebar {\Sol}}\1\,,$$
which may be interpreted as ``equal-time'' CCR on account of the following result.

As anticipated, we will now demonstrate that the bilinear form $\langle\cdot,\cdot\rangle_{\widebar {\Sol}}$, whenever it is defined, can be computed as an integral on an arbitrary but fixed Cauchy surface of the spacetime $(M,g)$. In fact, the integrand of this integral is the ``charge'' of a ``conserved current''. To see this, for any two sections of $\V$, we define a ``current'' $\widebar {j}$ by

\begin{equation}
\label{eq_current}\widebar {j}:\Gamma(\V)\times \Gamma(\V)\to  T^*M
\end{equation}
\begin{align*}(\Theta_1, \Theta_2) \mapsto \widebar {j}_a(\Theta_1,\Theta_2):=&-\frac{4}{\alpha}{\theta_1}^{bc}\nabla_a{\theta_2}_{bc}+\frac{8}{\alpha}{\theta_1}_{a}^{\phantom{c}c}\nabla^b{\theta_2}_{bc}-\frac{4\xi^2\phi^2-2\beta}{\beta\alpha}{\theta_1}_{c}^{\phantom{c}c}\nabla_a{\theta_2}_{d}^{\phantom{c}d}-\\&-\frac{2\xi\phi}{\beta}{\theta_1}_{c}^{\phantom{c}c}\nabla_a \zeta_2-\frac{2\xi\phi}{\beta}\zeta_1\nabla_a{\theta_2}_{c}^{\phantom{c}c}-\frac{\alpha}{\beta}\zeta_1\nabla_a\zeta_2+\\&+\frac{2\xi\left(2-\alpha\right)}{\beta\alpha}{\theta_1}_{c}^{\phantom{c}c}\left(\nabla_a\phi\right)\zeta_2+\frac{4}{\alpha}\zeta_1 \left(\nabla^b\phi\right){\theta_2}_{ab}-``1\leftrightarrow 2"\end{align*}
The covariant divergence / codifferential of this satisfies
\begin{equation}\label{eq_currentconservation}
-\delta \widebar {j}(\Theta_1,\Theta_2) = \nabla^a\widebar {j}_a(\Theta_1,\Theta_2)=\left\langle\left\langle \Theta_1,{\widebar {P}}\Theta_2\right\rangle\right\rangle_{\widebar {\V}} -\left\langle\left\langle {\widebar {P}}\Theta_1,\Theta_2\right\rangle\right\rangle_{\widebar {\V}} 
\end{equation}
where $\langle\langle\cdot, \cdot\rangle\rangle_{\widebar {\V}} $ is the integrand of $\langle\cdot, \cdot\rangle_{\widebar {\V}} $. The existence of a $\widebar {j}$ with this property is related to the formal selfadjointness of ${\widebar {P}}$. Moreover, this property implies in particular $\delta \widebar {j} (\Theta_1,\Theta_2)=0$ if ${\widebar {P}}\Theta_1={\widebar {P}}\Theta_2=0$, which motivates the nomenclature for $\widebar {j}$. This current can be written equivalently as follows. If we define
\beq\label{eq_def_redefinedq2} {\widebar {P}}=:M^{ab}\nabla_a\nabla_b + N^a\nabla_a + L\qquad \widebar {\Gamma}=:F\Gamma \qquad \widebar {\Gamma}^{-1}=F^{-1}\Gamma \eeq
\beq\label{eq_fibrewisered}\langle\cdot, \cdot\rangle_{\V} =:\int\limits_M\vol \langle\langle\cdot, \cdot\rangle\rangle_{\V} \qquad \langle\cdot, \cdot\rangle_{\widebar {\V}} =:\int\limits_M\vol \langle\langle\cdot, \cdot\rangle\rangle_{\widebar {\V}} =\int\limits_M\vol \langle\langle\cdot, F^{-1}\cdot\rangle\rangle_{\V} \eeq
then 
$$\widebar {j}_a(\Theta_1,\Theta_2)= \left\langle\left\langle \Theta_1,M_{a}^{\phantom{a}b}\nabla_b \Theta_2\right\rangle\right\rangle_{\widebar {\V}} -\left\langle\left\langle \Theta_2,M_{a}^{\phantom{a}b}\nabla_b \Theta_1\right\rangle\right\rangle_{\widebar {\V}} +$$$$+\left\langle\left\langle \Theta_1,\left(N_a-F\left(\nabla^b F^{-1}M_{ba}\right)\right) \Theta_2\right\rangle\right\rangle_{\widebar {\V}} \,.$$
Note that the definitions of $M^{ab}$ and $L$ and, consequently, the definition of $\widebar {j}_a$ in this way is not unique. However, two possible $M^{ab}$, ${M^\prime}^{ab}$ lead to $\widebar {j}_a$, $\widebar {j}^\prime_a$ which differ by a co-exact oneform $\delta \omega = \widebar {j}-\widebar {j}^\prime$. This is irrelevant as we are interested only in $\delta \widebar {j}$. A unique $\widebar {j}$ satisfying \eqref{eq_currentconservation} can be specified by defining it via the covariant conjugate momentum of $\Theta$ obtained from the quadratic action $\widebar {S}^{(2)}(\Theta)$. We choose a different current here because it has a slightly shorter form than this ``canonical'' one.

We now pick two arbitrary but fixed $\Theta_1$, $\Theta_2\in\widebar {\Sol}$ with space-like compact overlapping support, an arbitrary but fixed Cauchy surface $\Sigma$ of $(M,g)$ with forward pointing unit normal vector field $n$, and then compute
$$\langle \Theta_1, \Theta_2\rangle_{\widebar {\Sol}}=\langle {\widebar {P}} \Theta^+_1 ,\Theta_2\rangle_{\widebar {\V}} =\int\limits_M \vol\left\langle\left\langle {\widebar {P}} \Theta^+_1,\Theta_2 \right\rangle\right\rangle_{\widebar {\V}} =$$$$=-\!\!\!\!\int\limits_{J^+(\Sigma)}\!\!\!\! \vol\left\langle\left\langle {\widebar {P}} \Theta^-_1,\Theta_2 \right\rangle\right\rangle_{\widebar {\V}} +\!\!\!\!\int\limits_{J^-(\Sigma)} \!\!\!\!\vol\left\langle\left\langle {\widebar {P}} \Theta^+_1,\Theta_2 \right\rangle\right\rangle_{\widebar {\V}} 
 \,,$$
where we have used that ${\widebar {P}}\Theta^+={\widebar {P}}(\Theta-\Theta^-)=-{\widebar {P}}\Theta^-$ for all $\Theta\in\widebar {\Sol}$. Using \eqref{eq_currentconservation} and Stokes' theorem, the two summands in the last expression can be rewritten as
$$\mp\!\!\!\!\int\limits_{J^\pm(\Sigma)} \!\!\!\!\vol\left\langle\left\langle {\widebar {P}} \Theta^\mp_1,\Theta_2 \right\rangle\right\rangle_{\widebar {\V}} =\mp\!\!\!\!\int\limits_{J^\pm(\Sigma)} \!\!\!\!\vol\left(\left\langle\left\langle {\widebar {P}} \Theta^\mp_1,\Theta_2 \right\rangle\right\rangle_{\widebar {\V}} -\left\langle\left\langle  \Theta^\mp_1,{\widebar {P}} \Theta_2 \right\rangle\right\rangle_{\widebar {\V}} \right)=$$
$$=\mp\!\!\int\limits_{J^\pm(\Sigma)} \!\!\!\!\vol\left( -\delta \widebar {j}(\Theta^\mp_1,\Theta_2)\right)=\mp\!\!\!\!\int\limits_{J^\pm(\Sigma)} \!\!\!\!d\ast \widebar {j}(\Theta^\mp_1,\Theta_2)=\int\limits_\Sigma \vols\, n^a\widebar {j}_a(\Theta^\mp_1,\Theta_2)\,.$$
Here $\ast$ denotes the Hodge star operator and while applying Stokes' theorem, we have used the fact that, by our assumptions, the sets $\supp\, \Theta^\pm_1\cap \supp\, \Theta_2 \cap J^{\mp}(\Sigma)$ are compact. Finally, note that a partial sign flip occurred in last step by matching the chosen orientations of $\Sigma$. Summing up, we have proved the following.

\begin{lem}\label{prop_solcurrent}For every $\Theta_1$, $\Theta_2\in\widebar {\Sol}$ with space-like compact overlapping support, and every Cauchy surface $\Sigma$ of $(M,g)$ with forward pointing unit normal vector field $n$
\beq\label{eq_tau_current}\langle \Theta_1, \Theta_2\rangle_{\widebar {\Sol}}=\int\limits_\Sigma \vols\, n^a \widebar {j}_a(\Theta_1,\Theta_2)\,,\eeq
where $\widebar {j}_a(\Theta_1,\Theta_2)$ is defined in \eqref{eq_current}.
\end{lem}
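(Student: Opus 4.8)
The plan is to convert the spacetime integral defining $\langle\Theta_1,\Theta_2\rangle_{\widebar{\Sol}}$ into a flux integral over $\Sigma$ by recognising its integrand, on solutions, as a total divergence and then applying Stokes' theorem. I would start from \eqref{eq_def_tau2}, fixing a future part $\Theta_1^+=\chi^+\Theta_1$ built from a partition of unity $\chi^++\chi^-=1$ subordinate to two auxiliary Cauchy surfaces $\Sigma^\pm$ with $\Sigma$ in between, so that $\langle\Theta_1,\Theta_2\rangle_{\widebar{\Sol}}=\int_M\vol\,\langle\langle\widebar{P}\Theta_1^+,\Theta_2\rangle\rangle_{\widebar{\V}}$. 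Because $\widebar{P}\Theta_1=0$, the section $\widebar{P}\Theta_1^+$ is supported only where $\chi^+$ varies, i.e. in a time-like compact slab; intersecting this with the space-like compact overlap of $\supp\Theta_1$ and $\supp\Theta_2$ renders the effective domain of integration compact, so the integral is absolutely convergent and all the manipulations below are legitimate.

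Next I would decompose $M=J^+(\Sigma)\cup J^-(\Sigma)$ and exploit $\widebar{P}\Theta_1^+=-\widebar{P}\Theta_1^-$, which holds since $\Theta_1=\Theta_1^++\Theta_1^-$ solves $\widebar{P}\Theta_1=0$. This lets me treat the two half-spaces symmetrically, with integrand $\langle\langle\widebar{P}\Theta_1^\mp,\Theta_2\rangle\rangle_{\widebar{\V}}$ on $J^\pm(\Sigma)$. The decisive step is \eqref{eq_currentconservation}: since $\widebar{P}\Theta_2=0$, I may freely add the vanishing term $\langle\langle\Theta_1^\mp,\widebar{P}\Theta_2\rangle\rangle_{\widebar{\V}}$ and identify the integrand with $-\delta\widebar{j}(\Theta_1^\mp,\Theta_2)=\nabla^a\widebar{j}_a(\Theta_1^\mp,\Theta_2)$, a pure divergence, equivalently the exact four-form $d\ast\widebar{j}(\Theta_1^\mp,\Theta_2)$. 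Applying Stokes' theorem to each half and using that the only contributing boundary is $\Sigma$ (the slab supports meet no other part of $\partial J^\pm(\Sigma)$ by compactness), each piece becomes $\int_\Sigma\vols\,n^a\widebar{j}_a(\Theta_1^\mp,\Theta_2)$; here the explicit signs $\mp$ prefixing the two bulk integrals are precisely what is needed to compensate the orientation sign flip relating the induced boundary orientation to the one fixed by the forward-pointing normal $n$.

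Finally I would recombine. Summing the two surface contributions and using that $\Theta_1^++\Theta_1^-=\Theta_1$ together with the bilinearity of $\widebar{j}$ in its first slot collapses the result to $\int_\Sigma\vols\,n^a\widebar{j}_a(\Theta_1,\Theta_2)$, the asserted formula. That the right-hand side does not depend on the chosen future part is already guaranteed by part 1 of Theorem \ref{propo_proptau2}, while its independence of $\Sigma$ follows a posteriori from $\delta\widebar{j}(\Theta_1,\Theta_2)=0$ on solutions (a direct consequence of \eqref{eq_currentconservation}), so the flux of $\widebar{j}$ through any two Cauchy surfaces agrees by one more application of Stokes' theorem.

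I expect the genuine difficulty to be the analytic and orientation bookkeeping rather than any new idea: one must check that the sets $\supp\Theta_1^\pm\cap\supp\Theta_2\cap J^\mp(\Sigma)$ are indeed compact, so that Stokes' theorem produces no spurious flux at spatial infinity, and one must fix the orientation conventions on the two faces of $\partial J^\pm(\Sigma)$ so that the two half-surface terms add rather than cancel. The hypothesis of space-like compact overlapping support is exactly the input that secures this compactness, and the antisymmetry and gauge-invariance recorded in Theorem \ref{propo_proptau2} provide useful consistency checks on the signs.
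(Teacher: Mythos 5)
Your proposal is correct and follows essentially the same route as the paper's own proof: splitting the integral over $J^\pm(\Sigma)$, swapping $\widebar{P}\Theta_1^+$ for $-\widebar{P}\Theta_1^-$ on the future half, inserting the vanishing term $\langle\langle\Theta_1^\mp,\widebar{P}\Theta_2\rangle\rangle_{\widebar{\V}}$ to invoke \eqref{eq_currentconservation}, applying Stokes' theorem with compactness of $\supp\,\Theta_1^\pm\cap\supp\,\Theta_2\cap J^\mp(\Sigma)$, and recombining by bilinearity of $\widebar{j}$ in its first slot. You also correctly identify the orientation bookkeeping and the role of the space-like compact overlap hypothesis as the only delicate points, exactly as in the paper.
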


\subsubsection{Removing the field redefinition}
\label{sec_removingredefinition}
We close our general treatment of the linearised Einstein-Klein-Gordon system by demonstrating that the field redefinition $\Theta=\widebar {\Gamma}=F\Gamma$ is irrelevant for the results of our constructions. To this avail, we recall the relations \eqref{eq_def_redefinedq} between the original and redefined quantities and  define

$$\Sol:= \{\Gamma\in \Gamma(\V)\,|\, P \Gamma=0\}\qquad \Sol_\text{sc}:= \Sol\cap \Gamma_\sc(\V)$$
$${\G}:=K\left[\Gamma(\W)\right]\qquad {\G}_\text{sc}:= {\G}\cap \Gamma_\sc(\V)\qquad \G_{\sc,0}:=K[\Gamma_\sc(\W)] $$
\beq\label{eq_def_origquants}\Ker_0(K^{\dagger}):=\{h\in\Gamma_0(\V)\,|\,K^{\dagger} h=0\}\qquad \E:= \Ker_0(K^{\dagger})/P[\Gamma_0(\V)]\eeq
$$ \sigma:\E\times \E\to\bbR,\quad ([h_1],[h_2])\mapsto \sigma([h_1],[h_2]):=\langle h_1, F^{-1}G^{\widetilde {P}} h_2 \rangle_\V$$
$$\langle \cdot,\cdot\rangle_\Sol: \Sol\times\Sol\to\bbR,\quad (\Gamma_1,\Gamma_2)\mapsto \langle \Gamma_1,\Gamma_2\rangle_\Sol:=\langle P \Gamma^+_1,\Gamma_2\rangle_\V $$
$$
j:\Gamma(\V)\times \Gamma(\V)\to  T^*M\qquad
(\Gamma_1,\Gamma_2)\mapsto j(\Gamma_1,\Gamma_2):=\widebar {j}(\widebar {\Gamma}_1,\widebar {\Gamma}_2)\,.
$$

A straightforward computation shows that $K^{\dagger}=\widebar {K}^\dagger$. Moreover, as the redefinition operator $F$ is invertible by our assumptions that $\alpha=1-\xi\phi^2$ and $\beta=1+(6\xi-1)\xi\phi^2$ have definite sign on all $M$, we have $P[\Gamma_0(\V)]={\widebar {P}}[\Gamma_0(\V)]$. Thus $\Ker_0(K^{\dagger})=\Ker_0(\widebar {K}^{\dagger})$, $\E=\widebar {\E}$. Moreover, since for $\Theta=F\Gamma$, $\langle \Theta,h\rangle_{\widebar {\V}} =\langle \Gamma,h\rangle_\V$, the classical observables $\widebar {\Sol}/{\widebar {\G}}\ni[\Theta]\mapsto \langle \Theta,h\rangle_{\widebar {\V}} $ and $\Sol/{\G}\ni[\Gamma]\mapsto \langle \Gamma,h\rangle_\V$ are the same for all $h\in[h]\in\widebar {\E}$. A direct computation also shows that $\widebar {\sigma}=\sigma$ and $\langle \Gamma_1,\Gamma_2\rangle_\Sol=\langle \widebar {\Gamma}_1,\widebar {\Gamma}_2\rangle_{\widebar {\Sol}}$, and thus all presymplectic spaces under consideration are isomorphic
$$(\widebar {\E},\widebar {\sigma})\simeq (\widebar {\Sol}_\sc/\widebar {\G}_{\sc,0},\langle\cdot,\cdot\rangle_{\widebar {\Sol}})\simeq (\Sol_\sc/\G_{\sc,0},\langle\cdot,\cdot\rangle_\Sol)\simeq (\E,\sigma)\,.$$
One can show that the operator $F^{-1}\circ G^{\widetilde {P}}$ in the definition of $\sigma$ is the causal propagator of $\widetilde P \circ F=P+T\circ K^\dagger \circ F$, which can be interpreted as a gauge-fixed version of $P$ and has a well-defined Cauchy problem although it is not normally hyperbolic. Thus we could have in principle quantized directly the system $(\M,\V,\W,P,K)$ without quantizing $(\M,\V,\W,{\widebar {P}},\widebar {K})$ as an in-between step. The reason why we chose this indirect route is that we wanted to use the results of \cite{HS}, which are not directly applicable to  $(\M,\V,\W,P,K)$, because e.g. $K^\dagger\circ K$ does not have a well-defined Cauchy problem.

To close this section, we consider the original current $j$ as a preparation for the next section. Note that the previous observations  and Lemma \ref{prop_solcurrent} imply that this Lemma holds also for $\langle\cdot,\cdot\rangle_\Sol$ and $j$. Finally, a direct computation shows
\begin{gather}j_a(\Gamma_1,\Gamma_2)=-\frac{\alpha}{4}{\gamma_1}^{bc}\nabla_a{\gamma_2}_{bc}+\frac{\alpha}{2}{\gamma_1}_{a}^{\phantom{c}c}\nabla^b{\gamma_2}_{bc}+\frac{\alpha}{4}{\gamma_1}_{c}^{\phantom{c}c}\nabla_a{\gamma_2}_{d}^{\phantom{c}d}-\frac{\alpha}{4}{\gamma_1}_{c}^{\phantom{c}c}\nabla_d{\gamma_2}_{a}^{\phantom{c}d}-\frac{\alpha}{4}{\gamma_1}_{ab}\nabla^b{\gamma_2}_{c}^{\phantom{c}c}+\notag\\\label{eq_origcurrent}+\xi\phi\varphi_1\nabla_c{\gamma_2}_{a}^{\phantom{c}c}
+\xi\phi{\gamma_1}_{a}^{\phantom{c}c}\nabla_c \varphi_2
-\xi\phi{\gamma_1}_{c}^{\phantom{c}c}\nabla_a \varphi_2-\xi\phi\varphi_1\nabla_a{\gamma_2}_{c}^{\phantom{c}c}-\varphi_1\nabla_a\varphi_2+\\+\frac{1-2\xi}{2}{\gamma_1}_{c}^{\phantom{c}c}\left(\nabla_a\phi\right)\varphi_2+\left((1-\xi)\varphi_1+\xi\phi{\gamma_1}_{c}^{\phantom{c}c} \right) \left(\nabla^b\phi\right){\gamma_2}_{ab}-\xi\phi{\gamma_1}_{ab} \left(\nabla^c\phi\right){\gamma_2}_{c}^{\phantom{c}b}-``1\leftrightarrow 2"\,.\notag\end{gather}

%%%%%%%%%%%%%%%%%%%%%%%%%%%%%%%%%%%%%%%%%%%%%%%%
%%%%%%%%%%%%%%%%%%%%%%%%%%%%%%%%%%%%%%%%%%%%%%%%

\section{Quantization of perturbations in Inflation}
\label{sec_inflation}

The basic idea of Inflation, see e.g. the monographs \cite{Ellis, Mukhanov:2005sc, Straumann:2005mz}, is the assumption that the early universe underwent a phase of exponential expansion driven by one (or several) scalar fields. This expansion is thought to have homogenised the universe respectively all of its matter-energy content. Afterwards, the quantized perturbations of the metric and scalar field are believed to have been the seeds for the small scale inhomogeneities of the universe we see today. Mathematically this amounts to consider solutions of the Einstein-Klein-Gordon equation \eqref{eq_fullcoupled} such that the background spacetime $(M,g)$ is a Friedmann-Lema\^itre-Robertson-Walker (FLRW) spacetime
$$M=I\times \bbR^3\subset \bbR^4\,, \qquad g=a(\tau)^2(-d\tau^2 + d\vec{x}^2)$$
where we here consider only the case where the spatial slices are diffeomorphic to $\bbR^3$, i.e. flat. $a(\tau)$ denotes the scale factor, $\tau$ is conformal time and we denote derivatives w.r.t. $\tau$ by $^\prime$ throughout this section. Moreover, Latin indices in the middle of the alphabet $i,j,k,...$ denote spatial indices in the FLRW coordinates, and these indices are always raised by means of the (inverse of the) Euclidean metric $\delta_{ij}$ and not by the induced metric $a^2 \delta_{ij}$. An important quantity is 
$$\H:=\frac{a^\prime}{a}=\frac{\partial_\tau a}{a}\qquad\Rightarrow\qquad R=\frac{6(\H^\prime+\H^2)}{a^2}\,.$$
FLRW-solutions of the Einstein-Klein-Gordon equations \eqref{eq_fullcoupled} can only exist if the scalar field $\phi$ does not depend on the spatial coordinates, i.e. $\phi=\phi(\tau)$. Thus, for solutions of FLRW-type, the Einstein-Klein-Gordon equations simplify to (recall $\alpha=1-\xi\phi^2$)
$$6\alpha\H^2-(\phi^\prime)^2-12\xi \H\phi\phi^\prime -2a^2V=0$$
$$2\alpha(\H^2+2\H^\prime)+(1-4\xi)(\phi^\prime)^2-4\xi\H\phi\phi^\prime-4\xi\phi\phi^{\prime\prime}-2a^2V=0$$
$$\phi^{\prime\prime}+2\H\phi^\prime +6\xi(\H^\prime+\H^2)\phi+a^2 \partial_\phi V=0\,.$$
In the previous section we have indicated how to quantize perturbations of the Einstein-Klein-Gordon system in a gauge-invariant way on any background $(M,g,\phi)$ and thus in particular on backgrounds of FLRW type. Our aim in this section is to compare this general construction with the usual approach to the quantization of perturbations in Inflation, see e.g. \cite{Bardeen:1980kt, Ellis, Mukhanov:1990me, Mukhanov:2005sc, Straumann:2005mz}, which has been recently investigated from the point of view of algebraic quantum field theory in \cite{Eltzner:2013soa}. While these works consider only the case of minimal coupling $\xi=0$, there are quite a lot of works on inflationary perturbations in the non-minimally coupled case. It is barely possible to cite them all at this point, so we would like to mention only the early works \cite{Futamase:1987ua, Salopek:1988qh,Makino:1991sg}. 

\subsection{Classification of perturbations}
\label{sec_classpert}

The starting point of the usual treatment of perturbations in inflation is to use the high symmetry of the spatial slices of FLRW spacetimes, given by the Euclidean group, in order to split the metric perturbations into scalar, vector and tensor components. The existence and uniqueness of this split is directly related to the existence and uniqueness of solutions to the Poisson equation on $\bbR^3$
$$\Delta u=\vec{\nabla}^2 u=f\,.$$
While it is probably widely known that a unique solution $u$ which vanishes at infinity exists if $f$ is vanishing at infinity (and that $u$ is smooth if $f$ is smooth), it is maybe less widely known that a solution $u$ (which is unique up to harmonic functions $\Delta u=0$) exists also for $f$ which do not vanish at infinity, see e.g. \cite[Corollary 10.6.8]{Hormander}. Hence we explicitly state this splitting result\footnote{We state the result only for smooth functions, but it holds also for distributions, cf. \cite[Section 10.6]{Hormander}.}.
\begin{propo}\label{prop_splitbasic}The following results hold for smooth vector and tensor fields on $\bbR^3$.
\begin{enumerate}
\item Every vector field $f\in C^\infty(\bbR^3,\bbR^3)$ can be split as
$$f_i=\partial_i B + V_i\,,\qquad B\in C^\infty(\bbR^3,\bbR)\,,\quad V\in  C^\infty(\bbR^3,\bbR^3)\,,\quad \partial^i V_i=0\,,$$
where $B$ and $V$ are defined as
$$\Delta B:= \partial^if_i\,,\qquad V_i:=f_i-\partial_i B\,.$$
The splitting is unique if $f$ vanishes at infinity and one requires that $B$ and $V$ vanish at infinity. $B$ and $V$ have compact support if and only if $\partial^if_i=\Delta s$ with $s$ compactly supported.
\item Every symmetric tensor field $f\in C^\infty(\bbR^3,\bigvee^2\bbR^3)$ can be split as
$$f_{ij}=\partial_i \partial_j E + \delta_{ij}D+2 \partial_{(i}W_{j)}+T_{ij}\,,\quad E,D\in C^\infty(\bbR^3,\bbR)$$
$$W\in  C^\infty(\bbR^3,\bbR^3)\,,\quad \partial^i W_i=0\,,\quad T\in C^\infty(\bbR^3,{\textstyle\bigvee^2}\bbR^3)\,,\quad T_i^i=0\,,\quad \partial^i T_{ij}=0\,,$$
where $E,D,W,T$ are defined as
$$2\Delta D := \Delta f_i^i-\partial^i\partial^j f_{ij}\,,\quad \Delta E := f^i_i-3 D\,,$$
$$\Delta W_i := \partial^j f_{ij}-\partial_i \Delta E - \partial_i D\,,\quad T_{ij}:=f_{ij}-\partial_i \partial_j E + \delta_{ij}D+2 \partial_{(i}W_{j)}\,.$$
The splitting is unique if $f$ vanishes at infinity and one requires that $D,E,W,T$ vanish at infinity. %$D,E,W,T$ have compact support if and only if $\partial^if_{ij}=\Delta s^1_j$, $\partial^i\partial^j f_{ij}=\Delta s^2$, $f^i_i=\Delta s^2$ with $s^1$, $s^2$, $s^3$ compactly supported.
\end{enumerate}
\end{propo}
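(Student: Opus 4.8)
The plan is to construct the potentials $B,V$ and $E,D,W,T$ directly by solving the displayed Poisson equations, to verify the stated transversality and trace conditions by differentiating those equations, and to settle uniqueness and the compact-support criterion through the Liouville property of harmonic functions on $\bbR^3$. Throughout I would rely on two facts: (a) the solvability statement cited just before the proposition, \cite[Corollary 10.6.8]{Hormander}, which for every smooth source $f$ produces a solution $u$ of $\Delta u = f$, smooth by elliptic regularity and unique up to an additive harmonic function; and (b) the fact that a harmonic function on $\bbR^3$ which vanishes at infinity is identically zero.

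For part 1 I would set $\Delta B := \partial^i f_i$ (solvable and smooth by (a)) and $V_i := f_i - \partial_i B$. The transversality is then immediate and exact, $\partial^i V_i = \partial^i f_i - \Delta B = 0$, with no harmonic ambiguity entering the constraint. Uniqueness under decay follows because two admissible $B$ differ by a function whose Laplacian is $\partial^i(V_i'-V_i)=0$, i.e.\ a harmonic function vanishing at infinity, hence zero by (b); then $V$ is determined as well. For the compact-support criterion I would again use the uniqueness in (a): if $\partial^i f_i = \Delta s$ with $s$ compactly supported, then $s$ and the decay-normalised $B$ solve the same Poisson equation and both vanish at infinity, so $B=s$ is compactly supported; conversely $s:=B$ exhibits the required $s$.

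For part 2 I would proceed in the order $D,E,W,T$ dictated by the displayed definitions, solving each Poisson equation by (a). The verification is a chain of differentiations: taking the trace of the decomposition and using $\Delta E = f^i_i - 3D$ gives $T^i_i = -2\,\partial^i W_i$; taking one divergence and using the definition of $W$ gives $\partial^j T_{ij} = -\partial_i(\partial^j W_j)$; and taking the double divergence together with the definition of $D$ shows the remainder is consistent. The decisive identity is obtained by applying $\partial^i$ to the $W$-equation and substituting the $E$- and $D$-equations, which yields $\Delta(\partial^i W_i)=0$; thus $\partial^i W_i$ is harmonic, and all three constraints $\partial^i W_i=0$, $T^i_i=0$, $\partial^j T_{ij}=0$ hold as soon as this harmonic function vanishes. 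Under the decay hypothesis this is disposed of by (b). Uniqueness under decay then follows by the same Liouville argument applied to the difference of two decompositions: the trace and double-divergence identities force $\delta D$ and then $\delta E$ to be harmonic and decaying, hence zero, after which a single divergence forces $\delta W$ harmonic and decaying, hence zero, and finally $\delta T = 0$.

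I expect the genuinely non-formal point to be the passage from ``$\partial^i W_i$ harmonic'' to ``$\partial^i W_i = 0$'' in the \emph{absence} of a decay assumption, i.e.\ the bare existence claim for arbitrary smooth $f$, where Liouville is unavailable. Here one must turn the non-uniqueness in (a) into an advantage: the constructed decomposition can be corrected by $W_i \mapsto W_i + u_i$, $T_{ij}\mapsto T_{ij} - 2\partial_{(i}u_{j)}$ with $u$ a harmonic vector field satisfying $\partial^i u_i = -\partial^i W_i$, which simultaneously restores $\partial^i W_i=0$, $T^i_i=0$ and $\partial^j T_{ij}=0$. The existence of such a harmonic $u$ with prescribed harmonic divergence is the step I would write out most carefully; it can be obtained by a direct construction (integrate the harmonic function $\partial^i W_i$ in one coordinate and correct by a term depending only on the remaining two, using the solvability in (a) in the two transverse variables). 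This is where the harmonic zero mode, which obstructs uniqueness, is shown not to obstruct existence.
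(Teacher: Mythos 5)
Your proposal is correct, and it follows the route the paper intends: the paper in fact states Proposition \ref{prop_splitbasic} without any explicit proof, treating it as an immediate consequence of the solvability of the Poisson equation for arbitrary smooth sources (\cite[Corollary 10.6.8]{Hormander}), with the Liouville property (a harmonic function on $\bbR^3$ vanishing at infinity is zero) invoked only implicitly for the uniqueness claims and used explicitly elsewhere (e.g.\ in the proof of Lemma \ref{prop_nondegtensor}). What your write-up adds, and what a literal reading of the proposition's defining equations actually requires, is the resolution of the harmonic ambiguity in part 2: for $f$ not vanishing at infinity, an arbitrary solution of the displayed Poisson equations only yields $\Delta(\partial^i W_i)=0$, $T^i_i=-2\partial^iW_i$ and $\partial^jT_{ij}=-\partial_i\partial^jW_j$, so the constraints hold exactly only after correcting $W_i\mapsto W_i+u_i$, $T_{ij}\mapsto T_{ij}-2\partial_{(i}u_{j)}$ by a component-wise harmonic $u$ with $\partial^iu_i=-\partial^iW_i$ (note that $u$ harmonic leaves the $W$-Poisson equation intact, so the divergence identity for the corrected $T$ still applies). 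Your construction of such a $u$ -- integrating the harmonic divergence in $x_1$ and correcting by a function of $(x_2,x_3)$ solving a two-dimensional Poisson equation, which again exists by the Hörmander-type solvability -- is sound: one checks $\Delta u_1=\partial_1 h(0,x_2,x_3)+(\partial_2^2+\partial_3^2)g$, which vanishes for the appropriate $g$. Your verification chain for the constraints and the Liouville-based uniqueness arguments (forcing first $\delta D$, then $\delta E$, then $\delta W$, then $\delta T$ to vanish) are exactly right. In short: the proposal is not merely correct, it is more complete than the paper's own treatment, since the paper's remark after the proposition acknowledges the harmonic non-uniqueness but never addresses the existence of an exactly transverse-traceless splitting in the non-decaying case.
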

The non-uniqueness of the splitting in case of vector and tensor fields $f$ not vanishing at infinity implies that e.g. a vector field $f$ can be simultaneously of ``scalar'' and ``vector'' type if $f_i=\partial_i B$ with $\Delta B=0$, $B\neq 0$. Similarly, a symmetric tensor field $f$ can be simultaneously of  ``scalar'', ``vector'' and ``tensor'' type if $f_{ij}=\partial_i\partial_j E$ with $\Delta E=0$, $E\neq 0$.

Because $\Delta$ commutes with conformal time derivatives and is invertible on functions which vanish at spatial infinity (and because $\Delta$ and its inverse are continuous), the above splitting can be directly applied to the $\gamma_{0i}$ and $\gamma_{ij}$ components of all field configurations $\Gamma = (\gamma_{ab},\varphi)^T \in  \Gamma(\V)$ and to the $\varsigma_i$ components of all gauge transformation parameters $\varsigma\in\Gamma(\W)$ which vanish at spatial infinity. The resulting scalar, vector and tensor parts of such $\Gamma$ and $\varsigma$ are then smooth functions on $M$ themselves. For $\Gamma$ and $\varsigma$ not vanishing at spatial infinity it is less clear whether a splitting in smooth scalar, vector and tensor parts exist because the non-uniqueness of the splitting for each $\tau$ impedes a proof that the splitting is continuous in $\tau$\footnote{I would like to thank Marco Benini for pointing out this issue to me.}. Presumably it is possible to prove the existence of a splitting smooth in $\tau$ by repeating the proof of solvability of the Poisson equation via compact exhaustions of $\bbR^3$ in \cite[Section 10.6.]{Hormander} in a manner which is uniform in $\tau$. We shall however refrain from doing this here, as the existence of a smooth splitting for $\Gamma$ and $\varsigma$ which vanish at spatial infinity will be sufficient for our results. Notwithstanding, we shall provide a few definitions and statements covering the split form of general sections which are valid even if this split form is not proven to exist for all smooth sections.

With this in mind, we define the splitting here directly for the field perturbations themselves rather than for the redefined perturbations $\Theta=\widebar {\Gamma}$, in order to match the conventions with other works on perturbations in Inflation. In general we will work exclusively with original rather than redefined quantities in this section as we have seen in Section \ref{sec_removingredefinition} that the field redefinition is invertible and irrelevant for the classical and quantum field theory of the linearised Einstein-Klein-Gordon system.
That said, we split the metric perturbations $\gamma_{ab}$ as
\beq\label{eq_def_split}\gamma_{ab}=a(\tau)^2\begin{pmatrix}
-2 A & \left(-\partial_i B+V_i\right)^T\\
-\partial_i B+V_i & 2\left(\partial_i\partial_j E
+\delta_{ij}D+\partial_{(i} W_{j)}+T_{ij}\right)
\end{pmatrix}\,,\eeq
where
$$A,B,D,E\in C^\infty(M,\bbR)\,,\quad V, W\in  C^\infty(M,\bbR^3)\,,\quad \partial^i V_i=\partial^i W_i=0$$
$$T\in C^\infty(M,{\textstyle\bigvee^2}\bbR^3)\,,\quad T_i^i=0\,,\quad \partial^i T_{ij}=0$$
and the signs and factors are conventional and convenient. Similarly, we split the gauge transformation parameters $\varsigma\in\Gamma(\W)$ as
\beq\label{eq_def_splitgauge}\varsigma_a=a(\tau)^2\begin{pmatrix}-r\\\partial_i s+z_i\end{pmatrix},\quad r,s\in C^\infty(M,\bbR)\,,\quad z\in C^\infty(M,\bbR^3)\,,\quad\partial^i z_i=0\,.\eeq
We say that $\Gamma = (\gamma_{ab},\varphi)^T\in \Gamma(\V)$ ...
\begin{itemize}
\item[...] is of scalar type if $\gamma_{ab}$ can be split as \eqref{eq_def_split} with $V_i=W_i=T_{ij}=0$.
\item[...] is of vector type if $\varphi=0$ and $\gamma_{ab}$ can be split as \eqref{eq_def_split} with $A=B=D=E=T_{ij}=0$.
\item[...] is of tensor type if $\varphi=0$ and $\gamma_{ab}$ can be split as \eqref{eq_def_split} with $A=B=D=E=V_i=W_i=0$.
\end{itemize}
And define $\varsigma\in\Gamma(\W)$ of scalar/vector type analogously. This motives the following definitions of section spaces, where ${\cal X}$ stands for an arbitrary trivial vector bundle over $M$.
$$\Gamma_\infty({\cal X}):=\{\Gamma\in\Gamma({\cal X})\,|\,\partial_{i_1}\cdots\partial_{i_n}\Gamma(\tau,\vec{x}) \text{ vanishes for $|\vec{x}|\to\infty$ for all $n\ge 0$}\}$$
\beq\label{eq_def_svtspaces}\Gamma^{S/V/T}(\V):=\{\Gamma\in\Gamma(\V)\,|\,\Gamma \text{ is of scalar/vector/tensor type}\}$$
$$\Gamma^{S/V}(\W):=\{\varsigma\in\Gamma(\W)\,|\,\varsigma \text{ is of scalar/vector type}\}\eeq
$$\Gamma^{S/V/T}_{\infty/\sc/0}(\V):= \Gamma_{\infty/\sc/0}(\V)\cap\Gamma^{S/V/T}(\V)\qquad \Gamma^{S/V}_{\infty/\sc}(\W):= \Gamma_{\infty/\sc}(\W)\cap\Gamma^{S/V}(\W)$$
The strong condition that elements in $\Gamma_\infty(\V)$ and $\Gamma_\infty(\W)$ vanish at spatial infinity with all spatial derivatives is not necessary as in the following we shall only need the vanishing of at most two derivatives. However imposing the stronger condition simplifies the discussion as one does not have to track the number of derivatives vanishing at infinity. Similarly whenever we speak of ``vanishing at spatial infinity'' in the following we shall mean ``vanishing at spatial infinity with all derivatives vanishing at spatial infinity''.

Naturally, we have e.g. $\Gamma_0(\V)\subset \Gamma_\sc(\V)\subset\Gamma_\infty(\V)$. Proposition \ref{prop_splitbasic} and the subsequent discussion imply
$$\Gamma_\infty(\V)=\Gamma^S_\infty(\V)\oplus\Gamma_\infty^V(\V)\oplus \Gamma_\infty^T(\V)\,,\qquad \Gamma_\infty(\W)=\Gamma^S_\infty(\W)\oplus\Gamma_\infty^V(\W)\,,$$
$$\Gamma^S(\V)\cap\Gamma^V(\V)\cap \Gamma^T(\V)\neq \{0\}\,,$$
\beq\label{eq_splitintersect}\Gamma_\infty^S(\V)\cap\Gamma_\infty^V(\V)=\Gamma_\infty^S(\V)\cap\Gamma_\infty^T(\V)=\Gamma_\infty^V(\V)\cap\Gamma_\infty^T(\V) =\{0\}\,,\eeq
$$\Gamma_{0/\sc}^S(\V)\oplus\Gamma_{0/\sc}^V(\V)\oplus \Gamma_{0/\sc}^T(\V)\subsetneq  \Gamma_{0/\sc}(\V)\,.$$
In particular, the scalar/vector/tensor part of a section with (space-like) compact support does in general not have a (space-like) compact support, whereas is is presumably true but not proven that $\Gamma^S(\V)\oplus\Gamma^V(\V)\oplus \Gamma^T(\V)=\Gamma(\V)$ and $\Gamma^S(\W)\oplus\Gamma^V(\W)=\Gamma(\W)$. The uniqueness of the splitting for $\Gamma_\infty(\V)$ and $\Gamma_\infty(\W)$ implies that there exist surjective projectors
\beq\label{eq_def_proj}\PP^{S/V/T}_\V:\Gamma_\infty(\V)\to\Gamma^{S/V/T}_\infty(\V)\,,\qquad \PP^{S/V}_\W:\Gamma_\infty(\W)\to\Gamma^{S/V}_\infty(\W)\,,\eeq
which can be written explicitly in terms of spatial derivatives and the (chosen) inverse of $\Delta$.

In order to distinguish configurations of the perturbation variables, i.e. classical states, from test sections, meaning labels of observables, we use the following notation for the splitting of test sections $h=(k_{ab},f)^T\in\Gamma_0(\V)$.
\beq\label{eq_def_split0}k_{ab}=a(\tau)^2\begin{pmatrix}
-2 c & \left(-\partial_i b+v_i\right)^T\\
-\partial_i b+v_i & 2\left(\partial_i\partial_j e
+\delta_{ij}d+\partial_{(i} w_{j)}+t_{ij}\right)
\end{pmatrix}\,,\eeq
$$c,b,d,e\in C_\infty^\infty(M,\bbR)\,,\quad v, w\in  C_\infty^\infty(M,\bbR^3)\,,\quad \partial^i v_i=\partial^i w_i=0$$
$$t\in C_\infty^\infty(M,{\textstyle\bigvee^2}\bbR^3)\,,\quad t_i^i=0\,,\quad \partial^i t_{ij}=0\,.$$ 
Recall that, as discussed in Section \ref{sec_removingredefinition}, a test section $h\in\Gamma_0(\V)$ labels an observable independent of whether or not one considers original or redefined perturbation variables.

Note that the vanishing of the splitting components at spatial infinity and the fact that the kernel of $\partial_i$ consists of locally constant functions imply
\beq\label{eq_bdecompsupp}\Gamma\in\Gamma^S_\sc(\V)\;\;\Rightarrow\;\; B,D,E\in C^\infty_\sc(M,\bbR)\,,\qquad h\in\Gamma^S_0(\V)\;\;\Rightarrow\;\; b,d,e\in C^\infty_0(M,\bbR)\,,\eeq
$$\Gamma\in\Gamma^V_\sc(\V)\;\;\Rightarrow\;\; V,W\in C^\infty_\sc(M,\bbR^3)\,,\qquad h\in\Gamma^V_0(\V)\;\;\Rightarrow\;\; v,w\in C^\infty_0(M,\bbR^3)\,.$$ By a similar argument, using that $\Gamma\in\Gamma_{\sc/0}(\V)$ implies that both $\gamma^i_i$ and $\partial^i \gamma_{ij}$ have (space-like) compact support, one can obtain
\beq\label{eq_bdecompsupp2}\Gamma\in\Gamma_{\sc/0}(\V)\;\;\Rightarrow\;\; A,D,\Delta E,\varphi\in C^\infty_{\sc/0}(M,\bbR)\,.\eeq

Finally, we observe the following important result.

\begin{lem}\label{prop_nondegtensor}On FLRW backgrounds, $\langle\cdot,\cdot\rangle_\V$ and $\langle\cdot,\cdot\rangle_\W$ satisfy the following relations.
\begin{enumerate}
\item The splitting $\Gamma_\infty(\V)=\Gamma^S_\infty(\V)\oplus\Gamma_\infty^V(\V)\oplus \Gamma_\infty^T(\V)$ is orthogonal w.r.t. to $\langle\cdot,\cdot\rangle_\V$ and the splitting $\Gamma_\infty(\W)=\Gamma^S_\infty(\W)\oplus\Gamma_\infty^V(\W)$ is orthogonal w.r.t. to $\langle\cdot,\cdot\rangle_\W$. In particular, $\PP^{S/V/T}_\V$ ($\PP^{S/V}_\W$) is formally selfadjoint w.r.t. $\langle\cdot,\cdot\rangle_\V$ ($\langle\cdot,\cdot\rangle_\W$), that is, for any $\varsigma_1$, $\varsigma_2\in\Gamma_\infty(\W)$, $\Gamma_1$, $\Gamma_2\in\Gamma_\infty(\V)$,
$$\left\langle \Gamma_1, \PP^{S/V/T}_\V\Gamma_2\right\rangle_\V<\infty\quad\Rightarrow \quad  \left\langle \Gamma_1, \PP^{S/V/T}_\V\Gamma_2\right\rangle_\V= \left\langle \PP^{S/V/T}_\V \Gamma_1, \Gamma_2\right\rangle_\V\,,$$
$$\left\langle \varsigma_1, \PP^{S/V}_\W\varsigma_2\right\rangle_\W<\infty\quad\Rightarrow \quad \left\langle \varsigma_1, \PP^{S/V}_\W\varsigma_2\right\rangle_\W= \left\langle \PP^{S/V}_\W\varsigma_1, \varsigma_2\right\rangle_\W\,.$$
Moreover, $\langle \Gamma^S(\V),\Gamma^{V/T}_0(\V)\rangle_\V=\langle \Gamma^V(\V),\Gamma^{S/T}_0(\V)\rangle_\V=\langle \Gamma^T(\V),\Gamma^{S/V}_0(\V)\rangle_\V=\{0\}$ and $\langle \Gamma^S(\W),\Gamma^{V}_0(\W)\rangle_\W=\langle \Gamma^V(\W),\Gamma^{S}_0(\W)\rangle_\W=\{0\}$.
\item $\langle\cdot,\cdot\rangle_\V$ is non-degenerate on $\Gamma^S_\infty(\V)\times\Gamma^S_\infty(\V)$, $\Gamma^V_\infty(\V)\times\Gamma^V_\infty(\V)$ and $\Gamma^T_\infty(\V)\times\Gamma^T_\infty(\V)$ and $\langle\cdot,\cdot\rangle_\W$ is non-degenerate on $\Gamma^S_\infty(\W)\times\Gamma^S_\infty(\W)$ and $\Gamma^V_\infty(\W)\times\Gamma^V_\infty(\W)$, i.e.
$$\Gamma_1\in\Gamma^{S/V/T}_\infty(\V)\quad\text{and}\quad\langle \Gamma_1, \Gamma_2\rangle_\V=0\quad\forall\; \Gamma_2\in\Gamma^{S/V/T}_0(\V)\qquad\Rightarrow\qquad \Gamma_1=0\,,$$
$$\varsigma_1\in\Gamma^{S/V}_\infty(\W)\quad\text{and}\quad\langle \varsigma_1, \varsigma_2\rangle_\W=0\quad\forall\; \varsigma_2\in\Gamma^{S/V}_0(\W)\qquad\Rightarrow\qquad \varsigma_1=0\,.$$
\end{enumerate}
\end{lem}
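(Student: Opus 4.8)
The plan is to work in the FLRW chart, where both bilinear forms collapse to weighted spatial $L^2$-pairings of the reduced potentials, and then to exploit the transverse/traceless constraints by integrating by parts in the spatial variables, using that the scale factor $a=a(\tau)$ is inert under $\partial_i$. First I would substitute the splittings \eqref{eq_def_split} and \eqref{eq_def_splitgauge} into the definitions \eqref{eq_bilinearV} and \eqref{eq_formW}. Writing $\gamma_{ab}=a^2\hat\gamma_{ab}$ and using $g^{ab}=a^{-2}\eta^{ab}$ with $\eta=\mathrm{diag}(-1,1,1,1)$ together with $\vol=a^4\,d^4x$, all powers of $a$ cancel in the tensor contraction and one obtains $\langle\Gamma_1,\Gamma_2\rangle_\V=\int_M a^4\,d^4x\,(\eta^{ab}\eta^{cd}\hat\gamma_{1,ac}\hat\gamma_{2,bd}+\varphi_1\varphi_2)$, and likewise $\langle\varsigma_1,\varsigma_2\rangle_\W=\int_M a^6\,d^4x\,\eta^{ab}\hat\varsigma_{1,a}\hat\varsigma_{2,b}$. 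Splitting $\eta^{ab}\eta^{cd}\hat\gamma_{1,ac}\hat\gamma_{2,bd}$ into its $00$, $0i$ and $ij$ pieces and inserting the explicit components expresses each form as a spatial-derivative polynomial in the potentials $A,B,D,E,\varphi$ (resp.\ $V,W,T$; resp.\ $r,s,z$).

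For part 1, I would show that the cross-pairings between different types vanish. Every such cross term carries a $\partial_i$ acting on a scalar potential paired against a constrained field; integrating that $\partial_i$ by parts (with no boundary term, since at least one argument has compact support or both decay, cf.\ \eqref{eq_bdecompsupp}) produces exactly the divergence or trace that is constrained to vanish: the $\gamma_{0i}$ scalar--vector term gives $\int B\,\partial^iV_i=0$, while in $\gamma_{ij}$ the scalar--vector, scalar--tensor and vector--tensor terms reduce, after one or two integrations by parts, to factors of $\partial^iW_i$, $T^i_i$ and $\partial^iT_{ij}$, all zero; the same computation applies to $\W$ via $\int s\,\partial^iz_i=0$. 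Orthogonality of the splittings and the stated vanishing relations follow, and self-adjointness of $\PP^{S/V/T}_\V$ and $\PP^{S/V}_\W$ is then immediate: decomposing $\Gamma_i=\Gamma_i^S+\Gamma_i^V+\Gamma_i^T$ and using orthogonality gives $\langle\Gamma_1,\PP^X\Gamma_2\rangle_\V=\langle\Gamma_1^X,\Gamma_2^X\rangle_\V=\langle\PP^X\Gamma_1,\Gamma_2\rangle_\V$, the finiteness hypothesis guaranteeing that the rearrangement of the now convergent integrals is legitimate.

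For part 2, the form restricted to a fixed type decouples into independent blocks. In the scalar case one gets an $A$-block $\int a^4A_1A_2$, a $B$-block $-2\int a^4\partial_iB_1\partial_iB_2$, a $\varphi$-block $\int a^4\varphi_1\varphi_2$, and a coupled $(D,E)$-block which after integration by parts reads
\[\int_M a^4\big(4(\Delta E_1)(\Delta E_2)+4(\Delta E_1)D_2+4D_1(\Delta E_2)+12D_1D_2\big)\,d^4x.\]
Since the scalar test potentials $c,b,d,e,f$ vary freely in $C_0^\infty$ (cf.\ \eqref{eq_bdecompsupp}), the fundamental lemma kills $A_1$ and $\varphi_1$ at once, and for the remaining blocks yields $\Delta B_1=0$ together with $\Delta E_1+3D_1=0$ and $\Delta(\Delta E_1+D_1)=0$, whence $D_1=0$ and $\Delta E_1=\Delta B_1=0$. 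As $\Gamma_1\in\Gamma_\infty(\V)$ forces $B_1,D_1,E_1$ to vanish at spatial infinity, these harmonic functions on $\bbR^3$ are identically zero by the maximum principle, giving $\Gamma_1=0$. The $\W$-case is identical, with blocks $-\int a^6 r_1r_2$, $-\int a^6 s_1\Delta s_2$ and a transverse $z$-block treated by the curl argument below.

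The main obstacle, and the only genuinely non-routine point, is the non-degeneracy for the vector and tensor types, where the test sections are constrained to be transverse (resp.\ transverse-traceless), so the fundamental lemma cannot be applied to unconstrained variations. The remedy is to supply enough admissible test sections generated by differentiation. For vectors I would take $V_2=\nabla\times\Psi$ and $W_2=\nabla\times\Psi$ with $\Psi\in C_0^\infty(M,\bbR^3)$, which are automatically divergence-free and compactly supported; integrating by parts moves the curl onto $\Gamma_1$ and forces $\nabla\times V_1=0$ and $\Delta(\nabla\times W_1)=0$, so that $V_1,W_1$ are curl-free, divergence-free and decaying, hence harmonic and zero. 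For tensors the form is the bare weighted product $4\int a^4 T_{1,ij}T_{2,ij}$, and I would use transverse test tensors of double-curl type $T_{2,ij}=\epsilon_{ikl}\epsilon_{jmn}\partial_k\partial_m S_{ln}$ with $S\in C_0^\infty$; two integrations by parts give $\int a^4(\nabla\times\nabla\times T_1)_{ij}S_{ij}=0$ for all symmetric $S$, and since on transverse-traceless tensors $\nabla\times\nabla\times T_1=-\Delta T_1$, one concludes $\Delta T_1=0$ and then $T_1=0$ by decay. The point requiring care is that these differentially generated test fields are genuinely admissible and sufficiently rich: their transversality is automatic, and the trace of the double-curl drops out of the pairing precisely because $T_1$ is traceless-transverse, so no separate tracelessness of the test tensor is needed.
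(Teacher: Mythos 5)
Your part 1 and the scalar and vector cases of part 2 are sound, and they follow the same route as the paper: part 1 is exactly the paper's computation (explicit component form of the bilinear forms plus integration by parts against the constraints), and your scalar/vector arguments are legitimate instances of what the paper dismisses as "the same fashion" — for scalar type the potentials $c,b,d,e,f$ of a compactly supported scalar test section range freely over $C^\infty_0$ by \eqref{eq_bdecompsupp}, and for vector type your curls $\nabla\times\Psi$ are compactly supported \emph{and} divergence-free, hence genuinely elements of $\Gamma^V_0(\V)$ resp. $\Gamma^V_0(\W)$, so the fundamental lemma plus the Liouville property of $\Delta$ closes those cases.

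The tensor case, however, contains a genuine gap, and it sits exactly at the point you flagged as "requiring care". Your test tensors $T_{2,ij}=\epsilon_{ikl}\epsilon_{jmn}\partial_k\partial_m S_{ln}$ are transverse but \emph{not} traceless: their trace is $\Delta S_{ll}-\partial_k\partial_l S_{kl}\neq 0$, so $T_2\notin\Gamma^T_0(\V)$. The hypothesis of part 2 only asserts vanishing of $\langle\Gamma_1,\cdot\rangle_\V$ on $\Gamma^T_0(\V)$, and therefore gives no information whatsoever about $\int_M\vol\, T_1^{ij}T_{2,ij}$; your subsequent integration by parts is a correct identity, but the left-hand side is not known to be zero. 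Your remedy — "the trace drops out because $T_1$ is traceless-transverse" — conflates that computational identity with admissibility of the test section: subtracting the pure-trace part $\tfrac13\delta_{ij}\theta$ leaves a tensor that is traceless but no longer transverse, while the genuine TT part of $T_2$, obtained by subtracting the transverse-trace tensor $(\delta_{ij}\Delta-\partial_i\partial_j)u$ with $\Delta u=\theta/2$, is \emph{not} compactly supported, since $u$ is a Newtonian potential of the non-trivial trace $\theta$. Hence no element of $\Gamma^T_0(\V)$ reproduces the pairing with $T_2$, and the hypothesis cannot be brought to bear. This is precisely the difficulty the paper's proof is engineered around: there one tests against $\Delta^2 f_{(ij)}$ for arbitrary $f\in C^\infty_0$, because the extra Laplacians guarantee, via Proposition~\ref{prop_splitbasic}, that the tensor part of the splitting of $\Delta^2 f_{(ij)}$ is compactly supported (hence admissible), while the scalar and vector parts are disposed of by the orthogonality of part 1. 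Your construction can be repaired in the same spirit: take $T_{2,ij}=\epsilon_{ikl}\epsilon_{jmn}\partial_k\partial_m\Delta S_{ln}$, whose trace is $\Delta$ of a compactly supported function, so that the transverse-trace subtraction above \emph{is} compactly supported; the difference is then an admissible test tensor pairing identically with $T_1$, the hypothesis yields $\Delta^2 T_1=0$, and two applications of the decay argument give $\Delta T_1=0$ and then $T_1=0$.
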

\begin{proof}
{Proof of 1}: The statement follows by partial integration from
$$\phantom{\partial^{(i}W^{j)}_1}\langle\Gamma_1,\Gamma_2\rangle_\V=\int\limits_M\vol\left[ 4 A_1 A_2-2\left(\partial^i B_1 - V^i_{1}\right)\left(\partial_i B_2 - V_{2i}\right)+\phantom{\partial^{(i}W^{j)}_1}\right.$$$$\left.+4\left(\partial^i\partial^jE_1+\delta^{ij}D_1+\partial^{(i}W^{j)}_1 + T_1^{ij}\right)\left(\partial_i\partial_jE_2+\delta_{ij}D_2+\partial_{(i}{W_{2j)}}+ T_{2ij}\right)\right]\,,$$
$$\langle\varsigma_1,\varsigma_2\rangle_\W=\int\limits_M\vol\left[a^2r_1 r_2+a^2\left(\partial^i s_1+z_1^i\right)\left(\partial_i s_2+z_{2i}\right)\right]\,.$$
{Proof of 2}: We consider only the tensor case on $\V$, the other cases can be proven in the same fashion. If $(\Gamma,h)\in\Gamma^T_\infty(\V)\times\Gamma^T_0(\V)$, then $$\langle\Gamma,h\rangle_\V=\int\limits_M\vol\; T^{ij}t_{ij}\,.$$
We can replace $t_{ij}$ by $\Delta^2 f_{ij}$ for $f\in C^\infty_0(M,\bbR^3\otimes\bbR^3)$ arbitrary, because the antisymmetric part of $f$ does not contribute to the integral, and because the splitting of $\Delta^2 f_{(ij)}$ gives a compactly supported tensor part by Proposition \ref{prop_splitbasic} and the scalar and vector parts do not matter by orthogonality. By non-degeneracy of $(f_1,f_2)\mapsto \int_M\vol\,{f_1}^{ij}{f_2}_{ij}$ for arbitrary $f_1$, $f_2\in C^\infty(M,\bbR^3\otimes\bbR^3)$ with compact overlapping support and the fact that the only solution of the Laplace equation vanishing at spatial infinity is zero, $\langle\Gamma, h\rangle_\V=0$ for $\Gamma\in\Gamma^T_\infty(\V)$ and all $h\in\Gamma^T_0(\V)$ thus implies $\Gamma=0$.
\end{proof}

\subsection{Gauge invariant variables and splitting of the equations}
\label{sec_spliteom}

Using \eqref{eq_def_split} and \eqref{eq_def_splitgauge} one finds that, under a gauge transformation $\Gamma\mapsto \Gamma+K \varsigma$, the components of $\Gamma$ transform as
   \beq \label{eq_gaugetrafoFLRW}A\mapsto A+(\partial_\tau+\H)r\,,\qquad B\mapsto B+r-s^\prime\,,\qquad D\mapsto D+\H r\,,\qquad E\mapsto E+s\,,\eeq
$$\varphi\mapsto\varphi +  \phi^\prime r\,,\qquad V_i\mapsto V_i+z_i^\prime \,,\qquad W_i\mapsto W_i+ z_i\,,\qquad T_{ij}\mapsto T_{ij}\,,$$
We see that gauge transformations preserve the type of a section, i.e. $$K\left[\Gamma^{S/V}(\W)\right]\subset\Gamma^{S/V}(\V)\,,$$
in fact
\beq\label{eq_invK}\PP_\V^{S/V}\circ K|_{\Gamma_\infty(\W)}=K\circ \PP_\W^{S/V}\qquad \PP_\V^{T}\circ K|_{\Gamma_\infty(\W)}=0\,.\eeq
 This implies in particular that the tensor components of a configuration are already gauge-invariant.

We can also directly see that the conformal gauge\footnote{The name conformal gauge is motivated by the fact that this gauge is invariant under gauge transformations $\Gamma\mapsto \Gamma+K\varsigma$ with $\varsigma$ a constant multiple of the conformal Killing vector $\partial_\tau=(1,0,0,0)^T$.} $B=0$, $E=0$ is always possible by choosing $\varsigma$ such that $s=-E$, $r=-B-E^\prime$. Moreover, we can clearly set either $W_i$ or $V_i$ to zero by choosing $z_i=-W_i$ or $z_i(\tau,\vec{x})=-\int^\tau_{\tau_0}d\tau_1 \,V_i(\tau_1,\vec{x})$. These gauge conditions can be satisfied by means of a gauge transformation preserving the decay/support properties in spatial directions. A further possibility is the synchronous gauge $A=B=V_i=0$, i.e. $\gamma_{0a}=0$. This can be achieved e.g. by performing first a gauge transformation $\Gamma\mapsto\Gamma+K\varsigma$ with $a^2\partial_\tau \varsigma_i /a^2=-\gamma_{0i}$ and $\varsigma_0=0$ and then a gauge transformation with $r(\tau,\vec{x}) = -1/a(\tau)\int^\tau_{\tau_0}d\tau_1\, a(\tau_1) A(t,\vec{x})$, $s(\tau,\vec{x}) =\int^\tau_{\tau_0}d\tau_1\, r(\tau_1,\vec{x})$, $z_i=0$. Proceeding in this way, we see that, for $\Gamma\in\Gamma_{\sc/\infty}(\V)$, the synchronous gauge condition can be satisfied by means of a single gauge transformation with parameter $\varsigma\in\Gamma_{\sc/\infty}(\W)$. We combine these observations into the following lemma.
\begin{lem}\label{prop_confgauge}The following results hold for all $\Gamma\in\Gamma_\infty(\V)$.
\begin{enumerate}
\item There exists $\varsigma\in\Gamma_\infty(\W)$ such that $\Gamma+K\varsigma$ can be split as \eqref{eq_def_split} with $B=E=0$. If $\Gamma\in\Gamma^S_{\sc/0}(\V)$, $\varsigma$ can be chosen in $\Gamma^S_{\sc/0}(\W)$.
\item There exists $\varsigma\in\Gamma_\infty(\W)$ such that $\Gamma+K\varsigma$ can be split as \eqref{eq_def_split} with $V_i=0$ or $W_i=0$. If $\Gamma\in\Gamma^V_{\sc}(\V)$, $\varsigma$ can be chosen in $\Gamma^V_\sc(\W)$.
\item There exists $\varsigma\in\Gamma_\infty(\W)$ such that $\Gamma+K\varsigma$ can be split as \eqref{eq_def_split} with $A=B=V_i=0$. If $\Gamma\in\Gamma_{\sc}(\V)$, $\varsigma$ can be chosen in $\Gamma_{\sc}(\W)$.
\end{enumerate}
\end{lem}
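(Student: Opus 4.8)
The plan is to treat all three statements as exercises in solving the component transformation laws \eqref{eq_gaugetrafoFLRW} for the scalar/vector parameters $r,s,z_i$ that parametrise $\varsigma$ via \eqref{eq_def_splitgauge}, and then to track the support or decay of the parameters so obtained. For $\Gamma\in\Gamma_\infty(\V)$ the components $A,B,D,E,\varphi,V_i,W_i,T_{ij}$ are honest smooth functions on $M$ that vanish at spatial infinity, so the existence of a suitable $\varsigma\in\Gamma_\infty(\W)$ is never in doubt; the only real content is that the parameters built from these components inherit the required section space. Since every operation used below — pointwise algebra, spatial derivatives, and integration in $\tau$ — preserves decay at spatial infinity, the $\Gamma_\infty$ half of each claim follows automatically, and the work is concentrated in the support bookkeeping for the $\sc$ and $0$ refinements.

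For part 1 I would set $s=-E$ and $r=-B-E^\prime$ (and $z_i=0$); then \eqref{eq_gaugetrafoFLRW} gives $E\mapsto 0$ and $B\mapsto 0$ at once. Crucially this choice is purely algebraic-differential in $B$ and $E$ with no integration in $\tau$, so compact or spacelike-compact support of $B,E$ — guaranteed for $\Gamma\in\Gamma^S_{\sc/0}(\V)$ by \eqref{eq_bdecompsupp} — passes directly to $r,s$, yielding $\varsigma\in\Gamma^S_{\sc/0}(\W)$. For part 2 I would take $r=s=0$: to reach $W_i=0$ set $z_i=-W_i$, and to reach $V_i=0$ instead solve $z_i^\prime=-V_i$ by integrating in $\tau$. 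Both choices respect $\partial^i z_i=0$ because $\partial^i V_i=\partial^i W_i=0$, and the first (algebraic) choice keeps spacelike-compact support for $\Gamma\in\Gamma^V_\sc(\V)$ by \eqref{eq_bdecompsupp}, which is exactly the refinement claimed.

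Part 3 is the delicate one and I expect the bookkeeping there to be the main obstacle. The plan is the two-step reduction to $\gamma_{0a}=0$: first, with $\varsigma_0=0$ (hence $r=0$, so $A$ is untouched), solve the first-order equation $\nabla_0\varsigma_i=-\gamma_{0i}$ in $\tau$ to enforce $B=V_i=0$; second, kill $A$ by solving $(\partial_\tau+\H)r=-A$ while preserving $\gamma_{0i}=0$ through the compensating choices $s^\prime=r$ (so that $B\mapsto r-s^\prime=0$) and $z_i=0$ (so that $V_i\mapsto z_i^\prime=0$). Both steps require integrating from a fixed reference time $\tau_0$, and this is precisely where compact support is lost: $\tau$-integration preserves spacelike-compact support but smears support in the time direction, which is why part 3 can only promise $\varsigma\in\Gamma_\sc(\W)$ for $\Gamma\in\Gamma_\sc(\V)$ and makes no $\Gamma_0$ claim. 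The two points I would check most carefully are that the second step does not reintroduce the off-diagonal components — this hinges on the relation $s^\prime=r$, $z_i=0$ together with $B=V_i=0$ from the first step — and that $\nabla_0\varsigma_i=-\gamma_{0i}$ is genuinely solvable as a linear ODE in $\tau$ with spacelike-compact solution.
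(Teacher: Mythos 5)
Your proposal follows essentially the same route as the paper: parts 1 and 2 use exactly the paper's parameter choices ($s=-E$, $r=-B-E^\prime$, respectively $z_i=-W_i$ or $z_i^\prime=-V_i$) with the support bookkeeping via \eqref{eq_bdecompsupp}, and part 3 is the paper's two-step synchronous-gauge construction (first kill $\gamma_{0i}$ with $\varsigma_0=0$, then kill $A$ via $(\partial_\tau+\H)r=-A$ with $s^\prime=r$, $z_i=0$, which is the paper's $r=-\tfrac1a\int^\tau_{\tau_0}d\tau_1\,aA$, $s=\int^\tau_{\tau_0}d\tau_1\,r$), including the correct observation that $\tau$-integration preserves space-like compact but not compact support. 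One small correction in part 3: the equation to solve is $2\nabla_{(0}\varsigma_{i)}=-\gamma_{0i}$, i.e. $a^2\partial_\tau\left(\varsigma_i/a^2\right)=-\gamma_{0i}$ as in the paper, not $\nabla_0\varsigma_i=-\gamma_{0i}$, since even with $\varsigma_0=0$ one has $\nabla_i\varsigma_0=-\H\varsigma_i\neq0$ on FLRW; as the corrected equation is still a linear first-order ODE in $\tau$, this changes nothing in the structure or the support argument.
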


The fact that $K$ commutes with the splitting \eqref{eq_invKdagger} and Lemma \ref{prop_nondegtensor} directly imply
\beq\label{eq_invKdagger}\PP_\W^{S/V}\circ K^\dagger|_{\Gamma_\infty(\V)}=K^\dagger\circ \PP_\V^{S/V}\,.\eeq
Thus the gauge-invariance condition $K^\dagger h=0$ for observables is satisfied if and only if the scalar and vector parts of $h$ satisfy this condition individually, viz.
\beq\label{eq_KdaggerFLRW}
K^\dagger h=0\qquad \Leftrightarrow \qquad\left\{\begin{array}{c}
-4(\partial_\tau+3\H)c + 2 \Delta b + 4\H(3d+\Delta e)+\phi^\prime f=0\\
\left(\partial_\tau + 4 \H\right)b+2\left(d+\Delta e\right)=0\\
\left(\partial_\tau + 4 \H\right)\vec{v}-\Delta\vec{w}^\prime=0
\end{array}\right..
\eeq

While the tensor components of the perturbation variables are already gauge-invariant, one can combine the scalar and vector components into the following well-known gauge-invariant quantities, where $\Psi$ and $\Phi$ are the so-called Bardeen potentials.
\beq\label{eq_def_Bardeen}\Psi:= A-(\partial_\tau + \H)(B+E^\prime)\qquad\Phi:= D-\H(B+E^\prime)\eeq$$\chi:= \varphi - \phi^\prime(B+E^\prime)\qquad X_i:=W_i^\prime-V_i$$

One can check that the equation of motion operator $P:\Gamma(\V)\to\Gamma(\V)$ is also compatible with the splitting\footnote{Essentially this is due to the fact that there are no non-trivial background vector fields $f\in C^\infty(M,\bbR^3)$, whereas the only non-trivial background tensor field $f\in C^\infty(M,\bbR^3\otimes \bbR^3)$ is the identity matrix.}, cf. Section \ref{sec_fullFLRW}, that is
\beq \label{eq_invP}P\left[\Gamma^{S/V/T}(\V)\right]\subset\Gamma^{S/V/T}(\V)\,,\qquad\PP^{S/V/T}_\V\circ P|_{\Gamma_\infty(\V)}=P\circ \PP^{S/V/T}_\V\,.\eeq Thus in particular the equations of motion $P\Gamma=0$ decouple for $\Gamma$ which vanish at spatial infinity and for $P\Gamma=0$ to hold the scalar, vector and tensor components of $\Gamma$ must satisfy individual equations of motion. The full scalar, vector and tensor parts of the equation $P\Gamma=0$, expressed in terms of the gauge invariant components $\Phi$, $\Psi$, $\chi$, $X_i$, $T_{ij}$ of $\Gamma$ are displayed in Section \ref{sec_fullFLRW}. For the vector and tensor degrees of freedom these imply
\beq\label{eq_EOMvec}\Delta X_i=0\qquad (\partial_\tau+2\H)\frac{\alpha}{4}X_i=0\eeq
\beq\label{eq_EOMten}P^TT_{ij}:=\frac{1}{a^2}\left((\partial_\tau+2\H)\alpha\partial_\tau- \alpha\Delta \right)T_{ij}=0\eeq
i.e. we see that there do not exist non-trivial solutions $X_i$ which do not vanish at spatial infinity and that $T_{ij}$ satisfies a (normally) hyperbolic equation. The scalar parts of $P\Gamma=0$ can be subsumed in a particularly simple form in terms of the so-called (generalised) Mukhanov-Sasaki variable $\mu$, which is proportional to ${\cal R}$, a quantity related to the perturbation of the spatial curvature and a key geometrical quantity in the physics of inflationary perturbations, see e.g. \cite{Makino:1991sg}.
%$$\mu(\Theta):=\sqrt{\alpha}\left(-\frac{\widebar {\phi}^\prime}{\widebar {\H}}\widebar {\Phi(\Theta)}+\widebar {\chi(\Theta)}\right)$$
\beq\label{eq_defmu}\mu:=\frac{\widebar {z}}{a}\left(\Psi-\frac{\xi\phi}{\alpha}\chi\right)+\sqrt{\frac{\beta}{\alpha}}\chi\qquad \widebar {z}:=
a\sqrt{\frac{\beta}{\alpha}}\frac{\phi^\prime}{\H+\sqrt{\alpha}^\prime}\qquad {\cal R} := -\sqrt{\alpha}a\frac{\mu}{\widebar {z}}\eeq
In terms of $\mu$, the scalar equations of motion  displayed in section \ref{sec_fullFLRW} can be re-expressed equivalently as
\beq\label{eq_EOMmu1}P^\mu\mu:=\left(-\nabla_c\nabla^c + \frac{R}{6}-\frac{\widebar {z}^{\prime\prime}}{\widebar {z}a^2}\right)\mu=0\,,\eeq
\beq\label{eq_EOMmu2}\Psi-\frac{\xi \phi}{\alpha}\chi=\frac{\H+\sqrt{\alpha}^\prime}{2a^2 \alpha}\left(\int\limits^\tau_{\tau_0}d\tau_1 a \widebar {z}\mu+\lambda_0\right)\,,\eeq
\beq\label{eq_EOMmu3}\Phi-\frac{\xi\phi}{\alpha}\chi=-\left(\Psi-\frac{\xi \phi}{\alpha}\chi\right)\qquad \chi=\frac{2 \alpha^2}{\beta \phi^\prime}(\partial_\tau+\H+\sqrt{\alpha}^\prime)\left(\Psi-\frac{\xi \phi}{\alpha}\chi\right)\,,\eeq
where $\tau_0$ is arbitrary and $\lambda_0$ is the unique solution of
\beq\label{eq_EOMmu4}\left.\Delta\lambda_0=a\sqrt{\alpha}\overline{z}\left(\mu^\prime+\left(\frac{\partial_\tau\left(\H+\sqrt{\alpha}^\prime\right)}{\H+\sqrt{\alpha}^\prime}-\frac{\partial_\tau\left(\frac{\sqrt{\beta}}{\alpha}\phi^\prime\right)}{\frac{\sqrt{\beta}}{\alpha}\phi^\prime}\right)\mu\right)\right\vert_{\tau=\tau_0}.\eeq
Thus one can view $\mu$ as the basic dynamical variable, which is a conformally coupled scalar field with time-dependent mass, whereas $\Phi$, $\Psi$ and $\chi$ can be inferred from $\mu$ on-shell. The ``correct'' definition of $\mu$ \eqref{eq_defmu} and the equations \eqref{eq_EOMmu1}--\eqref{eq_EOMmu3} can also be obtained from the simpler equations in the minimally coupled case $\xi=0$ by using the transformation outlined in Appendix \ref{sec_EinsteinJordan}. 

The previous discussions imply that the following definitions of section spaces are consistent (recall \eqref{eq_def_origquants}).
$$\Sol_\infty:=\Sol\cap \Gamma_\infty(\V)\qquad \G_\infty:=\G\cap \Gamma_\infty(\V)$$ 
\beq\label{eq_def_splitsoletc}\Sol^{S/V/T}_{(\infty/\sc)}:= \Sol\cap \Gamma_{(\infty/\sc)}^{S/V/T}(\V)\qquad \G^{S/V}_{(\infty/\sc)}:=\G\cap \Gamma_{(\infty/\sc)}^{S/V}(\V)\qquad \G_{(\infty/\sc)}^{T}:=\{0\}\eeq
$$\G^{S/V}_{\sc,0}:=K\left[\Gamma^{S/V}_\sc(\W)\right]$$
$$\Ker^{S/V/T}_0(K^\dagger):=\Ker_0(K^\dagger)\cap \Gamma_0^{S/V/T}(\V)\qquad \E^{S/V/T}:= \left.\Ker^{S/V/T}_0(K^\dagger)\right/P\left[\Gamma^{S/V/T}_0(\V)\right]$$
Some of these spaces have a particularly simple form, viz.
\beq\label{eq_simplespaces}\Ker^T_0(K^\dagger)=\Gamma^T_0(\V)\qquad \G^{S/V}_\sc=\G^{S/V}_{\sc,0}\qquad \Sol^V_\infty/\G^V_\infty=\Sol^V_\sc/\G^V_{\sc}=\{0\}\qquad \E^V=\{0\}\,.\eeq 
The first identity follows from \eqref{eq_KdaggerFLRW} while the second can be deduced from \eqref{eq_bdecompsupp},\eqref{eq_gaugetrafoFLRW} and the fact that the time-integral of a space-like compact function is space-like compact. The third identity follows from Lemma \ref{prop_confgauge} and \eqref{eq_EOMvec}. The last identity is, as we shall see, dual respectively equivalent to the third and follows from the fact that each vector solution $h\in\Gamma^V_0(\V)$ of $K^\dagger h=0$ is of the form $h=Pj$, $j\in\Gamma^V_0(\V)$, cf. \eqref{eq_KdaggerFLRW} and Section \ref{sec_fullFLRW}. Moreover, we can show the following.
\begin{theo}\label{prop_specialsplitrelations}The spaces \eqref{eq_def_splitsoletc} satisfy the following relations.
\begin{enumerate}
\item $F^{-1}\circ G^{\widetilde P}:\Gamma_0(\V)\to\Gamma(\V)$ induces bijective maps $F^{-1}\circ G^{\widetilde P}:\E^{S/V/T}\to \Sol^{S/V/T}_\sc/\G^{S/V/T}_\sc$.
\item $\E^{S/V/T}\subset\E$ and $\Sol^{S/V/T}_\sc/\G^{S/V/T}_\sc\subset \Sol_\sc/\G_{\sc,0}$.
\item $\Sol_{\infty}/\G_{\infty}$ can be split as
$$\Sol_{\infty}/\G_{\infty}=\Sol_{\infty}^{S}/\G_{\infty}^{S}\oplus\Sol_{\infty}^{V}/\G_{\infty}^{V}\oplus\Sol_{\infty}^{T}/\G_{\infty}^{T}\,.$$
\end{enumerate}
\end{theo}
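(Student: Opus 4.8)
The plan is to reduce all three claims to one structural observation: on an FLRW background every operator in the construction respects the scalar/vector/tensor (S/V/T) splitting. First I would record that the field redefinition $F$ of \eqref{eq_tracereversal} commutes with the projectors $\PP^{S/V/T}_\V$ — it only recombines $\gamma_{ab}$, its trace, $g_{ab}$ and the scalar $\varphi$, all of which respect the split — so that, together with \eqref{eq_invP}, \eqref{eq_invK} and \eqref{eq_invKdagger}, the operators ${\widebar {P}}=P\circ\widebar{\,\cdot\,}^{-1}$, $\widebar {K}$, $\widebar {K}^\dagger=K^\dagger$, $T$ and hence $\widetilde{P}={\widebar {P}}+T\circ\widebar {K}^\dagger$ all commute with $\PP^{S/V/T}_\V$ on $\Gamma_\infty(\V)$. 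Since the retarded/advanced Green's operators are uniquely determined by $\widetilde{P}$ and the support condition and the split is preserved, $G^{\widetilde{P}}$, and therefore $F^{-1}\circ G^{\widetilde{P}}$, is type-preserving as well (its range lies in $\Gamma_\sc(\V)\subset\Gamma_\infty(\V)$, where the projectors act). These commutation relations are the engine for everything below.

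For the third statement I would use that on $\Gamma_\infty(\V)$ the splitting exists and is unique (Proposition \ref{prop_splitbasic}): any $\Gamma\in\Sol_\infty$ decomposes as $\Gamma=\sum_{X}\PP^{X}_\V\Gamma$ with $\PP^{X}_\V\Gamma\in\Sol^{X}_\infty$, because $P\PP^{X}_\V\Gamma=\PP^{X}_\V P\Gamma=0$. The only non-formal point is that this descends to the quotient by $\G_\infty$, i.e. that $\PP^{X}_\V[\G_\infty]\subseteq\G^{X}_\infty$. I would prove this from \eqref{eq_invK}: an element of $\G_\infty$ vanishing at spatial infinity is $K\varsigma$, and from the explicit transformations \eqref{eq_gaugetrafoFLRW} one sees the generating parameter may be taken in $\Gamma_\infty(\W)$, whence $\PP^{S/V}_\V K\varsigma=K\PP^{S/V}_\W\varsigma\in\G^{S/V}_\infty$ and $\PP^{T}_\V K\varsigma=0$, so that $\G^T_\infty=\{0\}$. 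Since the $\PP^{X}_\V$ descend and sum to the identity, $\Sol_\infty/\G_\infty=\bigoplus_X\Sol^X_\infty/\G^X_\infty$.

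The second statement splits into its two inclusions. The inclusion $\Sol^{S/V/T}_\sc/\G^{S/V/T}_\sc\subset\Sol_\sc/\G_{\sc,0}$ is essentially definitional: it is well defined because $\G^{S/V}_\sc=\G^{S/V}_{\sc,0}\subseteq\G_{\sc,0}$ and $\G^T_\sc=\{0\}$ by \eqref{eq_simplespaces}, and it is injective because a pure-type space-like compact solution lying in $\G_{\sc,0}$ already lies in $\G\cap\Gamma^{S/V/T}_\sc(\V)=\G^{S/V/T}_\sc$ by \eqref{eq_def_splitsoletc}. The inclusion $\E^{S/V/T}\subset\E$ is the heart of the proof: I must show that a pure-type $h\in\Ker^{S/V/T}_0(K^\dagger)$ which equals $Pj$ for some $j\in\Gamma_0(\V)$ is already $Pj'$ with $j'\in\Gamma^{S/V/T}_0(\V)$ (recall $P[\Gamma_0(\V)]={\widebar {P}}[\Gamma_0(\V)]$). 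My route is through the propagator: since $h\in\Ker_0(K^\dagger)\cap{\widebar {P}}[\Gamma_0(\V)]$, Theorem \ref{prop_prop_G}(1) gives $G^{\widetilde{P}}h\in\widebar {\G}_{\sc,0}$, and $G^{\widetilde{P}}h$ has the same type as $h$. For tensors, $\widebar {\G}_{\sc,0}$ has no tensor component ($\PP^T_\V\circ K=0$), forcing $G^{\widetilde{P}}h=0$, hence $h\in\widetilde{P}[\Gamma_0(\V)]$; restricting $\widetilde{P}$ to the tensor bundle, where it is normally hyperbolic (cf. \eqref{eq_EOMten}) and its Green's operator is the restriction of $G^{\widetilde{P}}$, yields $h\in\widetilde{P}[\Gamma^T_0(\V)]={\widebar {P}}[\Gamma^T_0(\V)]=P[\Gamma^T_0(\V)]$. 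The vector case is empty since $\E^V=\{0\}$ by \eqref{eq_simplespaces}. For scalars the sector is not normally hyperbolic, so in place of the tensor argument I would invoke the Mukhanov--Sasaki reduction \eqref{eq_defmu}--\eqref{eq_EOMmu1}, in which the single physical scalar degree of freedom $\mu$ obeys the normally hyperbolic equation \eqref{eq_EOMmu1}, and re-run the localisation there.

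Finally, the first statement is assembled from the above. By the first paragraph $F^{-1}\circ G^{\widetilde{P}}$ is type-preserving and maps $\Ker^{S/V/T}_0(K^\dagger)$ into $\Sol^{S/V/T}_\sc$ and $P[\Gamma^{S/V/T}_0(\V)]$ into $\G^{S/V/T}_\sc$, so it descends to $\E^{S/V/T}\to\Sol^{S/V/T}_\sc/\G^{S/V/T}_\sc$; this sits in a commuting square with the full isomorphism $F^{-1}\circ G^{\widetilde{P}}\colon\E\to\Sol_\sc/\G_{\sc,0}$ of Section \ref{sec_removingredefinition} and the injections of the second statement, which forces injectivity on each sector, while surjectivity follows by splitting a given $F\Gamma\in\widebar {\Sol}_\sc$ via Theorem \ref{prop_prop_G}(2) and projecting onto the relevant type. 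The orthogonality of the sectors recorded in Lemma \ref{prop_nondegtensor} is used throughout to identify kernels and to guarantee that these projections do not mix the sectors. I expect the main obstacle to be exactly the failure of the S/V/T projectors to preserve compact and space-like compact support — the strict inclusions $\Gamma^{S/V/T}_0(\V)\oplus\cdots\subsetneq\Gamma_0(\V)$ — which is what forces the propagator-based re-localisation in the $\E^{S/V/T}\subset\E$ inclusion and the careful choice of gauge parameter in the descent step of the third statement; everything else is bookkeeping with the commutation relations of the first paragraph.
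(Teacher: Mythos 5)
Your global strategy --- every operator in the construction commutes with $\PP^{S/V/T}_\V$, Green's operators included, plus the warning that the projectors do not preserve supports --- is exactly the paper's (Theorem \ref{prop_surjinj} in the appendix together with Proposition \ref{prop_splithyp}), and your arguments for statements 2 and 3 and for the tensor sector of statement 1 are essentially the ones in the paper. The genuine gap is the scalar sector of statement 1, on both sides of the bijection. For injectivity you must show that $h\in\Ker^S_0(K^\dagger)$ with $F^{-1}G^{\widetilde P}h=K\varsigma\in\G^S_\sc$ lies in $P[\Gamma^S_0(\V)]$; unlike the tensor case, $G^{\widetilde P}h$ is here a generically non-zero gauge configuration whose parameter $\varsigma$ is only space-like compact, and the whole problem is to trade it for compactly supported data. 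Your proposed detour through the Mukhanov--Sasaki variable cannot do this: $\mu$ is gauge-invariant, so $\mu(F^{-1}G^{\widetilde P}h)=0$ retains no information about how $K\varsigma$ is generated, and the only way it could yield $[h]=[0]\in\E^S$ is through injectivity of $\mu_0=(G^\mu)^{-1}\circ\mu\circ F^{-1}\circ G^{\widetilde P}$ --- but since $\mu:\Sol^S_\sc/\G^S_\sc\to\Sol^\mu_\sc$ and $G^\mu:\E^\mu\to\Sol^\mu_\sc$ are bijective, injectivity of $\mu_0$ is \emph{equivalent} to the injectivity of $F^{-1}\circ G^{\widetilde P}$ on $\E^S$ that you are trying to prove (indeed, in the paper Theorem \ref{prop_mumapprop} is derived from the present theorem, so invoking it here is circular). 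The same support problem breaks your surjectivity step: after splitting $F\Gamma$ by Theorem \ref{prop_prop_G}(2), applying the projectors produces $G^{\widetilde P}(\PP^S_\V h_1)+\widebar K(\PP^S_\W\varsigma)$, where $\PP^S_\V h_1$ is no longer compactly supported and $\PP^S_\W\varsigma$ is no longer space-like compact, so this is not a decomposition into $\bigl(F^{-1}\circ G^{\widetilde P}\bigr)\bigl[\Ker^S_0(K^\dagger)\bigr]$ plus $\G^S_\sc$.

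What is missing is the paper's actual re-localisation mechanism: the Cauchy-hyperbolic operators $R=K^\dagger\circ F\circ K$ and $Q=K^\dagger\circ T$, their causal propagators $G^R$, $G^Q$, the intertwining relations $K^\dagger\circ G^{\widetilde P}=G^Q\circ K^\dagger$ and $\widebar K\circ G^R=G^{\widetilde P}\circ T$, all combined with Proposition \ref{prop_splithyp}. For surjectivity one first solves $R\varsigma=K^\dagger F\Gamma$ with $\varsigma\in\Gamma^S_\sc(\W)$ and subtracts $K\varsigma$, so that the remainder solves the hyperbolic equation $\widetilde P F(\Gamma-K\varsigma)=0$ and Proposition \ref{prop_splithyp}(1) supplies a compactly supported \emph{scalar} source; for injectivity one deduces $R\varsigma=0$ from $K^\dagger G^{\widetilde P}h=G^QK^\dagger h=0$, writes $\varsigma=G^Rf$ with $f\in\Gamma^S_0(\W)$, concludes $h-Tf\in\widetilde P[\Gamma^S_0(\V)]$, and finally eliminates $f=-K^\dagger h_2$ to land on $h=PF^{-1}h_2$ with $h_2\in\Gamma^S_0(\V)$. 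Without this (or an equivalent device) the scalar case of statement 1 --- and with it the inclusion $\E^S\subset\E$ in statement 2, which you correctly identify as the heart of the matter --- remains unproven.
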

\begin{proof}{Proof of 1}:
 This statement can be proven exactly as the fact that $F^{-1}\circ G^{\widetilde P}$ induces a bijective map $F^{-1}\circ G^{\widetilde P}:\E\to \Sol_\sc/\G_\sc$, whereby Proposition \ref{prop_splithyp} may be used to maintain the scalar/vector/tensor character of the objects at each step. As the necessary steps are rather lengthy, we transfer them to Theorem \ref{prop_surjinj} in the appendix.
 
{Proof of 2}: Assuming $\Sol^{S/V/T}_\sc/\G^{S/V/T}_\sc\subset \Sol_\sc/\G_{\sc,0}$, $\E^{S/V/T}\subset\E$ follows by $\E\simeq \Sol_\sc/\G_{\sc,0}$ and $\E^{S/V/T}\simeq \Sol^{S/V/T}_\sc/\G^{S/V/T}_\sc$, cf. Section \ref{sec_removingredefinition} and the first statement of this theorem. To prove $\Sol^{S/V/T}_\sc/\G^{S/V/T}_\sc\subset \Sol_\sc/\G_{\sc,0}$, we define $\iota:\Sol^{S/V/T}_\sc/\G^{S/V/T}_\sc\to \Sol_\sc/\G_{\sc,0}$ by $\iota([\Gamma]):=[\Gamma]$ and the wanted statement follows if we can show that $\iota$ is injective. To see this, we assume $\iota([\Gamma])=0$, thus there exists $\varsigma\in \Gamma_\sc(\W)$ such that $\Gamma=K\varsigma$. However, we know that $\Gamma\in\Sol^{S/V/T}_\sc$. In the scalar case, it follows that the vector components $V_i$ and $W_i$ of $\Gamma$ vanish, and thus the vector component $z_i$ of $\varsigma$ vanishes by \eqref{eq_gaugetrafoFLRW}. Hence, $\Gamma\in\G^S_\sc$ and $[h]=[0]\in\Sol^S_\sc/\G^S_\sc$. In the vector case, we have nothing to prove as $\Sol^V_\sc/\G^V_\sc=\{0\}$. Finally, in the tensor case, we note that $K[\Gamma_\sc(\W)]\cap \Sol^T_\sc=\{0\}$, thus $[\Gamma]=[0]\in\Sol_\sc/\G_{\sc,0}$ and $\Gamma\in\Sol^T_\sc$ implies $\Gamma=0$.
% K injective can also be shown via Proposition \ref{prop_splithyp}

{Proof of 3}: The splitting of $\Sol_{\infty}/\G_{\infty}$ follows from \eqref{eq_splitintersect} and the splittings $\Sol_{\infty}=\Sol_{\infty}^{S}\oplus\Sol_{\infty}^{V}\oplus\Sol_{\infty}^{T}$ and $\G_{\infty}=\G_{\infty}^{S}\oplus\G_{\infty}^{V}\oplus\G_{\infty}^{T}$ which in turn follow from \eqref{eq_invP} and \eqref{eq_invK}, respectively.

\end{proof}

\subsection{Comparing the general quantization procedure with the standard approach}
\label{sec_inflationresults}

In the previous sections we have seen that, on FLRW backgrounds, on-shell configurations of the linearised Einstein-Klein-Gordon system which vanish at spatial infinity can be uniquely split into scalar, vector and tensor parts, where the vector parts are pure gauge and gauge-invariant linear combinations of the scalar and tensor parts satisfy hyperbolic equations of motion $P^\mu\mu=0$ \eqref{eq_EOMmu1} and $P^T T_{ij}=0$ \eqref{eq_EOMten}. Similarly, we have seen that, at least a subset of the observables on the linearised Einstein-Klein-Gordon system can be split analogously, whereby the resulting vector parts are trivial. It thus seems that the quantum theory of the linearised Einstein-Klein-Gordon system on FLRW backgrounds contains sub-theories corresponding to the scalar and tensor degrees of freedom, whereas all vector degrees of freedom are pure gauge. Consequently, the usual approach to quantizing perturbations in inflation is to take the equations $P^\mu\mu=0$ and $P^T T_{ij}=0$ (or the corresponding Lagrangians) as a starting point for canonical quantization. In view of our general approach to quantize the Einstein-Klein-Gordon system provided in Section \ref{sec_quantgen}, two questions arise:
\begin{enumerate}
\item Are the scalar and tensor observables in the full quantum theory of the linearised Einstein-Klein-Gordon system equivalent to the ones in the quantum theories of $\mu$ and $T_{ij}$ constructed based on $P^\mu\mu=0$ and $P^T T_{ij}=0$?
\item Can all local observables in the full quantum theory of the linearised Einstein-Klein-Gordon system be split into local scalar and local tensor observables, and thus into $\mu$- and $T_{ij}$-observables?
\end{enumerate}
The aim of this section is to answer these questions in a rigorous fashion, whereby we will see that the first question has a positive answer, whereas the answer to the second one is negative. While the latter implies that in principle the usual approach to directly quantize the scalar and tensor degrees of freedom does not give all observables of the quantum linearised Einstein-Klein-Gordon system, a subsequent analysis in Section \ref{sec_separability} will indicate that the local scalar and tensor observables are sufficient  if one considers only configurations of the linearised Einstein-Klein-Gordon system which vanish at spatial infinity.

As we have reviewed the canonical quantization of (pre)symplectic spaces in Section \ref{sec_quant}, we may answer the above questions on the level of (pre)symplectic spaces, i.e. on the level of classical field theories.

\subsubsection{The scalar sector}
\label{sec_scalsec}

If $\Gamma\in\Gamma_\infty(\V)$ then the splitting \eqref{eq_def_split} is unique and we can view $A,B,D,E,V_i,W_i,T_{ij}$, and, consequently, $\Psi,\Phi,\chi,\mu, X_i$ as functionals of $\Gamma$ and thus as maps $A:\Gamma_\infty(\V)\to C_\infty^\infty(M,\bbR)$, and so on. In order to investigate the scalar sector, we analyse the properties of $\mu:\Gamma_\infty(\V)\to C_\infty^\infty(M,\bbR)$. To this avail, we recall that $P^\mu$ is the normally hyperbolic differential operator \eqref{eq_EOMmu1} and define the following $\mu$-related quantities.
$$\Sol^\mu=\{\mu\in C^\infty(M,\bbR)\,|\, P^\mu \mu=0\}\,,\qquad \Sol^\mu_{\infty/\sc}:=\Sol^\mu\cap C_{\infty/\sc}^\infty(M,\bbR)$$
$$\langle \cdot,\cdot\rangle_\mu: C^\infty(M,\bbR)\times C^\infty(M,\bbR)\to\bbR,\quad (f_1,f_2)\mapsto \langle f_1,f_2\rangle_\mu:=\int\limits_M \vol\; f_1 f_2$$
$$\langle \cdot,\cdot\rangle_{\Sol^\mu}: \Sol^\mu\times\Sol^\mu\to\bbR,\quad (f_1,f_2)\mapsto \langle f_1,f_2\rangle_{\Sol^\mu}:=\langle P^\mu f^+_1,f_2\rangle_\mu$$
$$G_\pm^\mu:C^\infty_0(M,\bbR)\to C^\infty(M,\bbR),\;\; P^\mu G_\pm^\mu=\id_{C^\infty_0(M,\bbR)},\;\; \supp\, G_\pm^\mu f\subset J^\pm(\supp \,f)\;\forall \,f\in C^\infty_0(M,\bbR)$$ 
$$G^\mu:C^\infty_0(M,\bbR)\to C^\infty(M,\bbR)\,,\qquad G^\mu:=G_+^\mu-G_-^\mu$$
$$\E^\mu:=C^\infty_0(M,\bbR)/P^\mu\left[C^\infty_0(M,\bbR)\right]$$
$$\sigma^\mu:\E^\mu\times\E^\mu\to \bbR,\qquad([f_1],[f_2])\mapsto\sigma^\mu([f_1],[f_2]):=\langle f_1,G^\mu f_2\rangle_\mu$$
$$j^\mu:C^\infty(M,\bbR)\times C^\infty(M,\bbR)\to  T^*M\,,\qquad(f_1, f_2) \mapsto j^\mu_a(f_1,f_2):=-f_1\nabla_a f_2+f_2\nabla_a f_1$$
where $\langle\cdot,\cdot\rangle_\mu$ and $\langle \cdot,\cdot\rangle_{\Sol^\mu}$ are understood to be defined only on functions with compact, respectively, space-like compact overlapping support, and the future part $f^+$ of a smooth function $f$ is defined as in Section \ref{sec_solform}.

As already anticipated the equation of motion $P^\mu \mu=0$, which can be seen as coming from the action $S(\mu):=\frac12 \langle \mu, P^\mu \mu\rangle_\mu$, defines a field theory on FLRW backgrounds $\M=(M,g,\phi)$ on its own. In analogy to our discussion of the linearised Einstein-Klein-Gordon system, we can consider this field theory a specified by the data $(\M,M\times\bbR,P^\mu)=(\M,\V=M\times\bbR,\W=M\times\bbR,P=P^\mu,K=0)$. The following properties of this field theory can then be shown by the methods used in the analysis of the linearised Einstein-Klein-Gordon system, where the non-degeneracy of $\sigma^\mu$ can be proven as in \cite{Bar2}.
\begin{theo}\label{prop_mutheoryprop}The conformally coupled scalar field theory on FLRW backgrounds defined by $(\M,M\times\bbR,P^\mu)$ satisfies the following relations.
\begin{enumerate}
\item $P^\mu$ is formally selfadjoint w.r.t. $\langle\cdot,\cdot\rangle_\mu$.
\item $G^\mu$ descends to a bijective map $G^\mu:\E^\mu\to \Sol_\sc^\mu$.
\item $\sigma^\mu$ and $\langle \cdot,\cdot\rangle_{\Sol^\mu}$ are antisymmetric and well-defined. Moreover, $\sigma^\mu$ is non-degenerate.
\item For all $\mu\in\Sol^\mu$ and all $f\in C^\infty_0(M,\bbR)$, $\langle \mu, G^\mu f\rangle_{\Sol^\mu}=\langle \mu, f\rangle_\mu$. In particular, for all $f_1,f_2\in C^\infty_0(M,\bbR)$, $\langle G^\mu f_1,  G^\mu f_2\rangle_{\Sol^\mu}=\sigma^\mu([f_1],[f_2])$.
\item $G^\mu$ descends to an isomorphism of symplectic spaces $G^\mu:(\E^\mu,\sigma^\mu)\to (\Sol^\mu_\sc,\langle\cdot,\cdot\rangle_{\Sol^\mu})$.
\item For every $\mu_1$, $\mu_2\in \Sol^\mu$ with space-like compact overlapping support, and every Cauchy surface $\Sigma$ of $(M,g)$ with forward pointing unit normal vector field $n$
$$\langle \mu_1, \mu_2\rangle_{\Sol^\mu}=\int\limits_\Sigma \vols\, n^a {j}^\mu_a(\mu_1,\mu_2)\,.$$
\end{enumerate}
\end{theo}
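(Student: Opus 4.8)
The plan is to recognise the theory $(\M,M\times\bbR,P^\mu)$ as the special case of the abstract framework of Theorem \ref{prop_propLEKGS} in which $\V$ and $\W$ are both the trivial line bundle $M\times\bbR$ and the gauge operator vanishes, $K=0$. With $K=0$ there is no gauge freedom: the solution quotients collapse to $\Sol^\mu$ itself, and no gauge-fixing operator is needed (one may take $T=0$, so that $\widetilde P=P^\mu$ is already normally hyperbolic). Consequently properties 1--6 are essentially transcriptions of the results already proved for the full system, with the degenerate gauge data removed. Property 1 is immediate: $P^\mu=-\nabla^a\nabla_a+U$ with $U$ the smooth real function $R/6-\widebar z^{\prime\prime}/(\widebar z a^2)$, and integrating by parts twice against $\langle\cdot,\cdot\rangle_\mu$ annihilates the boundary terms on functions with compact overlapping support, so the Laplace--Beltrami part and the multiplication part are each symmetric. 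Property 2 is the standard exactness of the fundamental sequence of a normally hyperbolic operator: by \cite{Bar:2007zz} the Green's operators $G^\mu_\pm$ exist and are unique, $\ker G^\mu|_{C^\infty_0}=P^\mu[C^\infty_0(M,\bbR)]$ and $\Imm G^\mu=\Sol^\mu_\sc$, whence $G^\mu$ descends to the asserted bijection $\E^\mu\to\Sol^\mu_\sc$.

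For property 3 I would first record that the formal selfadjointness of $P^\mu$ forces the Green's operators to be mutually adjoint, $\langle f_1,G^\mu_\pm f_2\rangle_\mu=\langle G^\mu_\mp f_1,f_2\rangle_\mu$, so that $G^\mu$ is formally skew-adjoint on $C^\infty_0(M,\bbR)$. Well-definedness of $\sigma^\mu$ then follows from $G^\mu P^\mu=P^\mu G^\mu=0$ on $C^\infty_0(M,\bbR)$, and its antisymmetry from this skew-adjointness, exactly as in Corollary \ref{prop_proptau}. The well-definedness (independence of the chosen future part) and antisymmetry of $\langle\cdot,\cdot\rangle_{\Sol^\mu}$ are obtained verbatim as in items 1 and 2 of Theorem \ref{propo_proptau2}, using $P^\mu\mu^+=-P^\mu\mu^-$ for $\mu\in\Sol^\mu$ together with the compactness of the relevant overlapping supports. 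Properties 4 and 5 then follow as in items 4--6 of Theorem \ref{propo_proptau2}: choosing $G^\mu_+ f$ as a future part of $G^\mu f$ yields $\langle\mu,G^\mu f\rangle_{\Sol^\mu}=\langle\mu,f\rangle_\mu$, and specialising $\mu=G^\mu f_1$ and invoking property 2 upgrades $G^\mu$ to an isomorphism of symplectic spaces onto $(\Sol^\mu_\sc,\langle\cdot,\cdot\rangle_{\Sol^\mu})$.

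Property 6 is the scalar analogue of Lemma \ref{prop_solcurrent}. The key input is the conservation identity $\nabla^a j^\mu_a(\mu_1,\mu_2)=\mu_1 P^\mu\mu_2-\mu_2 P^\mu\mu_1$, which one verifies directly: expanding $\nabla^a(-\mu_1\nabla_a\mu_2+\mu_2\nabla_a\mu_1)$ the first-derivative cross terms cancel, leaving $-\mu_1\nabla^a\nabla_a\mu_2+\mu_2\nabla^a\nabla_a\mu_1$, and the potential contributions $U\mu_1\mu_2$ cancel pairwise so that the second-order terms reassemble into $P^\mu$; this is the scalar instance of \eqref{eq_currentconservation}. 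With this identity the computation of Lemma \ref{prop_solcurrent}---splitting $\mu_1=\mu_1^++\mu_1^-$, using $P^\mu\mu_1^+=-P^\mu\mu_1^-$, and applying Stokes' theorem on $J^\pm(\Sigma)$ together with the compactness of $\supp\,\mu_1^\pm\cap\supp\,\mu_2\cap J^\mp(\Sigma)$---carries over word for word to give the Cauchy-surface expression.

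The one genuinely new point, and the step I expect to be the main obstacle, is the non-degeneracy of $\sigma^\mu$ in property 3; this is exactly the feature that is generically unavailable for the full gauge theory (cf. the degeneracy criterion in item 7 of Theorem \ref{propo_proptau2}), which is why it has to be argued separately here. Using the symplectic isomorphism of property 5 it suffices to show that $\langle\cdot,\cdot\rangle_{\Sol^\mu}$ is non-degenerate on $\Sol^\mu_\sc$. Via property 6 the pairing of $\mu_1,\mu_2\in\Sol^\mu$ equals $\int_\Sigma\vols\,n^a j^\mu_a(\mu_1,\mu_2)=\int_\Sigma\vols\,(\mu_2\, n^a\nabla_a\mu_1-\mu_1\, n^a\nabla_a\mu_2)$, the standard Wronskian symplectic form on the Cauchy data $(\mu|_\Sigma,n^a\nabla_a\mu|_\Sigma)$. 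If $\langle\mu_1,\mu_2\rangle_{\Sol^\mu}=0$ for all $\mu_2\in\Sol^\mu_\sc$, then choosing $\mu_2$ to realise arbitrary compactly supported Cauchy data on $\Sigma$---possible by the well-posedness of the Cauchy problem for the normally hyperbolic $P^\mu$ \cite{Bar:2007zz}---forces $\mu_1|_\Sigma=n^a\nabla_a\mu_1|_\Sigma=0$, and uniqueness then gives $\mu_1=0$. This is the argument of \cite{Bar2}; the only care needed is to match support classes so that each desired Cauchy datum is attained by a space-like compact solution and all surface integrals converge.
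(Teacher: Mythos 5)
Your proposal is correct and takes essentially the same route as the paper: the paper likewise identifies $(\M,M\times\bbR,P^\mu)$ with the degenerate case $(\M,\V=M\times\bbR,\W=M\times\bbR,P=P^\mu,K=0)$ of the framework of Theorem \ref{prop_propLEKGS}, states that properties 1--6 follow by the same methods used for the full linearised Einstein-Klein-Gordon system, and handles the non-degeneracy of $\sigma^\mu$ precisely by the Cauchy-data/Wronskian argument of \cite{Bar2} that you spell out. You merely fill in details (integration by parts, the exact sequence for $G^\mu$, the conserved-current identity) that the paper leaves implicit.
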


By Theorem \ref{prop_specialsplitrelations}, the space $\E=\Ker_0(K^\dagger)/P[\Gamma_0(\V)]\ni h$ which labels all gauge-invariant linear observables of the classical linearised Einstein-Klein-Gordon system via $\Sol/\G\ni [\Gamma]\mapsto \langle \Gamma, h\rangle_\V$, contains the scalar subspace $\Ker^S_0(K^\dagger)/P[\Gamma^S_0(\V)]=\E^S\subset \E$. This space may be legitimately interpreted as indexing local scalar observables because Lemma \ref{prop_nondegtensor} implies that $\langle \E^S, \Sol^{V/T}\rangle_\V=\{0\}$\footnote{We presume that $\E^S$ is the maximal subspace of $\E$ with this property, but we have not been able to prove this as we are lacking a convenient way of parametrising $\Sol^V$. $\Sol^T$ can be parametrised by all time-like compact elements of $\Gamma^T(\V)$ because $P^T$ in \eqref{eq_EOMten} is normally hyperbolic, cf. \cite{Baernew}. The vector equations \eqref{eq_EOMvec} can also be combined into a normally hyperbolic equation, but $\Sol^V$ does not correspond to all solutions of this equation.}. The presymplectic form $\sigma$ on $\E$ can thus be restricted to $\E^S$, s.t. $(\E^S,\sigma)$ can be considered as a presymplectic subspace of $(\E,\sigma)$. These two presymplectic spaces, as well as $(\E^\mu,\sigma^\mu)$, can be quantized as described in Section \ref{sec_quant}, leading to quantum observable algebras $\A^S$, $\A$ and $\A^\mu$ respectively, whereby $\A^S$ is a subalgebra of $\A$. In order to demonstrate equivalence of the quantum theory of $\mu$, constituted by $\A^\mu$, to the scalar sector $\A^S$ of the full quantum theory of the linearised Einstein-Klein-Gordon system given by $\A$, we need to show both that $\A^S$ and $\A^\mu$ are isomorphic and that this isomorphism preserves the physical meaning of the quantum observables. This is achieved once we find bijective maps $\mu:\Sol_\infty/\G_\infty\to \Sol^\mu_\infty$ and $\mu_0:\E^S\to\E^\mu$ and show both that $\mu_0$ induces an isomorphism $\mu_0: (\E^S,\sigma)\to (\E^\mu,\sigma^\mu)$ and that $\langle \Gamma, h\rangle_\V=\langle \mu(\Gamma),\mu_0(h)\rangle_\mu$ for all $([\Gamma],[h])\in\Sol_\infty/\G_\infty\times \E$.

\begin{theo}
\label{prop_mumapprop}
The map $\mu:\Gamma_\infty(\V)\to C_\infty^\infty(M,\bbR)$ enjoys the following properties.
\begin{enumerate}
\item $\mu$ induces bijective maps $$\mu:\Sol^S_\infty/\G^S_\infty\to\Sol^\mu_\infty\quad\text{and}\quad\mu:\Sol^S_\sc/\G^S_{\sc}\to\Sol^\mu_\sc\,.$$
\item $\mu$, $F^{-1}\circ G^{\widetilde P}$ and $G^\mu$ induce a bijective map $$\mu_0:\E^S\to \E^\mu\,,\qquad\mu_0:= \left(G^\mu\right)^{-1}\circ \mu \circ  F^{-1}\circ G^{\widetilde P}\,.$$
\item For all $\Gamma\in\Sol^S_\infty$, $H\in\Sol^S_\sc$
$$\langle \Gamma,H\rangle_\Sol=\langle \mu(\Gamma),\mu(H)\rangle_{\Sol^\mu}\,.$$
\item For all $\Gamma\in\Sol^S_\infty$ and all $h\in\Ker^S_0(K^\dagger)$
$$\langle \Gamma,h\rangle_\V=\langle \mu(\Gamma),\mu_0(h)\rangle_\mu\,.$$
\item The presymplectic spaces $(\E^S,\sigma)$, $(\Sol^S_\sc/\G^S_\sc,\langle\cdot,\cdot\rangle_\Sol)$, $(\E^\mu,\sigma^\mu)$ and $(\Sol^\mu_\sc,\langle\cdot,\cdot\rangle_{\Sol^\mu})$ are all isomorphic and symplectic. In particular, $\sigma$ is non-degenerate on $\E^S$.
\end{enumerate}
\end{theo}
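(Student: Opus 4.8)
The plan is to prove the five items essentially in the listed order, extracting the algebraic statements (items 2, 4, 5) from two substantive inputs: the bijectivity asserted in item 1 and the current identity of item 3. I would begin with item 1. The map $\mu$ is gauge invariant, since by \eqref{eq_defmu} it is built only from the Bardeen potentials $\Psi,\chi$ of \eqref{eq_def_Bardeen} and the purely $\tau$-dependent background quantities $a,\widebar{z},\alpha,\beta$; hence it descends to the quotients $\Sol^S_\infty/\G^S_\infty$ and $\Sol^S_\sc/\G^S_\sc$. That it lands in $\Sol^\mu$ is exactly the equivalence between the scalar part of $P\Gamma=0$ and \eqref{eq_EOMmu1}, and that $\mu(\Gamma)$ vanishes at spatial infinity when $\Gamma$ does is immediate from its definition. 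For injectivity I would use that $\mu=0$ forces, through \eqref{eq_EOMmu2}--\eqref{eq_EOMmu4}, first $\lambda_0=0$ (the only solution of $\Delta\lambda_0=0$ vanishing at spatial infinity), hence $\Psi-\tfrac{\xi\phi}{\alpha}\chi=0$, and then $\chi=0$, $\Phi=\Psi=0$; a scalar configuration with vanishing Bardeen potentials is pure gauge, as one sees by passing to conformal gauge $B=E=0$ via Lemma~\ref{prop_confgauge}, in which $A=\Psi$, $D=\Phi$, $\varphi=\chi$ all vanish. Surjectivity runs this in reverse: given $\mu\in\Sol^\mu$, reconstruct $\Psi-\tfrac{\xi\phi}{\alpha}\chi$, then $\chi,\Phi,\Psi$ from \eqref{eq_EOMmu2}--\eqref{eq_EOMmu4} and realise these invariants by a conformal-gauge configuration. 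The $\sc$ versus $\infty$ bookkeeping is handled by the support statements \eqref{eq_bdecompsupp}, \eqref{eq_bdecompsupp2}, the $\sc$ clause of Lemma~\ref{prop_confgauge}, and the fact that time integrals of space-like compact functions stay space-like compact.

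Item 2 is then immediate, since $\mu_0$ is the composition of three bijections: $F^{-1}\circ G^{\widetilde{P}}:\E^S\to\Sol^S_\sc/\G^S_\sc$ from item~1 of Theorem~\ref{prop_specialsplitrelations}, the map $\mu:\Sol^S_\sc/\G^S_\sc\to\Sol^\mu_\sc$ from item~1 above, and $(G^\mu)^{-1}:\Sol^\mu_\sc\to\E^\mu$ from item~2 of Theorem~\ref{prop_mutheoryprop}.

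The computational heart, and the step I expect to be the main obstacle, is item 3. Both sides are surface integrals of conserved currents: by the original-variable form of Lemma~\ref{prop_solcurrent} one has $\langle\Gamma,H\rangle_\Sol=\int_\Sigma\vols\,n^a j_a(\Gamma,H)$ with $j$ as in \eqref{eq_origcurrent}, while item~6 of Theorem~\ref{prop_mutheoryprop} gives $\langle\mu(\Gamma),\mu(H)\rangle_{\Sol^\mu}=\int_\Sigma\vols\,n^a j^\mu_a(\mu(\Gamma),\mu(H))$. Taking $\Sigma=\{\tau=\mathrm{const}\}$, so that $n^a\propto\delta^a_0$, reduces the claim to showing that $a^2 j_0(\Gamma,H)$ and $a^2 j^\mu_0(\mu(\Gamma),\mu(H))$ differ by a spatial divergence $\partial_i(\cdots)$, whose integral vanishes because $H$ is space-like compact. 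Concretely I would substitute the scalar split \eqref{eq_def_split} into \eqref{eq_origcurrent}, eliminate $A,B,D,E,\varphi$ in favour of $\Psi,\Phi,\chi$ and then of $\mu$ using the on-shell relations \eqref{eq_EOMmu2}--\eqref{eq_EOMmu4}, and integrate by parts in $\vec{x}$. The delicate points are the nonlocal $\Delta^{-1}$ pieces entering through $E$ and through $\lambda_0$, which must cancel or reorganise into total spatial derivatives. Conceptually this cancellation is guaranteed, since \eqref{eq_EOMmu1}--\eqref{eq_EOMmu4} express precisely that the scalar part of $\widebar{S}^{(2)}$ reduces, after integrating out the constrained variables, to the Mukhanov--Sasaki action $\tfrac12\langle\mu,P^\mu\mu\rangle_\mu$, and the symplectic current is invariant under such a reduction; the labour lies in verifying this at the level of the explicit current.

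Granting item 3, item 4 follows formally: for $h\in\Ker^S_0(K^\dagger)$ set $H:=F^{-1}G^{\widetilde{P}}h$; item~4 of Theorem~\ref{propo_proptau2} (in original variables) gives $\langle\Gamma,h\rangle_\V=\langle\Gamma,H\rangle_\Sol$, item~3 gives $\langle\Gamma,H\rangle_\Sol=\langle\mu(\Gamma),\mu(H)\rangle_{\Sol^\mu}$, and since $G^\mu\mu_0(h)=\mu(H)$ by the definition of $\mu_0$, item~4 of Theorem~\ref{prop_mutheoryprop} gives $\langle\mu(\Gamma),\mu(H)\rangle_{\Sol^\mu}=\langle\mu(\Gamma),\mu_0(h)\rangle_\mu$; chaining these equalities yields the claim. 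Finally, for item 5 I would assemble the symplectomorphisms: $F^{-1}\circ G^{\widetilde{P}}$ identifies $(\E^S,\sigma)$ with $(\Sol^S_\sc/\G^S_\sc,\langle\cdot,\cdot\rangle_\Sol)$ by the original-variable form of items~5--6 of Theorem~\ref{propo_proptau2}, the map $\mu$ identifies the latter with $(\Sol^\mu_\sc,\langle\cdot,\cdot\rangle_{\Sol^\mu})$ by items~1 and~3, and $G^\mu$ identifies this with $(\E^\mu,\sigma^\mu)$ by item~5 of Theorem~\ref{prop_mutheoryprop}. Since $\sigma^\mu$ is non-degenerate by item~3 of Theorem~\ref{prop_mutheoryprop}, non-degeneracy transports along these isomorphisms to all four spaces, so in particular $\sigma$ is symplectic on $\E^S$.
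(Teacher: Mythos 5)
Your proposal is correct and follows essentially the same route as the paper's proof: gauge invariance plus the reconstruction relations \eqref{eq_EOMmu2}--\eqref{eq_EOMmu4} and conformal gauge for the bijectivity in item 1, composition of bijections for item 2, verification of the current identity $\int_\Sigma \vols\, n^a j_a(\Gamma,H)=\int_\Sigma \vols\, n^a j^\mu_a(\mu(\Gamma),\mu(H))$ on a constant-$\tau$ Cauchy surface for item 3 (the paper organises this cumbersome computation in conformal gauge, you via elimination in favour of the gauge-invariant potentials -- a cosmetic difference), and the same chain $\langle \Gamma, h\rangle_\V=\langle \Gamma, F^{-1}G^{\widetilde P}h\rangle_\Sol=\langle \mu(\Gamma),\mu(F^{-1}G^{\widetilde P}h)\rangle_{\Sol^\mu}=\langle \mu(\Gamma),\mu_0(h)\rangle_\mu$ together with transport of non-degeneracy for items 4 and 5. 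Your explicit use of \eqref{eq_EOMmu4} to conclude $\lambda_0=0$ in the injectivity step is a detail the paper leaves implicit, and is a welcome clarification.
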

\begin{proof} We first note that $\mu$ is well-defined on the quotients because $\mu$ is gauge-invariant, i.e. $\mu\circ K|_{\Gamma^S_\infty(\W)}=0$, cf. \eqref{eq_defmu}, \eqref{eq_def_Bardeen} and \eqref{eq_gaugetrafoFLRW}. Surjectivity of the $\mu$ map follows from \eqref{eq_EOMmu2} as every smooth function on $M$ can be written as a conformal time derivative of a smooth function on $M$. To see injectivity, observe that, by \eqref{eq_EOMmu2} and \eqref{eq_EOMmu3}, $\mu(\Gamma)=0$ implies $\Psi(\Gamma)=\Phi(\Gamma)=\chi(\Gamma)=0$. By Lemma \ref{prop_confgauge} and the gauge invariance of $\mu$, $\Psi$, $\Phi$ and $\chi$ we may consider $\Gamma$ to be in conformal gauge, thus $\mu(\Gamma)=0\Rightarrow A=D=\varphi=0\Rightarrow [\Gamma]=[0]$.

The second statement follows from the fact that, by the first statement, Theorem \ref{prop_mutheoryprop} and Theorem \ref{prop_specialsplitrelations}, $\mu$ and $G^\mu$ and $F^{-1}\circ G^{\widetilde P}$ are bijective.  Thus $\mu_0$ is bijective as a composition of bijective maps.

The third statement can be seen by realising that for all $\Gamma\in\Sol^S_\infty$, all $H\in\Sol^S_\sc$ and every Cauchy surface $\Sigma$ of $(M,g)$ with forward pointing unit normal vector field $n$
$$\int\limits_\Sigma \vols\, n^a {j}_a(\Gamma,H)=\int\limits_\Sigma \vols\, n^a {j}^\mu_a(\mu(\Gamma),\mu(H))\,,$$ 
Where $j_a$ is defined in \eqref{eq_origcurrent}. This identity in turn is best checked on a Cauchy surface orthogonal to $\partial_\tau$ and by computing in conformal gauge, cf. Lemma \ref{prop_confgauge}. This computation is quite cumbersome, but straightforward. Using this identity, the statement follows from Lemma \ref{prop_solcurrent} and Theorem \ref{prop_mutheoryprop}.

The last two statements follow directly from the first three,  Theorem \ref{prop_mutheoryprop} and Section \ref{sec_removingredefinition}. In particular, it holds for all $\Gamma\in\Sol^S_\infty$ and all $h\in\Ker^S_0(K^\dagger)$
$$\langle \Gamma, h\rangle_V=\left\langle \Gamma, F^{-1}G^{\widetilde P}h\right\rangle_\Sol=\left\langle \mu(\Gamma),\mu\left(F^{-1}G^{\widetilde P}h\right)\right\rangle_{\Sol^\mu}=\langle \mu(\Gamma),\mu_0(h)\rangle_\mu\,.$$
\end{proof}

We have seen that the scalar sector of the linearised Einstein-Klein-Gordon system on FLRW backgrounds is equivalent to the field theory of the scalar field $\mu$. One might wonder whether there is a different linear combination $\widetilde\mu$ of the gauge invariant potentials $\Psi$, $\Phi$ and $\chi$ with the same properties. In \cite{Eltzner:2013soa} it has been proven that $\mu$ is indeed the only linear combination of these potentials which satisfies a normally hyperbolic differential equation and canonical commutation relations on all FLRW backgrounds. While \cite{Eltzner:2013soa} considers only the minimally coupled case $\xi=0$, the result can be directly extended to the non-minimally coupled case by means of the transformation outlined in Section \ref{sec_EinsteinJordan}.

\subsubsection{The tensor sector}

The tensor sector of the linearised Einstein-Klein-Gordon system can be discussed largely in analogy to the scalar sector, though the analysis is simplified considerably by the fact that the map $T_{ij}:\Gamma_\infty(\V)\to C^\infty(M,\bigvee^2\bbR^3)$ and the normally hyperbolic operator $P^T$ in \eqref{eq_EOMten} are defined in a rather direct fashion. Thus, one can straightforwardly prove a tensor version of the combined statements of Theorem \ref{prop_mutheoryprop} and Theorem \ref{prop_mumapprop}. For simplicity, we omit the indices in the notation of the $T_{ij}$ map in the following and define $$C_{(\infty/\sc/0)}^\infty(M,T):=\{T\in C_{(\infty/\sc/0)}^\infty(M,{\textstyle\bigvee^2}\bbR^3)\,|\,T^i_i=0\,,\partial^iT_{ij}=0\}\,.$$

\begin{theo}\label{prop_tensorprop}The map $T:\Gamma_\infty(\V)\to C^\infty(M,T)$ enjoys the following properties.
\begin{enumerate}
\item $\langle \cdot\,,\cdot\rangle_\V|_{\Gamma^T_\infty(\V)\times\Gamma^T_\infty(\V)}=\langle T(\,\cdot\,)\,,T(\,\cdot\,)\rangle_T$, where the bilinear form $\langle \cdot,\cdot\rangle_T: C^\infty(M,T)\times C^\infty(M,T)\to\bbR$, defined for functions of compact overlapping support as
$$(t_1,t_2)\mapsto \langle t_1,t_2\rangle_T:=\int\limits_M \vol\; t_1^{ij} {t_2}_{ij}\,,$$
is symmetric and non-degenerate.
\item $T\circ P|_{\Gamma_\infty(\V)}=P^T\circ T$ and $P^T$ is formally selfadjoint w.r.t. $\langle \cdot,\cdot\rangle_T$.
\item $T\left[\Sol_{\infty/\sc}^T\right]=\left\{T\in C_{\infty/\sc}^\infty(M,T)\,|\,P^T T=0\right\}$
\item $T\left[\E^T\right]=C_0^\infty(M,T)/P^T[C_0^\infty(M,T)]$
\item $T\circ F^{-1}\circ G^{\widetilde P}|_{\Gamma^T_0(\V)}=G^T\circ T|_{\Gamma^T_0(\V)}$, where 
$$G^T:C^\infty_0(M,T)\to C^\infty(M,T)\,,\qquad G^T:=G_+^T-G_-^T\,,$$
$$G_\pm^T:C^\infty_0(M,T)\to C^\infty(M,T),\quad P^T G_\pm^T=\id_{C^\infty_0(M,T)}\,,$$
$$\supp\, G_\pm^\mu t\subset J^\pm(\supp \,t)\;\;\forall \,t\in C^\infty_0(M,T)\,.$$ 
\item $\sigma|_{\E^T\times\E^T}=\sigma^T\circ(T\times T)$, where
$$\sigma^T:T[\E^T]\times T[\E^T]\to \bbR,\quad([t_1],[t_2])\mapsto\sigma^T([t_1],[t_2]):=\langle t_1,G^T t_2\rangle_T\,.$$
In particular, $\sigma$ is non-degenerate on $\E^T$.
\item $j|_{\Gamma^T_\infty(\V)\times\Gamma_\infty^T(\V)}=j^T\circ(T\times T)$, where
$$j^T:C^\infty(M,T)\times C^\infty(M,T)\to  T^*M\,,\quad(t_1, t_2) \mapsto j^T_a(t_1,t_2):=-t^{ij}_1\nabla_a {t^{\phantom{j}}_2}_{ij}+t^{ij}_2\nabla_a {t^{\phantom{j}}_1}_{ij}\,.$$
\item For every $\Gamma_1$, $\Gamma_2\in \Sol_\infty^T$ with space-like compact overlapping support, and every Cauchy surface $\Sigma$ of $(M,g)$ with forward pointing unit normal vector field $n$
$$\langle \Gamma_1,\Gamma_2\rangle_\Sol=\langle T(\Gamma_1),T(\Gamma_2)\rangle_{\Sol^T}=\int\limits_\Sigma \vols \;n^a j^T_a\left(T(\Gamma_1),T(\Gamma_2)\right)\,,$$
where 
$$\langle \cdot,\cdot\rangle_{\Sol^T}: T[\Sol_\infty^T]\times T[\Sol_\infty^T]\to\bbR,\quad (T_1,T_2)\mapsto \langle T_1,T_2\rangle_{\Sol^T}:=\langle P^T T^+_1,T^{\phantom{+}}_2\!\rangle_T\,.$$
\end{enumerate}
\end{theo}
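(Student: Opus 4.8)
The plan is to establish Theorem~\ref{prop_tensorprop} as the tensor counterpart of the combined Theorems~\ref{prop_mutheoryprop} and~\ref{prop_mumapprop}, exploiting two features special to the tensor sector that make the argument considerably shorter than in the scalar case. First, tensor modes are manifestly gauge-invariant, so there is no quotient to take: $\G^T=\{0\}$ by \eqref{eq_def_splitsoletc} and $\Ker^T_0(K^\dagger)=\Gamma^T_0(\V)$ by \eqref{eq_simplespaces}. Second, the restriction of the tensor-extraction map $T$ to $\Gamma^T_\infty(\V)$ is \emph{local}: on a section that is already of tensor type it merely reads off $T_{ij}$ from the splitting \eqref{eq_def_split}, with no inverse Laplacian involved, hence it is a support-preserving bijection onto the transverse-traceless fields in $C^\infty(M,\bigvee^2\bbR^3)$.

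I would begin with the two structural facts on which everything else rests. Property~1 follows by a direct computation: inserting the tensor-type splitting \eqref{eq_def_split} into the inner-product formula \eqref{eq_bilinearV} — equivalently, reading off the pure-tensor term in the expression displayed in the proof of Lemma~\ref{prop_nondegtensor} — and discarding, by the orthogonality established there, all cross-terms with the scalar and vector parts, one is left precisely with $\langle T(\,\cdot\,),T(\,\cdot\,)\rangle_T$ once the conventional normalisation of the map $T$ is fixed; non-degeneracy of $\langle\cdot,\cdot\rangle_T$ is the tensor case of Lemma~\ref{prop_nondegtensor}. Property~2 is then immediate: the intertwining $T\circ P=P^T\circ T$ is a restatement of \eqref{eq_invP} together with the explicit form of $P^T$ in \eqref{eq_EOMten}, and formal selfadjointness of $P^T$ follows from that of $P$ by transporting it through the isometry of Property~1.

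Granting these, Properties~3 and~4 are formal: since $T$ is a support-preserving bijection on tensor-type sections intertwining $P$ with $P^T$, it carries $\Sol^T_{\infty/\sc}$ onto the (space-like compact) solutions of $P^T T=0$, and, using $\Ker^T_0(K^\dagger)=\Gamma^T_0(\V)$ together with $P[\Gamma^T_0(\V)]$ mapping onto $P^T[C^\infty_0(M,T)]$, it descends to a bijection on $\E^T$. The crux is Property~5, the intertwining with the causal propagator $F^{-1}\circ G^{\widetilde{P}}$. By Theorem~\ref{prop_specialsplitrelations} this operator respects the scalar/vector/tensor decomposition, so for $h\in\Gamma^T_0(\V)$ the section $F^{-1}G^{\widetilde{P}}h$ is again of tensor type and $T$ may be applied to it directly. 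On the tensor sector the gauge-fixing term of \eqref{eq_T} drops out — it annihilates tensor-type sections because $\W=TM$ carries no tensor part (cf.\ \eqref{eq_invK}, \eqref{eq_invKdagger}) — while the field redefinition acts as a scalar factor (cf.\ \eqref{eq_tracereversal}), so that $\widetilde{P}\circ F$ reduces to $P$, i.e.\ to $P^T$ under the isometry $T$. Since $F^{-1}G^{\widetilde{P}}h$ then solves $P^T\big(T(F^{-1}G^{\widetilde{P}}h)\big)=T(h)$ with the correct retarded-minus-advanced support, uniqueness of $G^T_\pm$ forces $T\circ F^{-1}\circ G^{\widetilde{P}}=G^T\circ T$ on $\Gamma^T_0(\V)$.

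Properties~6, 7 and~8 then follow with little further work: Property~6 combines Property~5 with the definitions of $\sigma$ and $\sigma^T$, non-degeneracy of $\sigma^T$ being proven as in Theorem~\ref{prop_mutheoryprop} (or reduced to the scalar-field argument of the cited reference), while Properties~7 and~8 reduce to a direct evaluation of the current $j$ of \eqref{eq_origcurrent} on tensor-type arguments — where, again by orthogonality, only the purely tensorial terms survive and reproduce $j^T$ — followed by an application of Lemma~\ref{prop_solcurrent}. The main obstacle I anticipate is Property~5: one must track carefully how the field redefinition $F$ (which rescales the tensor part by a field-dependent factor) and the gauge-fixing operator interact with the tensor projection, so as to certify that $F^{-1}\circ G^{\widetilde{P}}$ genuinely restricts to the undeformed tensor propagator $G^T$ and that the normalisations chosen in Properties~1 and~5 are mutually consistent. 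Everything else is either a routine index contraction or a formal consequence of the general framework assembled in Theorem~\ref{prop_specialsplitrelations}.
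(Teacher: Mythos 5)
Your proposal follows essentially the same route as the paper's (very terse) proof: Property 1 from the FLRW form of $\langle\cdot,\cdot\rangle_\V$ together with Lemma \ref{prop_nondegtensor}, Properties 2--5 from the intertwining $T\circ P=P^T\circ T$ combined with the triviality of the gauge fixing and of the field redefinition on the tensor sector (so that the restricted propagator is identified with $G^T$ by uniqueness of Green's operators), Property 6 from 5, Property 7 by direct computation, and Property 8 from 2, 7 and Lemma \ref{prop_solcurrent}. The only slip is your displayed identity $P^T\bigl(T(F^{-1}G^{\widetilde P}h)\bigr)=T(h)$, which is false for the causal propagator (its left-hand side vanishes, since $\widetilde P G^{\widetilde P}=0$); it holds for the retarded and advanced operators $G_\pm^{\widetilde P}$ separately, which---as your own appeal to the uniqueness of $G_\pm^T$ and to ``retarded-minus-advanced support'' indicates---is how one should establish $T\circ F^{-1}\circ G_\pm^{\widetilde P}=G_\pm^T\circ T$ before subtracting to obtain Property 5.
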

\begin{proof} 1. follows from the form of $\langle \cdot\,,\cdot\rangle_\V$ on FLRW backgrounds, see e.g. the proof of Lemma \ref{prop_nondegtensor}, whereas 2. is manifest from the definition of $P^T$ in \eqref{eq_EOMten} and the formal selfadjointness of $P$. 3., 4. and 5. follow from 2., where we note that $G^T$ is well defined because $P^T$ is diagonal and commutes with spatial derivatives. 6. follows from 5. whereas 7. can be shown by a direct computation. Finally, 8. follows from 2., 7. and Lemma \ref{prop_solcurrent}.
\end{proof}

As in the scalar case, the above results imply that the subalgebra $\A^T$ of $\A$, corresponding to the local tensor observables of the linearised Einstein-Klein-Gordon system on FLRW backgrounds, can be understood either as the result of quantizing the symplectic space $(\E^T,\sigma)\subset (\E,\sigma)$ or as the result of quantizing the symplectic space $(T[\E^T],\sigma^T)$, which is the canonical symplectic space associated to the normally hyperbolic equation $P^T T=0$. Note that, owing to the orthogonality result Lemma \ref{prop_nondegtensor}, $\sigma(\E^S,\E^T)=\{0\}$ and thus the subalgebras $\A^S$ and $\A^T$ of $\A$ commute.

\subsubsection{Are all local observables of scalar and tensor type?}

As anticipated, we now prove that $\E\neq \E_0=\E^S\oplus \E^T$, which implies that $\A$ is not generated solely by $\A^S$ and $\A^T$.

\begin{theo}$\E^0\subsetneq\E$ and $\Sol^S/\G^S_\sc\oplus \Sol^T/\G^T_\sc\subsetneq \Sol_\sc/\G_{\sc,0}$, i.e. not all local observables of the linearised Einstein-Klein-Gordon field theory can be split into local scalar and local tensor observables.
\end{theo}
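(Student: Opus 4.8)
The plan is to exhibit a single class lying in $\E$ but not in $\E^0=\E^S\oplus\E^T$, with the obstruction located entirely in the gauge-invariant tensor sector. By Theorem~\ref{prop_specialsplitrelations} together with the identifications of Section~\ref{sec_removingredefinition}, the map $F^{-1}\circ G^{\widetilde P}$ is an isomorphism carrying $\E^{S/T}$ onto $\Sol^{S/T}_\sc/\G^{S/T}_\sc$ and all of $\E$ onto $\Sol_\sc/\G_{\sc,0}$, while $\E^V=\{0\}$ by \eqref{eq_simplespaces}. Under this isomorphism $\E^0$ corresponds to $\Sol^S_\sc/\G^S_\sc\oplus\Sol^T_\sc/\G^T_\sc$, so it suffices to produce $\Gamma\in\Sol_\sc$ whose class $[\Gamma]$ does not lie in $\Sol^S_\sc/\G^S_\sc\oplus\Sol^T_\sc/\G^T_\sc$; both displayed proper inclusions then follow at once.

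The decisive step is a clean necessary condition obtained from the tensor projector $\PP^T_\V$. Suppose $[\Gamma]$ did decompose, i.e.\ $\Gamma=\Gamma^S+\Gamma^T+K\varsigma$ with $\Gamma^S\in\Sol^S_\sc$, $\Gamma^T\in\Sol^T_\sc$ and $\varsigma\in\Gamma_\sc(\W)$. Since $\Gamma,\Gamma^S,\Gamma^T,K\varsigma\in\Gamma_\sc(\V)\subset\Gamma_\infty(\V)$, I would apply $\PP^T_\V$ and use that it annihilates scalar-type sections, fixes tensor-type sections, and satisfies $\PP^T_\V\circ K=0$ by \eqref{eq_invK}; this yields $\PP^T_\V\Gamma=\Gamma^T\in\Sol^T_\sc$. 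Hence a \emph{necessary} condition for $[\Gamma]\in\E^0$ is that the gauge-invariant tensor part $\PP^T_\V\Gamma$ be space-like compact. The point is that, unlike the vector part, the tensor part cannot be repaired by any gauge transformation, precisely because tensor-type configurations are manifestly gauge-invariant.

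It then remains to construct $\Gamma\in\Sol_\sc$ whose tensor part $\PP^T_\V\Gamma$ fails to be space-like compact. I would fix a Cauchy surface $\Sigma=\{\tau=\mathrm{const}\}$, prescribe compactly supported Cauchy data on $\Sigma$ satisfying the constraint guaranteeing solvability, and solve the (gauge-fixed) Cauchy problem to obtain $\Gamma\in\Sol_\sc$. On such a surface the tensor part reduces to the spatial transverse-traceless projection of $\gamma_{ij}|_\Sigma$, that is, to the component $T_{ij}$ of the splitting \eqref{eq_def_split}; this projection is built from $\Delta^{-1}$ and is therefore non-local. Choosing the data so that this transverse-traceless projection has non-compact spatial support — possible because $\Delta^{-1}$ generically produces slowly decaying tails that do not all cancel — makes the Cauchy data of $\PP^T_\V\Gamma$ on $\Sigma$ non-compactly supported. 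Since $\PP^T_\V\Gamma$ solves the normally hyperbolic equation $P^T T=0$ of \eqref{eq_EOMten}, non-compact Cauchy data forces $\PP^T_\V\Gamma$ to be not space-like compact, so by the necessary condition above $[\Gamma]\notin\E^0$, proving the theorem.

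The main obstacle is this final construction: one must verify that compactly supported tensor data can indeed be chosen whose spatial transverse-traceless projection has non-compact support, and that such data is compatible with the full constraint for the linearised Einstein--Klein--Gordon system. The cleanest route is to write down an explicit $\gamma_{ij}|_\Sigma\in C_0^\infty(\Sigma,\bigvee^2\bbR^3)$ and compute the leading $1/r^3$ multipole of its Helmholtz/transverse-traceless decomposition, checking that it does not vanish, and then to complete it to admissible Cauchy data using the freedom in the remaining, unconstrained scalar and vector components. Once a single such datum is in hand, hyperbolic propagation together with the projector identities finishes the argument.
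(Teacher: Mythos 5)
Your reduction and your necessary condition are both sound: by Theorem \ref{prop_specialsplitrelations} and Section \ref{sec_removingredefinition} it indeed suffices to find $\Gamma\in\Sol_\sc$ whose class lies outside $\Sol^S_\sc/\G^S_\sc\oplus\Sol^T_\sc/\G^T_\sc$, and since $\Gamma_\sc(\V)\subset\Gamma_\infty(\V)$, $\Gamma_\sc(\W)\subset\Gamma_\infty(\W)$, applying $\PP^T_\V$ to a putative decomposition $\Gamma=\Gamma^S+\Gamma^T+K\varsigma$ and using \eqref{eq_invK} correctly forces $\PP^T_\V\Gamma=\Gamma^T\in\Gamma_\sc(\V)$. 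The genuine gap is the final existence step, which is not a technical afterthought but the entire difficulty: you must produce \emph{compactly supported} Cauchy data which (i) satisfies the linearised constraint $C((\Theta^0,\Theta^1))=0$ of \eqref{eq_ConstraintP} and (ii) has a spatial metric part whose transverse-traceless projection is not compactly supported, and you dismiss (i) with ``the freedom in the remaining, unconstrained scalar and vector components''. But the constraints live \emph{exactly} in the scalar and vector sectors (two scalar equations and the transverse vector equation $\Delta X_i=0$, cf.\ Section \ref{sec_fullFLRW}), i.e.\ they are relations among precisely those non-local projections whose exterior tails you want to engineer. Concretely: the vector constraint together with decay forces $V_i=W^\prime_i$, a pointwise identity between the vector projections of two \emph{independently chosen} compactly supported components of the data, which is not automatic; solving any constraint for one component in terms of the others involves $\Delta^{-1}$ and thus generically destroys compact support of the solved-for component; and the constraints impose global moment conditions on compactly supported data (already in the vacuum/static limit the linearised Hamiltonian constraint $\partial^i\partial^j\gamma_{ij}=\Delta\gamma^k_k$ forces $\int_\Sigma\gamma^k_k\,d^3x=0$, as one sees by testing against $u=-|\vec x|^2/4$), and this kills exactly the leading $1/r$ multipole of the scalar potential $E$ whose tail your ``generic non-cancellation'' argument implicitly relies on. Whether some higher multipole survives the constraints on an inflationary background is plausible but is precisely what must be proven; nothing in your sketch does so.

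For comparison, the paper proves the statement on the other side of the duality and thereby avoids the Cauchy/constraint problem altogether: it exhibits an explicit $h=(k_{ab},0)^T\in\Ker_0(K^\dagger)$, built from derivatives of an arbitrary $\lambda\in C^\infty_0(M,\bbR)$, together with an explicit vector-type solution $\Gamma\in\Sol^V$ that grows exponentially in space, and verifies by a short integration by parts that $\langle\Gamma,h\rangle_\V\neq 0$; since Lemma \ref{prop_nondegtensor} gives $\langle\Sol^V,h^\prime\rangle_\V=0$ for every representative $h^\prime$ of a class in $\E^0$, this yields $[h]\in\E\setminus\E^0$ immediately, and the second proper inclusion follows by the same isomorphism you invoke. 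Note the structural difference: the paper's obstruction is detected by vector modes that do \emph{not} vanish at spatial infinity (decaying vector modes being trivial, cf.\ \eqref{eq_simplespaces}), whereas you try to locate the obstruction in the tensor sector. Your route may well be viable, but as it stands it replaces a two-line explicit computation by an unsolved constrained-gluing problem; either carry out that construction in detail or switch to the dual strategy.
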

\begin{proof}The proof strategy is as follows. We give an example for $\Gamma\in\Sol^V$ and $h\in\Ker^0(K^\dagger)$ such that $\langle \Gamma, h\rangle_\V\neq 0$. Since by orthogonality, cf. Lemma \ref{prop_nondegtensor}, $\langle \Sol^V, h\rangle_\V=0$ for all $h\in[h]\in\E^0$, this implies that $[h]\in\E\setminus \E^0$. That said, we define $\Gamma\in\Gamma^V(\V)$ by 
$$W_i=0\,,\qquad V_i=\frac{1}{a^2 \alpha}\begin{pmatrix}
\exp (x_1) \sin (x_2) x_3\\\exp (x_1) \cos (x_2) x_3\\0
\end{pmatrix}\,,$$
and we define $h=(k_{ab},f)^T\in\Gamma_0(\V)$ by
$f=0$ and $$k_{ab}=\begin{pmatrix}0& \partial_3\partial_2 \lambda& 0 &-\partial_1\partial_2 \lambda\\
\partial_3\partial_2 \lambda & 0 & (\partial_\tau+2\H)\partial_3 \lambda & 0\\
0 & (\partial_\tau+2\H)\partial_3 \lambda & 0 & -(\partial_\tau+2\H)\partial_1 \lambda\\
-\partial_1\partial_2 \lambda & 0 & -(\partial_\tau+2\H)\partial_1 \lambda & 0
\end{pmatrix}\,,$$
where $\lambda\in C_0^\infty(M,\bbR)$ is arbitrary.
It is not difficult to check that $\Gamma\in\Sol^V$ and that $K^\dagger h=0$. Moreover, we can compute
$$\langle \Gamma, h\rangle_\V=-2\int \limits_M \vol\; \frac{\exp (x_1) \sin (x_2) x_3 \partial_3\partial_2 \lambda}{a^2 \alpha}=-2\int \limits_M \vol\; \frac{\exp (x_1) \cos (x_2)   \lambda}{a^2 \alpha}\,,$$
which is clearly non-vanishing for many $\lambda\in C_0^\infty(M,\bbR)$.
$\Sol^S/\G^S_\sc\oplus \Sol^T/\G^T_\sc\subsetneq \Sol_\sc/\G_{\sc,0}$ follows from $\E^0\subsetneq\E$ by Theorem \ref{prop_specialsplitrelations} and Section \ref{sec_removingredefinition}.
\end{proof}

This result implies in particular that $\E_0$ is not separating on $\Sol/\G$, i.e. there exist $[\Gamma]\in \Sol/\G$ such that $\langle [\Gamma],\E^0\rangle_\V=\{0\}$. Notwithstanding, we shall demonstrate in the next section that $\E^0$ is indeed separating on $\Sol_\infty/\G_\infty$.

\subsection{Separability of solutions and non-degeneracy of the presymplectic form}
\label{sec_separability}

We analyse two important structural properties of the classical and quantum theory of perturbations in inflation. 

\begin{theo}\label{prop_sepnondeg}
The linearised Einstein-Klein-Gordon system $(\M,\V,\W,P,K)$ on backgrounds $\M=(M,g,\phi)$ of FLRW type possesses the following properties.
\begin{enumerate}
%\item $\E$ is separating on $\Sol/\G$, i.e. $$\quad\langle \Gamma, h\rangle_\V=0\quad\forall\; [h]\in\E\qquad\Rightarrow\qquad [\Gamma]=[0]\in\Sol/\G\,.$$
\item $\E^0:=\E^S\oplus\E^V\oplus\E^T=\E^S\oplus\E^T$ is separating on $\Sol_\infty/\G_\infty$, i.e. $$\Gamma\in\Sol_\infty\quad\text{and}\quad\langle \Gamma, h\rangle_\V=0\quad\forall\; [h]\in\E^0\qquad\Rightarrow\qquad [\Gamma]=[0]\in\Sol_\infty/\G_\infty\,.$$
\item $\sigma$ is non-degenerate on $\E$, i.e. 
$$\sigma([h_1],[h_2])=0\quad\forall\, [h_1]\in \E\qquad\Rightarrow\qquad [h_2]=[0]\,.$$
\end{enumerate}
\end{theo}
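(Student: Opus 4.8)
The plan is to treat the two items in turn, the second resting on the first.

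\emph{Separability (item 1).} I would start from the splitting $\Sol_{\infty}/\G_{\infty}=\Sol_{\infty}^{S}/\G_{\infty}^{S}\oplus\Sol_{\infty}^{V}/\G_{\infty}^{V}\oplus\Sol_{\infty}^{T}/\G_{\infty}^{T}$ of Theorem~\ref{prop_specialsplitrelations}, recalling from \eqref{eq_simplespaces} that the vector summand is trivial, so that $[\Gamma]=[\Gamma^S]+[\Gamma^T]$ for any $\Gamma\in\Sol_\infty$. By the orthogonality in Lemma~\ref{prop_nondegtensor}, pairing $\Gamma$ against a scalar test section sees only $\Gamma^S$ and against a tensor test section only $\Gamma^T$; hence the hypothesis $\langle\Gamma,h\rangle_\V=0$ for all $[h]\in\E^0$ decouples into $\langle\Gamma^S,h\rangle_\V=0$ for all $[h]\in\E^S$ and $\langle\Gamma^T,h\rangle_\V=0$ for all $[h]\in\E^T$. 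For the tensor part, since $\Ker^T_0(K^\dagger)=\Gamma^T_0(\V)$ by \eqref{eq_simplespaces}, this is exactly the hypothesis of the tensor non-degeneracy in Lemma~\ref{prop_nondegtensor}, giving $\Gamma^T=0$. For the scalar part I would transport the pairing through the Mukhanov--Sasaki map: by Theorem~\ref{prop_mumapprop} one has $\langle\Gamma^S,h\rangle_\V=\langle\mu(\Gamma^S),\mu_0(h)\rangle_\mu$; since $\mu_0$ is onto $\E^\mu$ and $\mu(\Gamma^S)\in\Sol^\mu$ (so the pairing descends to $\E^\mu$ by formal selfadjointness of $P^\mu$, Theorem~\ref{prop_mutheoryprop}), non-degeneracy of the $L^2$-pairing $\langle\cdot,\cdot\rangle_\mu$ forces $\mu(\Gamma^S)=0$, and injectivity of $\mu$ on $\Sol^S_\infty/\G^S_\infty$ then yields $[\Gamma^S]=[0]$. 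Combining, $[\Gamma]=[0]$.

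\emph{Reduction of item 2.} Through the isomorphism $F^{-1}\circ G^{\widetilde{P}}:(\E,\sigma)\to(\Sol_\sc/\G_{\sc,0},\langle\cdot,\cdot\rangle_\Sol)$ of Section~\ref{sec_removingredefinition} and Theorem~\ref{propo_proptau2}, non-degeneracy of $\sigma$ is equivalent to non-degeneracy of $\langle\cdot,\cdot\rangle_\Sol$ on $\Sol_\sc/\G_{\sc,0}$. I would identify the radical explicitly: if $\sigma([h_1],[h_2])=0$ for all $[h_1]\in\E$, then $\Gamma:=F^{-1}G^{\widetilde{P}}h_2\in\Sol_\sc$ satisfies $\langle h_1,\Gamma\rangle_\V=0$ for every $h_1\in\Ker_0(K^\dagger)$; this is precisely the hypothesis of item 1 (note $\Sol_\sc\subset\Sol_\infty$), so $[\Gamma]=[0]$ in $\Sol_\infty/\G_\infty$, i.e. $\Gamma\in\G_\infty\cap\Gamma_\sc(\V)=\G_\sc$. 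Hence the radical of $\sigma$ is carried by $\G_\sc/\G_{\sc,0}$, and the whole statement reduces to the claim $\G_\sc=\G_{\sc,0}$ on FLRW backgrounds (that $\G_{\sc,0}\subsetneq\G_\sc$ would conversely force degeneracy is already the content of Theorem~\ref{propo_proptau2}).

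\emph{The key lemma $\G_\sc=\G_{\sc,0}$.} Let $\Gamma=K\varsigma\in\Gamma_\sc(\V)$ with $\varsigma\in\Gamma(\W)$. Since $K\varsigma=(\L_\varsigma g,\,\varsigma^c\nabla_c\phi)^T$, on the complement of $\supp\Gamma$ the field $\varsigma$ solves $K\varsigma=0$, i.e. it generates a local isometry of $(M,g)$ preserving $\phi$. On a flat FLRW background with non-constant $\phi=\phi(\tau)$ the solutions of $K\varsigma=0$ are exactly the spatial Euclidean Killing fields, all of which are globally defined and lie in $\Ker K$. Because the complement of the space-like compact set $\supp\Gamma$ is connected, the rigidity of Killing fields allows me to extend $\varsigma|_{M\setminus\supp\Gamma}$ to a global $\xi$ with $K\xi=0$; then $\varsigma':=\varsigma-\xi$ vanishes outside $\supp\Gamma$, hence is space-like compact, and $K\varsigma'=K\varsigma=\Gamma$, so $\Gamma\in\G_{\sc,0}$.

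\emph{Main obstacle.} The delicate point is exactly this last lemma. One must (i) characterise $\Ker K$ on a general flat FLRW background and verify that every solution of $K\varsigma=0$ on the connected exterior extends to a \emph{global} Killing field preserving $\phi$ -- a Killing-extension/rigidity argument that is transparent for analytic $a(\tau)$ but requires care in the smooth category and near times where $\phi'$ or $\H$ degenerate -- and (ii) track supports so that subtracting $\xi$ does not spoil space-like compactness. An equivalent but more computational route splits $\varsigma=\varsigma^S+\varsigma^V$ and invokes \eqref{eq_simplespaces}, namely $\G^{S/V}_\sc=\G^{S/V}_{\sc,0}$, whereupon the obstacle becomes showing that the scalar and vector parts $\Gamma^S,\Gamma^V$ of $\Gamma\in\G_\sc$ are \emph{individually} space-like compact; this amounts to controlling the supports of the Poisson potentials $B,E,V_i,W_i$ reconstructed from $\gamma_{0i},\gamma_{ij}$ via $\Delta^{-1}$, using Proposition~\ref{prop_splitbasic} together with the pure-gauge relations \eqref{eq_gaugetrafoFLRW}.
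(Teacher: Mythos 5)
Your treatment of item 1 is correct and is essentially the paper's own argument: split via Theorem~\ref{prop_specialsplitrelations}, kill the tensor part with Lemma~\ref{prop_nondegtensor}, and transport the scalar part through $(\mu,\mu_0)$ using Theorem~\ref{prop_mumapprop}. Your reduction of item 2 is also valid: given item 1 and the isomorphism $(\E,\sigma)\simeq(\Sol_\sc/\G_{\sc,0},\langle\cdot,\cdot\rangle_\Sol)$, the radical of $\sigma$ is exactly $\G_\sc/\G_{\sc,0}$, so non-degeneracy is equivalent to $\G_\sc=\G_{\sc,0}$ (the converse direction being Theorem~\ref{propo_proptau2}, part 7). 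The problem is that you never prove this key lemma, and both routes you sketch break down precisely at the point you flag. The Killing-extension route requires that every solution of $K\varsigma=0$ on the open set $M\setminus\supp\Gamma$ be the restriction of a \emph{global} element of $\Ker K$: in the smooth (non-analytic) category local Killing fields need not extend (Nomizu-type rigidity needs analyticity), the complement of a space-like compact set need not be connected (one must isolate the unbounded exterior component and then show its complement is space-like compact), and on special backgrounds with $\phi'\equiv 0$ (Minkowski- or de Sitter-like $a$) $\Ker K$ is strictly larger than the Euclidean algebra, so the classification you invoke is incomplete. The computational route hits the obstruction the paper itself emphasises: an element of $\G_\sc$ is $K\varsigma$ with $\varsigma$ merely smooth, the intertwining \eqref{eq_invK} of $K$ with the projectors is only available on $\Gamma_\infty(\W)$, and the projectors $\PP^{S/V}_\V$ involve $\Delta^{-1}$ and do not preserve space-like compact support, so one cannot conclude that $\Gamma^S$, $\Gamma^V$ are individually space-like compact. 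So item 2 of your proposal is genuinely incomplete.

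The paper closes exactly this gap by a different, direct argument that avoids splitting $\Gamma$ or $\varsigma$ altogether. Working with $\Gamma\in\Sol_\sc$ satisfying the vanishing conditions (so $T_{ij}=0$ and $\mu(\PP^S_\V\Gamma)=0$), it observes that conformal gauge is unavailable -- for precisely your reason: $\PP^S_\V\Gamma$ is not known to be space-like compact, and every gauge transformation used must lie in $\G_{\sc,0}$ -- and instead passes to synchronous gauge $A=B=V_i=0$, which Lemma~\ref{prop_confgauge} (part 3) guarantees is reachable by a transformation in $\G_{\sc,0}$ for space-like compact $\Gamma$. In that gauge $\mu(\PP^S_\V\Gamma)=0$ yields the pointwise relations \eqref{eq_temp}, and the support information \eqref{eq_bdecompsupp2} (that $A$, $D$, $\Delta E$, $\varphi$ are space-like compact) then allows an explicit trivialising gauge transformation with space-like compact parameter to be written down ($r=0$, $s=E$, $z_i=W_i$, the point being that only the combination $\partial_i E+W_i$, not $E$ and $W_i$ separately, needs space-like compact support), with the degenerate case $\H=\phi'=0$ treated separately. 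To complete your proof you would need to supply an argument of this strength for $\G_\sc=\G_{\sc,0}$ -- either this synchronous-gauge construction, or an actual rigidity theorem for $\Ker K$ on smooth FLRW backgrounds (e.g.\ via Liouville's theorem for conformal Killing fields of the flat metric, exploiting that such fields are polynomial on each constant-$\tau$ slice), neither of which is contained in your sketch.
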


\begin{proof}
{Proof of 1}: Since by Lemma \ref{prop_specialsplitrelations} $\Sol_\infty/\G_\infty=\Sol^S_\infty/\G^S_\infty\oplus \Sol^V_\infty/\G^V_\infty\oplus \Sol^T_\infty/\G^T_\infty$ and $\Sol^V_\infty/\G^V_\infty=\{0\}$, cf. \eqref{eq_simplespaces}, we can consider only solutions of the form $\Gamma = \Gamma_1+\Gamma_2$ with $\Gamma_1\in\Sol^S_\infty$ and $\Gamma_2\in\Sol^T_\infty$. Using the fourth statement of Theorem \ref{prop_mumapprop} and setting $h=h_1+h_2$ with $[h_1]\in\E^S$ and $[h_2]\in\E^T$, we can compute
$$\langle \Gamma, h\rangle_\V=\int\limits_M\vol\left( \mu(\Gamma_1)\mu_0(h_1)+T^{ij}t_{ij}\right).$$ Using now the injectivity of $\mu:\Sol^S_\infty/\G^S_\infty\to \Sol^\mu_\infty$ and the surjectivity of $\mu_0:\E^S\to\E^\mu$ (see Theorem \ref{prop_mumapprop}), $\Ker^T_0(K^\dagger)=\Gamma^T_0(\V)$, as well as Lemma \ref{prop_nondegtensor}, we find that $\Gamma\in\Sol_\infty$ and $\langle \Gamma, h\rangle_\V=0$ for all $[h]\in\E^0$ imply $\mu(\Gamma_1)=T_{ij}=0$ and thus $[\Gamma]=[0]$.\\\\
{Proof of 2}: We recall that $F^{-1}\circ G^{\widetilde{P}}$ defines a bijective map between $\E$ and $\Sol_\sc/\G_{\sc,0}$ and that $\sigma([h_1],[h_2])=\langle h_1,F^{-1}G^{\widetilde{P}}h_2\rangle_V$, cf. Section \ref{sec_removingredefinition}. Thus the statement is equivalent to 
$$\Gamma\in\Sol_\sc\quad\text{and}\quad\langle \Gamma, h\rangle_\V=0\quad\forall\; [h]\in\E
\qquad\Rightarrow\qquad [\Gamma]=[0]\in\Sol_\sc/\G_{\sc,0}\,,$$ i.e. to the separability of $\Sol_\sc/\G_{\sc,0}$ by $\E$. Since $\E^0\subset\E$, this follows already if we can prove that $\E^0$ separates $\Sol_\sc/\G_{\sc,0}$. To show this, we assume $\Gamma\in\Sol_\sc$ and $\langle \Gamma, h\rangle_\V=0$ for all $[h]\in\E^0$. Choosing $[h]\in\E^T$, we can argue as in the proof of the first part that $T_{ij}=0$, thus $\Gamma\in(\Sol^S_\infty\oplus \Sol^V_\infty)\cap\Sol_\sc$. Choosing instead $[h]\in\E^S$, we can again demonstrate as in the proof of part one that $\mu(\PP^S_\V \Gamma)=0$. We can not invoke the conformal gauge now, because we don't know whether  $\PP^S_\V\Gamma$ is space-like compact in the first place, whereas all gauge transformations we perform in this proof have to be elements of $\G_{\sc,0}$. Notwithstanding, we can compute in synchronous gauge $A=B=V_i$ as this is possible within $\G_{\sc,0}$ by Lemma \ref{prop_confgauge}. In this gauge, $\mu(\PP^S_\V \Gamma)=0$ implies by \eqref{eq_EOMmu2}, \eqref{eq_EOMmu3} and \eqref{eq_def_Bardeen} that  \beq\label{eq_temp}D=\H E^\prime\qquad \varphi=\phi^\prime E^\prime\qquad (\partial_\tau+\H)E^\prime=0\,.\eeq
\indent We now prove $\Gamma\in\G_{sc,0}$ under the assumption that either $\H\neq 0$ or $\phi^\prime\neq 0$ and consider the case $\H=\phi^\prime=0$ afterwards. That said, if $\H+\phi^\prime\neq 0$, we can infer from \eqref{eq_temp} via \eqref{eq_bdecompsupp2} that $E^\prime$ is space-like compact, and thus we can find a gauge transformation in $\G^S_\sc\subset \G_{\sc,0}$ (choosing $r=s^\prime=-E^\prime$ in \eqref{eq_def_splitgauge}) such that $D=\varphi=E^\prime = 0$ while maintaining $A=B=0$. Altogether, we find that, modulo $\G_{\sc,0}$, the only non-vanishing components of $\Gamma$ are $E$ and $W_i$, s.t. $\gamma_{ij}=2a^2(\partial_{i}\partial_j E+\partial_{(i} W_{j)})$. From $V_i=0$, the vector equation of motion \eqref{eq_EOMvec} and the vanishing of $W_i$ at spatial infinity we can infer that $W_i^\prime=0$. Moreover, since $\gamma_{ij}$ is space-like compact, we can deduce by setting $i=j$ that even $\partial_i E+W_i$ is space-like compact. Thus, by setting $r=0$, $s=E$ and $z_i=W_i$ in \eqref{eq_def_splitgauge}, we see that indeed $\Gamma\in\G_{sc,0}$.

\indent Finally, if $\H=\phi^\prime=0$, the synchronous gauge condition and \eqref{eq_temp} imply without further ado that the only non-vanishing components of $\Gamma$ are $E$ and $W_i$. As above, we can infer from this that   $\partial_i E+W_i$ is space-like compact and that $W_i^\prime=0$, whence $E^\prime$ is also space-like compact. Thus, as in the case $\H+\phi^\prime\neq 0$ we can achieve $E^\prime=0$ by a gauge transformation in $\G_{sc,0}$ which maintains $A=B=D=\varphi=V_i=0$ and leaves $W_i$ unchanged. Now $\Gamma\in\G_{sc,0}$ follows as above by setting $r=0$, $s=E$ and $z_i=W_i$ in \eqref{eq_def_splitgauge}. 
\end{proof}
Recall that the first statement of this theorem implies in physical terms that the set of classical observables labeled by $\E$ is large enough for distinguishing all classical pure states which vanish at spatial infinity. However, both statements of the above theorem are also relevant for the quantum theory of the linearised Einstein-Klein-Gordon system on FLRW backgrounds. Consider the polynomial quantum observable algebra $\A$ of the linearised Einstein-Klein-Gordon system on FLRW backgrounds, constructed out of $(\E,\sigma)$ as in Section \ref{sec_quant}, and recall that $\A$ is generated by smeared quantum fields $\Gamma(h)$. The second result of the above theorem implies that the the algebra $\A$ has a trivial centre, i.e. all elements which commute with the whole algebra are proportional to $\1$. Given any pure and Gaussian state $\langle\;\;\rangle_\omega$ on $\A$ and any classical solution $\Theta\in\Sol$, we may construct a coherent state $\langle\;\;\rangle_{\omega,\Theta}$ by defining an isomorphism $\iota:\A\to\A$ of $\A$ via $\A\ni\Gamma(h)\mapsto\iota(\Gamma(h)):=\Gamma(h)+\langle \Theta,h\rangle_\V\1$ and setting $\langle\;\;\rangle_{\omega,\Theta}:=\langle\;\;\rangle_{\omega}\circ\iota$. This coherent state has the one-point function $\langle\Gamma(h)\rangle_{\omega,\Theta}=\langle \Theta,h\rangle_\V$. Moreover, by extending the above splitting constructions to distributional sections, which is in principle straightforward as Proposition \ref{prop_splitbasic} is valid also for distributions, it is presumably possible to extend the above separability result in an inductive manner in order to prove that the subalgebra of $\A$ generated by the subalgebras $\A^S$ and $\A^T$ contains enough observables in order to distinguish quantum states whose correlation functions vanish at spatial infinity.

Unfortunately, we have not yet been able to prove the separability of the full solution space $\Sol/\G$ by $\E$, but the separability of $\Sol_\infty/\G_\infty$ by $\E^0$ is already sufficient in order to legitimate considering the smaller space $\E^0$ of observable labels rather than the full $\E$. The corresponding algebra of quantum observables is thus fully generated by local scalar and local tensor observables and corresponds to the standard quantization of perturbations in Inflation.

\subsection{Non-commutativity of the Bardeen potentials at space-like separations}
\label{sec_noncomm}

To close, we would like to comment on an interesting observation of \cite{Eltzner:2013soa}. The quantization of the scalar perturbations of inflation outlined in the previous sections implies that the quantum Mukhanov-Sasaki variable $\mu$ commutes at space-like separations. Indeed, by Section \ref{sec_inflationresults}, we have
$$[\mu(f_1),\mu(f_2)]=i \sigma^\mu([f_1],[f_2])\1=i\langle f_1,G^\mu f_2\rangle_\mu\1$$
where the smeared quantum field $\mu(f)$ may be interpreted as the quantization of the classical observable $\Sol^\mu\ni \mu \mapsto \langle \mu,f\rangle_\mu=\int_M\vol\, \mu f$. The above commutation relations may be re-written in ``unsmeared'' form as
$$[\mu(x),\mu(y)]=iG^\mu(x,y)\1$$
where $G^\mu(x,y)$ is the integral kernel of $G^\mu$. The construction of $G^\mu$ then implies that $G^\mu(x,y)$ vanishes for space-like separated $x$ and $y$. A further equivalent version of the commutation relations may be obtained from the fifth and sixth statement of Theorem \ref{prop_mutheoryprop}, which imply the equal-time relations\footnote{The $a^{-2}$ factor in the commutation relations may be removed by a conformal transformation $u:=a\mu$, whereby $u$ can be interpreted as a quantum field on Minkowski spacetime.}
\beq\label{eq_eqtccrmu}[\mu(\tau,\vec{x}),\mu(\tau,\vec{y})]=[\mu^\prime(\tau,\vec{x}),\mu^\prime(\tau,\vec{y})]=0\,,\qquad[\mu(\tau,\vec{x}),\mu^\prime(\tau,\vec{y})]=ia(\tau)^{-2}\delta(\vec{x},\vec{y})\1\,.\eeq

In \cite{Eltzner:2013soa} it has been found that these commutation relations for $\mu$ imply that the quantized gauge-invariant potentials $\Phi$, $\Psi$ and $\chi$ do {\it not} commute at space-like separations. This has been interpreted as a potential sign of ``non-commutativity of inflation'' as these potentials are gauge-invariant and thus have a clear physical meaning. In the following, we would like to argue that this non-commutativity is not surprising as a similar phenomenon occurs in any local quantum field theory. For simplicity we consider only the minimally coupled case $\xi=0$ (like \cite{Eltzner:2013soa}) and the Bardeen potential $\Psi$.

For convenience of the reader, we recall the relevant equations of motion.
\beq\label{eq_noncommeqns}
\left.\begin{array}{r}
\mu=\frac{1}{f_1 f_2}\Psi^\prime -\frac{f^\prime_1}{f^2_1 f_2}\Psi\quad\text{(a)}\\
\mu^\prime=f_3 \mu + \frac{1}{f_1 f_2}\Delta\Psi\quad\text{(b)}\\
\left(\partial^2_\tau-\Delta-\left(\frac{f_2}{a}\right)^{\prime\prime}\frac{a}{f_2}\right)a\mu=0\quad\text{(c)}
\end{array}\right\}\Leftrightarrow\left\{\begin{array}{r}
\Psi=f_1\left(\lambda_0+\int\limits^\tau_{\tau_0}d\tau_1 f_2 \mu \right)\quad\text{(d)}\\
\Delta\lambda_0 = \left.f_2\left(\mu^\prime-f_3 \mu\right)\right\vert_{\tau=\tau_0}\quad \text{(e)}\\
\quad\qquad\qquad\text{"}\qquad\quad\qquad\quad\text{(c)}
\end{array}\right.
\eeq
Here, $\tau_0$ is arbitrary, but fixed and the three purely time-dependent functions $f_i$ are defined as
$$f_1:=\frac{\H}{2a^2}\,,\qquad f_2:=\frac{a^2\phi^\prime}{\H}\,,\qquad f_3:=\frac{(f_1 f_2)^{\prime}}{f_1f_2}-\frac{(a^2f_1)^\prime}{a^2 f_1}\,.$$
The argument of \cite[Lemma 5.3.]{Eltzner:2013soa} now proceeds as follows. Assuming
\beq\label{eq_eltzner1}
\left[\Psi(\tau,\vec{x}),\Psi(\tau,\vec{y})\right]=0=\left[\Psi^\prime(\tau,\vec{x}),\Psi^\prime(\tau,\vec{y})\right],
\eeq
one obtains via \eqref{eq_eqtccrmu} and \eqref{eq_noncommeqns} (a) and (b)
$$\frac{i\delta(\vec{x}-\vec{y})}{a^2}\1=\left[\mu(\tau,\vec{x}),\mu^\prime(\tau,\vec{y})\right]=-\frac{1}{f^2_1 f^2_2}\Delta_{\vec{y}}\left[\Psi(\tau,\vec{y}),\Psi^\prime(\tau,\vec{x})\right]\,.$$
which implies
\beq\label{eq_eltzner2}\left[\Psi(\tau,\vec{y}),\Psi^\prime(\tau,\vec{x})\right]=-\frac{ia^2}{4\pi f_1^2 f_2^2}\frac{1}{|\vec{x}-\vec{y}|}\1\eeq
if one assumes that this commutator vanishes at spatial infinity. Clearly, this shows that $\Psi$ does not satisfy local commutation relations.

In order to proceed with our argument regarding the interpretation of the local non-commutativity of $\Psi$, we derive this result in a slightly different way. Namely, using the equations of motion in the form \eqref{eq_noncommeqns} (d) and (e), one finds that the covariant commutator of $\Psi$ is of the form

\beq\label{eq_noncommfull}\left[\Psi(x),\Psi(y)\right]=i(G_1^\Psi(x,y)+G_2^\Psi(x,y)+G_3^\Psi(x,y))\1\,,\eeq
$$G_1^\Psi(\tau_x,\vec{x},\tau_y,\vec{y}):=f_1(\tau_x)f_1(\tau_y)\int\limits^{\tau_x}_{\tau_0}\int\limits^{\tau_y}_{\tau_0}d\tau_1d\tau_2\,
f_2(\tau_1)f_2(\tau_2)G^\mu(\tau_1,\vec{x},\tau_2,\vec{y})\,,$$
$$G_2^\Psi(\tau_x,\vec{x},\tau_y,\vec{y}):=f_1(\tau_x)f_1(\tau_y)\left(\int\limits^{\tau_x}_{\tau_0}d\tau_1\,
f_2(\tau_1)\widetilde{G}_2^\Psi(\tau_1,\vec{x},\vec{y})-\int\limits^{\tau_y}_{\tau_0}d\tau_1\,
f_2(\tau_1)\widetilde{G}_2^\Psi(\tau_1,\vec{y},\vec{x})\right)\,,$$
$$\Delta_{\vec{y}}\,\widetilde{G}_2^\Psi(\tau_x,\vec{x},\vec{y})\1:=f_2(\tau_0)\left[\mu(\tau_x,\vec{x}),\mu^\prime(\tau_0,\vec{y})-f_3(\tau_0)\mu(\tau_0,\vec{y})\right]\,,$$
$$\Delta_{\vec{x}}\Delta_{\vec{y}}\frac{G_3^\Psi(\tau_x,\vec{x},\tau_y,\vec{y})}{f_1(\tau_x)f_1(\tau_y)}\1=f_2(\tau_0)^2\left[\mu^\prime(\tau_0,\vec{x})-f_3(\tau_0)\mu(\tau_0,\vec{x}),\mu^\prime(\tau_0,\vec{y})-f_3(\tau_0)\mu(\tau_0,\vec{y})\right]=0\,.$$
Assuming the vanishing of $G_3^\Psi(\tau_x,\vec{x},\tau_y,\vec{y})$ at spatial infinity, one finds that this contribution to the commutator vanishes. The remaining contributions $G_1^\Psi$ and $G_2^\Psi$ display two kinds of local non-commutativity behaviour. Whereas $G_2^\Psi$ is the result of integrating the initial conditions \eqref{eq_eltzner1} and \eqref{eq_eltzner2} at $\tau=\tau_0$ (by means of the normally hyperbolic equation satisfied by $\Psi$ which may be inferred from \eqref{eq_noncommeqns}), and is thus strictly non-local, $G_1^\Psi(\tau_1,\vec{x},\tau_2,\vec{y})$ still vanishes if $|\vec{x}-\vec{y}|$ is sufficiently large. Note that this implies that the assumptions \eqref{eq_eltzner1} are not met at arbitrary $\tau$, but only at $\tau=\tau_0$. Certainly this does not invalidate the qualitative local non-commutativity result of  \cite{Eltzner:2013soa}, it just shows that \eqref{eq_eltzner1} and \eqref{eq_eltzner2} do not capture the full local non-commutativity of $\Psi$.

By deriving this local non-commutativity in the form \eqref{eq_noncommfull}, we arrive at natural interpretation of this feature. Indeed \eqref{eq_noncommeqns} (d) and (e) clearly display that $\Psi$ is a  functional of $\mu$ which is non-local in both time and in space. Thus it is no surprise that $\Psi$ does not commute at space-like separations upon quantization. Equivalently, one may say that the smeared quantum field $\Psi(f)$ is not contained in the algebra of local observables $\A^S=\A^\mu$ for general test functions $f$, but only for $f$ which are suitable derivatives of other test functions. However, we believe that one should not attribute any physical meaning to this non-locality, despite of $\Psi$ being gauge-invariant and thus a valid observable, because one may easily construct non-local observables out of local ones in a similar fashion as above in any local quantum field theory.

%%%%%%%%%%%%%%%%%%%%%%%%%%%%%%%%%%%%%%%%%%%%%%%%
%%%%%%%%%%%%%%%%%%%%%%%%%%%%%%%%%%%%%%%%%%%%%%%%

\section*{Acknowledgements}
The work of T.-P. H. is supported by a research fellowship 
of the Deutsche Forschungsgemeinschaft (DFG). Many symbolic computations have been performed with ``Ricci'', a Mathematica package for symbolic tensor calculus by J.~M.~Lee, available at [www.math.washington.edu/$\sim$lee/Ricci].

%%%%%%%%%%%%%%%%%%%%%%%%%%%%%%%%%%%%%%%%%%%%%%%%
%%%%%%%%%%%%%%%%%%%%%%%%%%%%%%%%%%%%%%%%%%%%%%%%

\appendix

\section{A few details}

\subsection{Mapping \texorpdfstring{$\xi\neq0$ to $\xi=0$}{xi<>0 to xi=0}}
\label{sec_EinsteinJordan}

The coupled system \eqref{eq_fullcoupled} for $(g,\phi)$ and non-minimal coupling $\xi\neq 0$ can me mapped to the simplified case $\xi= 0$ by redefining $(g,\phi)$ in the following way, which is a generalised conformal transformation \cite{Futamase:1987ua, Salopek:1988qh, Makino:1991sg}.
$$\widebar {g}:= \alpha g\qquad \frac{d\widebar {\phi}}{d\phi}=\frac{\sqrt{\beta}}{\alpha}\qquad \widebar {V}(\widebar {\phi}):= \frac{V(\phi)}{\alpha^2}$$
Here, as before, $\alpha = 1-\xi\phi^2$ and $\beta = 1+\kappa\xi\phi^2$, $\kappa=\left(6\xi-1\right)\xi$. With these definitions, the fields $(\widebar {g},\widebar {\phi})$, defining the ``Einstein frame'' in contrast to the original ``Jordan frame'' specified by $(g,\phi)$, satisfy the coupled Einstein-Klein-Gordon equations with $\xi=0$ and the potential $\widebar {V}$. Under our standing assumptions that $\alpha$ and $\beta$ don't vanish anywhere on $M$ (and thus have definite sign by continuity), $\widebar {\phi}$ can be integrated as
$$\widebar {\phi}=\frac{-\sqrt{\kappa} \,\text{asinh}(\sqrt{\kappa}\phi)+\sqrt{\xi+\kappa}\,\text{atanh}\left(\frac{\phi \sqrt{\xi+\kappa}}{\sqrt{\beta}}\right)}{\xi}\,.$$

For the geometric quantities and perturbations in FLRW spacetimes the transformation amounts to 
$$\widebar {a}:=\sqrt{\alpha} a\quad\Rightarrow\quad \widebar {\H}:= \H+\sqrt{\alpha}^\prime=\H-\frac{\xi\phi^\prime \phi}{\alpha}$$
$$\widebar {X}:= X-\frac{\xi\phi}{\alpha}\varphi\qquad\widebar {Y}:=Y\qquad \widebar {\varphi}:=\frac{\sqrt{\beta}}{\alpha}\varphi\,,$$
where $X\in\{A,B,D,E\}$ and $Y\in\{V_i,W_i,T_{ij}\}$.

\subsection{The full expressions for \texorpdfstring{$P$}{P}, \texorpdfstring{$\overline{P}$}{Pbar}, and \texorpdfstring{$\widetilde{P}$}{Ptilde} }

The full original form of the linearised equation of motion operator reads

\begin{gather}P:\Gamma(\V)\to\Gamma(\V)\qquad P=\begin{pmatrix}P_0 & P_2\\P_3 & P_1\end{pmatrix}\label{eq_originaleom}\\
(P_0 \gamma)_{ab}=\frac{1-\xi\phi^2}{4}\left(-\nabla^c\nabla_c \gamma_{ab} + 2 \nabla^c\nabla^{\phantom{c}}_{(a}\gamma_{b)c}-g_{ab}\nabla^c\nabla^d\gamma_{cd}-\nabla_a\nabla_b{\gamma}_{c}^{\phantom{c}c}+g_{ab}\nabla_c\nabla^c{\gamma}_{d}^{\phantom{c}d}+\right.\notag\\
\left. +g_{ab}R^{cd}\gamma_{cd}-R\gamma_{ab}\right)+
\left(\frac{1-4\xi}{4}(\nabla_c\phi)(\nabla^c\phi)+\frac12 V-\xi\phi(\nabla_c\nabla^c\phi)\right) \gamma_{ab}+\xi g_{ab}\gamma_{cd}\phi \nabla^c\nabla^d\phi+\notag\\
+\xi\phi(\nabla^c\phi)\left(\frac{1}{2}\nabla_c\gamma_{ab}-\nabla^{\phantom{2}}_{(a}\gamma_{b)c}+ g_{ab}\nabla^d\gamma_{cd}-\frac{1}{2}g_{ab}\nabla_c{\gamma}_{d}^{\phantom{c}d}\right)+\frac{4\xi-1}{4}g_{ab}\gamma_{cd}(\nabla^c\phi)\nabla^d\phi \notag\\
(P_2\varphi)_{ab}=\left\{g_{ab}\left(-\xi \phi \nabla_c\nabla^c+\frac{1-4\xi}{2}(\nabla^c\phi)\nabla_c+\frac12 \partial_\phi V+\frac{\xi}{2}R\phi-\xi(\nabla_c\nabla^c\phi)\right)+\right.\notag\\
\left.+\xi \phi \nabla_a\nabla_b+\left(2\xi-1\right)\left(\nabla_{(a}\phi\right)\nabla_{b)}+\xi(\nabla_a\nabla_b\phi)-\xi R_{ab}\phi\phantom{\frac12}\!\!\!\!\right\}\varphi\notag\\
P_3 \gamma = \left(\xi\phi \nabla^c\nabla^d-\xi\phi\nabla_a\nabla^a g^{cd}-\xi\phi R^{cd}+(\nabla^c\phi)\nabla^d-\frac12 (\nabla^a\phi)\nabla_a g^{cd}+(\nabla^c\nabla^d\phi)\right) \gamma_{cd}\notag\\
P_1\varphi=\left(-\nabla_c\nabla^c+\xi R + \partial^2_\phi V\right)\varphi\notag\,.
\end{gather}
We recall
$$\alpha := 1-\xi\phi^2\qquad\beta := 1+\kappa\phi^2\qquad \kappa := \left(6\xi-1\right)\xi\,.$$
With these abbreviations, the full form of the linearised equation of motion operator after the field redefinition is

\begin{equation}\label{eq_redefinedP}
\widebar {P}:\Gamma(\V)\to\Gamma(\V)\qquad \widebar {P}=\begin{pmatrix}\widebar {P}_0 & \widebar {P}_2\\\widebar {P}_3 & \widebar {P}_1\end{pmatrix}\end{equation}

$$(\widebar {P}_0 \theta)_{ab}=-\nabla_c\nabla^c \theta_{ab}+2\nabla^c \nabla_{(a} \theta_{b)c} - g_{ab} \nabla^c\nabla^d \theta_{cd}+\frac{4\xi\phi}{\alpha}(\nabla_{(a}\phi)\nabla^c \theta_{b)c}-\frac{2\xi\phi}{\alpha}(\nabla^{c}\phi)\nabla_c \theta_{ab}-$$$$-\frac{2}{\alpha}V\theta_{ab}-\frac{2\xi}{\beta\alpha}(\nabla_{a}\phi)(\nabla_{b}\phi)\theta_c^{\phantom{c}c}+\frac{4\xi(2-\alpha)}{\alpha^2}(\nabla^{c}\phi)(\nabla_{(a}\phi)\theta_{b)c}-\frac{4\xi^2\phi^2}{\alpha^2}(\nabla\phi)^2\theta_{ab}+\frac{4\xi\phi}{\alpha}(\nabla^{c}\nabla_{(a}\phi)\theta_{b)c}$$
 
$$\left(\widebar {P}_2 \zeta\right)_{ab} = \left(-(\nabla_{(a}\phi)\nabla_{b)}+\frac{\kappa\phi}{\beta}(\nabla_{a}\phi)(\nabla_{b}\phi)+\frac{1}{2}g_{ab}(\nabla^{c}\phi)\nabla_{c}+\right.$$$$\left.+\frac{4\xi\phi V + \alpha \partial_\phi V}{2\beta}g_{ab}-\frac{\kappa\phi}{2\beta}(\nabla\phi)^2g_{ab}\right)\zeta$$ 
 
$$\widebar {P}_3 \theta = 4\left(\frac{\xi\phi}{\alpha}\nabla^a\nabla^b+\frac{\xi\phi\left(2\xi\alpha-\alpha-2\xi\right)}{\beta\alpha^2}V g^{ab}+\frac{\xi\alpha-\alpha-2\xi}{2\beta\alpha}\partial_\phi V g^{ab}+\frac{\xi\phi}{2\beta}\partial^2_\phi V g^{ab}+\right.$$$$\left.+\frac{3\xi^2\phi\left(4\xi-\beta\right)}{\beta^2\alpha^2}\left(\nabla\phi\right)^2g^{ab}+\frac{4\xi+\alpha-4\xi\alpha}{\alpha^2}\left(\nabla^a\phi\right)\nabla^b-\frac{\xi}{\beta\alpha}\left(\nabla^c\phi\right)\nabla_c g^{ab}+\right.$$$$\left.+
\frac{\xi\phi\left(8\xi+\alpha-4\xi\alpha\right)}{\alpha^3}(\nabla^a\phi)(\nabla^b\phi)+\frac{4\xi+\alpha-4\xi\alpha}{\alpha^2}(\nabla^a\nabla^b\phi)\right)\theta_{ab}$$
 
$$\widebar {P}_1 \zeta = \left(-\nabla_a\nabla^a+\xi R + \frac{2\xi\left(1+3\xi\right)}{\beta}\phi\partial_\phi V+\frac{\alpha}{\beta}\partial^2_\phi V+\right.$$$$\left.+\frac{2\kappa\phi}{\beta}\left(\nabla^a\phi\right)\nabla_a-\frac{2\kappa\left(\kappa\phi^2-1\right)\left(\nabla\phi\right)^2}{\beta^2}\right)\zeta\,.$$ 

Finally, the full form of the gauge-fixed equation of motion operator $\widetilde{P}$ reads

\begin{equation}\label{eq_PTilde}
\widetilde{P}:\Gamma(\V)\to\Gamma(\V)\qquad \widetilde{P}=\begin{pmatrix}\widetilde{P}_0 & \widetilde{P}_2\\\widetilde{P}_3 & \widetilde{P}_1\end{pmatrix}
\end{equation}
$$(\widetilde{P}_0 \theta)_{ab}=-\nabla_c\nabla^c \theta_{ab}-2R_{a\phantom{b}b}^{\phantom{a}c\phantom{a}d}\theta_{cd}-\frac{8\xi^2\phi^2}{\beta\alpha}V\theta_{ab}-\frac{2\xi\phi}{\beta}\partial_\phi V \theta_{ab}+\frac{4\xi\phi}{\alpha}(\nabla_{(a}\phi)\nabla^c\theta_{b)c}-$$$$-\frac{2\xi}{\beta\alpha}(\nabla_a \phi)(\nabla_b \phi){\theta}_{c}^{\phantom{c}c}-\frac{2\xi\phi}{\alpha}(\nabla^c \phi)\nabla_c \theta_{ab}+\frac{2\left(4\xi+\alpha-4\xi\alpha\right)}{\alpha^2}(\nabla^c \phi)(\nabla_{(a} \phi)\theta_{b)c}-$$$$-\frac{2\xi\left(\alpha+2\beta \xi\phi^2\right)}{\beta\alpha^2}\left(\nabla\phi\right)^2 \theta_{ab}-\frac{2\xi\phi}{\alpha}g_{ab}(\nabla^d\phi)\nabla^c\theta_{cd}$$
$$\left(\widetilde{P}_2 \zeta\right)_{ab} =\left(\frac{\kappa\phi}{\beta}(\nabla_{a}\phi)(\nabla_{b}\phi)+\frac{\xi\phi}{\alpha}(\nabla\phi)^2g_{ab}+ (\nabla_a\nabla_b\phi)\right)\zeta$$
$$\widetilde{P}_3 \theta = 4\left(\frac{\xi\phi\left(2\xi\alpha-\alpha-2\xi\right)}{\beta\alpha^2}V g^{ab}+\frac{\xi\alpha-\alpha-2\xi}{2\beta\alpha}\partial_\phi V g^{ab}+\frac{\xi\phi}{2\beta}\partial^2_\phi V g^{ab}-\right.$$$$\left.-\frac{3\xi^2\phi\left(\beta-4 \xi\right)}{\beta^2\alpha^2}\left(\nabla\phi\right)^2g^{ab}+\frac{2\xi^2\phi^2}{\alpha^2}\left(\nabla^a\phi\right)\nabla^b-\frac{\xi}{\beta\alpha}\left(\nabla^c\phi\right)\nabla_c g^{ab}+\right.$$$$\left.+
\frac{\xi\phi\left(8\xi+\alpha-4\xi\alpha\right)}{\alpha^3}(\nabla^a\phi)(\nabla^b\phi)+\frac{4\xi+\alpha-4\xi\alpha}{\alpha^2}(\nabla^a\nabla^b\phi)\right)\theta_{ab}$$
$$\widetilde{P}_1 \zeta = \left(-\nabla_a\nabla^a+\frac{\xi\left(2-\alpha\right)}{\alpha} R + \frac{2\xi\left(\beta+\alpha+3\xi\alpha\right)}{\beta\alpha}\phi\partial_\phi V+\frac{\alpha}{\beta}\partial^2_\phi V+\right.$$$$\left.+\frac{12\xi^2\phi}{\beta\alpha}\left(\nabla^a\phi\right)\nabla_a+\frac{2\left(12 \beta \xi^2+\beta^2\alpha-\beta^2\xi\alpha+12\xi^2\alpha - 18\beta\xi^2 \alpha\right)\left(\nabla\phi\right)^2}{\beta^2\alpha^2}\right)\zeta\,.$$ 

\subsection{The linearised equations on FLRW backgrounds}
\label{sec_fullFLRW}

We display the scalar, vector and tensor parts of $P\Gamma$ for $\Gamma\in\Gamma_\infty(\V)$, expressed in terms of the gauge invariant components $\Phi$, $\Psi$, $\chi$, $X_i$, $T_{ij}$.

$$-2a^2 A(P\Gamma)=-\alpha \Delta\Phi+\xi\phi\Delta\chi+\left(3\alpha\H-3\xi\phi\phi^\prime\right)\Phi^\prime-\left(3\xi\H\phi+\frac{\phi^\prime}{2}\right)\chi^\prime-$$$$-a^2V\Psi+\left(3\xi\H^\prime\phi+\left(1-3\xi\right)\H\phi^\prime+\frac{\phi^{\prime\prime}}{2}\right)\chi$$
$$-a^2 B(P\Gamma)=-\alpha \Phi^\prime+\left(\partial_\tau - \H\right)\xi\phi\chi+\left(\alpha\H-\xi\phi\phi^\prime\right)\Psi-\frac{\phi^\prime}{2}\chi$$
$$2a^2D(P\Gamma)=-\alpha \Phi^{\prime\prime}+\xi\phi \chi^{\prime\prime}+\frac{\alpha}{2} \Delta\left(\Psi+ \Phi\right)-\xi\phi \Delta \chi+\left(\alpha\H-\xi\phi\phi^\prime\right)\Psi^\prime+\left(2\xi\phi\phi^\prime-2\alpha\H\right)\Phi^\prime+$$
$$+\left(\xi\phi\H+\frac{4\xi-1}{2}\phi^\prime\right)\chi^\prime+\frac{1}{2}\left(2\alpha\H^2+4\alpha\H^\prime-4\H\xi\phi\phi^\prime+(1-4\xi)(\phi^\prime)^2-4\xi\phi\phi^{\prime\prime}\right)\left(\Psi-\Phi\right)+$$
$$+a^2V\Phi-\left(2\xi\H^2\phi+\xi\H^\prime\phi+(1-\xi)\H\phi^\prime+\frac{1-2\xi}{2}\phi^{\prime\prime}\right)\chi=0$$
$$2a^2 E(P\Gamma)=\frac{1}{2}\left(2\xi\phi\chi-\alpha\left(\Psi+\Phi\right)\right)$$
$$a^2\varphi(P\Gamma)=6\xi\phi \Phi^{\prime\prime}+\chi^{\prime\prime}-2\xi\phi\Delta(\Psi+2\Phi)-\Delta\chi-\left(6\xi\H\phi+\phi^\prime\right)(\Psi^\prime-3\Phi^\prime)+2\H\chi^\prime-$$$$-2\left(6\xi\H^2\phi+6\xi\H^\prime\phi+2\H\phi^\prime+\phi^{\prime\prime}\right)\Psi+\left(6\xi\H^2+6\xi\H^\prime+a^2V^{\prime\prime}\right)\chi$$
$$a^2 V_i(P\Gamma)=\frac{\alpha}{4}\Delta X_i\qquad a^2 W_i(P\Gamma)=(\partial_\tau+2\H)\frac{\alpha}{4}X_i$$
$$2a^2 T_{ij}(P\Gamma)=\frac{1}{2}\left((\partial_\tau+2\H)\alpha\partial_\tau-\alpha \Delta \right)T_{ij}$$

Due to gauge invariance, i.e. $K^\dagger P=0$, in case of $P\Gamma=0$ the two scalar equations with second time derivatives $\varphi(P\Gamma)=0$ and $D(P\Gamma)=0$ follow from the remaining three scalar equations.

\subsection{Hyperbolic operators and splittings}
\label{sec_splithyp}

We prove a few important results regarding the interplay of hyperbolic operators and the splitting into scalar, vector and tensor pieces on FLRW backgrounds. 

\begin{propo}
\label{prop_splithyp} Let $(M,g)$ be a globally hyperbolic FLRW spacetime, let $\V:= \bigvee^2 T^*M \oplus \left(M\times \bbR\right)$ and $\W := TM$ and let $P^\V:\Gamma(\V)\to\Gamma(\V)$ and $P^\W:\Gamma(\W)\to\Gamma(\W)$ be Cauchy-hyperbolic operators, i.e. $P^\V$ and $P^\W$ have a well-posed Cauchy problem and thus in particular unique advanced and retarded Green's operators $G^{P^\V}_\pm$ and $G^{P^\W}_\pm$ for $P^\V$ and $P^\W$ respectively exist, cf. \cite{Bar2}. Moreover, let $G^{P^\V}:=G_+^{P^\V}-G_-^{P^\V}$ and $G^{P^\W}:=G_+^{P^\W}-G_-^{P^\W}$ and assume that $$\PP^{S/V/T}_\V\circ P^\V|_{\Gamma_\infty(\V)}=P^\V\circ \PP^{S/V/T}_\V\,,$$ $$\PP^{S/V}_\W\circ P^\W|_{\Gamma_\infty(\W)}=P^\W\circ \PP^{S/V}_\W\,,$$ where $\PP^{S/V/T}_\V$ and $\PP^{S/V}_\W$ are defined in  \eqref{eq_def_proj}. Then, the following statements hold.
\begin{enumerate}
\item If $\Gamma\in\Gamma^{S/V/T}_\sc(\V)$ satisfies $P^\V\Gamma=0$ then $\Gamma$ can be written as $\Gamma=G^{P^\V}h$ with $h\in\Gamma^{S/V/T}_0$. If $\varsigma\in\Gamma^{S/V}_\sc(\W)$ satisfies $P^\W\varsigma=0$ then $\varsigma$ can be written as $\varsigma=G^{P^\W}f$ with $f\in\Gamma^{S/V}_0(\W)$.
\item If $\Gamma_1\in\Gamma^{S/V/T}_\sc(\V)$, $P^\V \Gamma_2 = \Gamma_1$ can be solved by a $\Gamma_2\in\Gamma^{S/V/T}_\sc(\V)$. If $\varsigma_1\in\Gamma^{S/V}_\sc(\W)$, $P^\V \varsigma_2 = \varsigma_1$ can be solved by a $\varsigma_2\in\Gamma^{S/V}_\sc(\W)$.
\item If $P^\V h\in\Gamma^{S/V/T}_0(\V)$ and $h\in\Gamma_0(\V)$, then $h\in\Gamma^{S/V/T}_0(\V)$. If $P^\W f\in\Gamma^{S/V}_0(\W)$ and $f\in\Gamma_0(\W)$, then $f\in\Gamma^{S/V}_0(\W)$.
\end{enumerate}
\end{propo}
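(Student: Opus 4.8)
The three parts share a single engine, which I would isolate first. The plan is to record two elementary facts. (a) Since $P^\V$ is Cauchy-hyperbolic it has finite propagation speed, so any $\Theta\in\Gamma(\V)$ with $P^\V\Theta=0$ and vanishing Cauchy data on one slice $\Sigma_{\tau_0}=\{\tau=\tau_0\}$ vanishes identically; crucially this uniqueness requires \emph{no} decay at spatial infinity, since each value $\Theta(p)$ is determined by the compact data set $J^-(p)\cap\Sigma_{\tau_0}$, and so it holds on all of $\Gamma(\V)$, not merely $\Gamma_\sc(\V)$. (b) The projectors $\PP^{S/V/T}_\V$ are built purely from the spatial operators $\partial_i$ and $\Delta^{-1}$, hence commute with restriction to $\{\tau=\tau_0\}$ and with $\partial_\tau$, and therefore send sections with vanishing Cauchy data to sections with vanishing Cauchy data (the $a(\tau)$- and Christoffel factors relating $\partial_\tau$ to $\nabla_n$ depend only on $\tau$ and do not affect this). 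Combined with the hypothesis $\PP^{S/V/T}_\V\circ P^\V=P^\V\circ\PP^{S/V/T}_\V$ on $\Gamma_\infty(\V)$, facts (a) and (b) yield a type-preservation property of the Green's operators: for $f\in\Gamma^S_0(\V)$ the section $\PP^V_\V G^{P^\V}_+ f$ solves $P^\V(\,\cdot\,)=\PP^V_\V f=0$ and vanishes in the far past, hence vanishes; thus $G^{P^\V}_\pm$ and $G^{P^\V}$ preserve scalar/vector/tensor type.

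For part 3 I would not even need the Green's operators. Decompose $h=\PP^S_\V h+\PP^V_\V h+\PP^T_\V h$ in $\Gamma_\infty(\V)$; the commutation hypothesis gives $P^\V\PP^{V}_\V h=\PP^{V}_\V P^\V h=0$ and $P^\V\PP^{T}_\V h=0$, since $P^\V h$ is assumed of scalar type. As $h\in\Gamma_0(\V)$ it has vanishing Cauchy data on any slice $\Sigma_{\tau_0}$ to the past of $\supp h$; by fact (b) so do $\PP^V_\V h$ and $\PP^T_\V h$, whence fact (a) forces $\PP^V_\V h=\PP^T_\V h=0$. Hence $h=\PP^S_\V h\in\Gamma^S_0(\V)$. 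The vector and tensor cases are identical, and the $\W$-versions follow verbatim with $\PP^{S/V}_\W$ and $P^\W$.

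For parts 1 and 2 I would combine the standard Cauchy-hyperbolic results with the above type-preservation, taking care to use only \emph{time-dependent} cutoffs $\chi^\pm(\tau)$, which (being functions of $\tau$) commute with the projectors and hence preserve type. For part 1, given a scalar-type space-like compact solution $\Gamma$, the usual representation writes $\Gamma=G^{P^\V}h$ with $h:=P^\V(\chi\Gamma)$ for a time-only cutoff $\chi(\tau)$ interpolating from $0$ to $1$; the standard sandwiching makes $h$ compactly supported, and $h$ is of scalar type because $\chi\Gamma$ is and $P^\V$ preserves type, so $h\in\Gamma^S_0(\V)$. For part 2, given scalar-type space-like compact $\Gamma_1$, I split $\Gamma_1=\chi^+\Gamma_1+\chi^-\Gamma_1$ with time-only $\chi^\pm$ so that $\chi^+\Gamma_1$ is past-compact and $\chi^-\Gamma_1$ future-compact (both space-like compact and of scalar type), and set $\Gamma_2:=G^{P^\V}_+(\chi^+\Gamma_1)+G^{P^\V}_-(\chi^-\Gamma_1)$; then $P^\V\Gamma_2=\Gamma_1$, $\Gamma_2$ is space-like compact, and it is of scalar type by the type-preservation of $G^{P^\V}_\pm$. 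The vector, tensor, and $\W$ cases are again identical.

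The main obstacle to watch is precisely the spatial non-locality of the projectors: $\Delta^{-1}$ generically destroys (space-like) compact support, so one cannot obtain the decorated statements by naively projecting the undecorated ones — $\PP^V_\V\Gamma_2$ need not be space-like compact even when $\Gamma_2$ is. The whole strategy is therefore to never project a conclusion but to project only Cauchy data, where non-locality is harmless, and to let the finite-propagation-speed uniqueness recover the support properties for free. The two points requiring genuine verification are that fact (a) holds without any spatial-decay hypothesis — this is where global hyperbolicity and the compactness of $J^-(p)\cap\Sigma_{\tau_0}$ enter — and that the type-preservation of $G^{P^\V}_\pm$ extends from $\Gamma_0(\V)$ to the past/future-compact source classes used in part 2, which follows by the same vanishing-Cauchy-data argument applied on a slice beyond the relevant support.
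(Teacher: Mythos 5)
Your proof is correct and follows essentially the same route as the paper's: time-only cutoffs $\chi^\pm(\tau)$ to split solutions and sources while preserving scalar/vector/tensor type, the commutation hypothesis to see that $P^\V$ preserves type on $\Gamma_\infty(\V)$, a uniqueness argument to transfer this type-preservation to $G^{P^\V}_\pm$ (the paper asserts this step in one line, "the same holds for its inverses", which is exactly your engine spelled out), and for part 3 the observation that the wrong-type projections of $h$ solve $P^\V u=0$ with support confined to a time slab and must therefore vanish. The only caveat is that your fact (a) --- uniqueness from vanishing Cauchy data on a single slice, with no spatial decay, for an abstract Cauchy-hyperbolic operator --- is stronger than what the hypotheses obviously provide; however, in every place you use it the section in question actually vanishes identically on an open region to the past (or future) of a Cauchy surface, i.e.\ it has past-/future-compact support, and then its vanishing follows from the uniqueness of solutions with such supports for Green-hyperbolic operators (the results of B\"ar that the paper cites for precisely these steps), so your argument goes through unchanged.
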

\begin{proof}
We consider the case of $P^\V$, the other one can be shown analogously. 
 
Proof of 1: The proof follows the proof of e.g. \cite[Theorem 3.5.]{Bar2} with a few modifications. To this avail, we split $\Gamma\in\Gamma^{S/V/T}_\sc(\V)$ with $P^\V\Gamma=0$ into a future and past part $\Gamma=\Gamma^++\Gamma^-$ as described at the beginning of Section \ref{sec_solform}. By choosing a partition of unity $\chi=\chi^++\chi^-$ which is constant in the spatial coordinates, we can achieve $\Gamma^\pm\in\Gamma^{S/V/T}_\sc(\V)$. We now set $h:=P^\V \Gamma^+$. Since $h=-P^\V \Gamma^-$ and $P^\V$ commutes with $\PP^{S/V/T}_\V$, $h\in\Gamma^{S/V/T}_0(\V)$. Moreover $G^{P^\V}h=G_+^{P^\V}h-G_-^{P^\V}h=G_+^{P^\V}P^\V\Gamma^++G_-^{P^\V}P^\V\Gamma^-=\Gamma^++\Gamma^-=\Gamma$, where $G_\pm^{P^\V}P^\V\Gamma^\pm=\Gamma^\pm$ is demonstrated in the proof of \cite[Theorem 3.5.]{Bar2}.

Proof of 2: As in the proof of the first part, we split $\Gamma_1=\Gamma^+_1+\Gamma^-_1$ into a future and past part such that $\Gamma^\pm_1\in \Gamma^{S/V/T}_\sc(\V)$. Then a possible solution of $P^\V\Gamma_2=\Gamma_1$ is $\Gamma_2:=G^{P^\V}_+\Gamma^+_1+G^{P^\V}_-\Gamma^-_1$ which is well-defined by e.g. \cite[Theorem 3.8]{Baernew}. Since $P^\V$ commutes with $\PP^{S/V/T}_\V$, the same holds for its inverses $G^{P^\V}_\pm$ and thus $\Gamma_2\in \Gamma^{S/V/T}_\sc(\V)$.

Proof of 3: If $P^\V h\in\Gamma^{S/V/T}_0(\V)$ and $h\in\Gamma_0(\V)$, then $(1-\PP^{S/V/T}_\V)P^\V h=P^\V(1-\PP^{S/V/T}_\V)h=0$, where $(1-\PP^{S/V/T}_\V)h$ has compact support in time. However, such solutions to $P^\V\Gamma=0$ do not exist (see e.g. \cite[Corollary 3.9]{Baernew}) due to the Cauchy-hyperbolicity of $P^\V$. Thus $(1-\PP^{S/V/T}_\V)h=0$ which implies $h\in\Gamma^{S/V/T}_0(\V)$.
\end{proof}

Although we shall not need this more general case in the present work, the above proof can be repeated omitting the assumption that all occurring sections have compact support in spatial directions.

\subsection{The two faces of scalar and tensor observables}

Here we prove part 1 of Theorem \ref{prop_specialsplitrelations}.

\begin{theo}
\label{prop_surjinj} On FLRW backgrounds, $F^{-1}\circ G^{\widetilde P}:\Gamma_0(\V)\to\Gamma(\V)$ induces bijective maps $F^{-1}\circ G^{\widetilde P}:\E^{S/V/T}\to \Sol^{S/V/T}_\sc/\G^{S/V/T}_\sc$. In particular
\begin{enumerate}
\item (Surjectivity) Every $\Gamma\in \Sol^{S/V/T}_\sc$ 
can be split as $\Gamma=\Gamma_1+\Gamma_2$ with $\Gamma_2\in \G^{S/V/T}_\sc$ and $\Gamma_1\in \left(F^{-1}\circ G^{\widetilde{P}}\right)\left[\Ker^{S/V/T}_0(K^\dagger)\right]$.
\item (Injectivity) $h\in\Ker^{S/V/T}_0(K^\dagger)$ and $F^{-1}\circ G^{\widetilde{P}}h\in\G^{S/V/T}_{\sc}$ if and only if $h\in{P}\left[\Gamma^{S/V/T}_0(\V)\right]$.
\end{enumerate}
\end{theo}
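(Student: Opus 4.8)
The plan is to reduce the statement to the already-established ungraded bijectivity $F^{-1}\circ G^{\widetilde P}:\widebar{\E}=\E\to\widebar{\Sol}_\sc/\widebar{\G}_{\sc,0}=\Sol_\sc/\G_{\sc,0}$, which is Theorem \ref{prop_prop_G} (properties 1--3) transported to the original variables by the dictionary of Section \ref{sec_removingredefinition}, using that $F^{-1}\circ G^{\widetilde P}$ is the causal propagator of $\widetilde P\circ F$, and then to upgrade it to the graded level. A tempting shortcut, namely simply restricting the ungraded bijection to the subspaces $\E^{S/V/T}$ and $\Sol^{S/V/T}_\sc/\G^{S/V/T}_\sc$, does \emph{not} work, because the projectors $\PP^{S/V/T}_\V$ do not preserve compact or space-like compact support (cf.\ the strict inclusions below \eqref{eq_splitintersect}); one therefore cannot split $\E$ or $\Sol_\sc/\G_{\sc,0}$ into their graded pieces. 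Instead I would re-run the construction underlying Theorem \ref{prop_prop_G} (i.e.\ \cite{HS}) step by step, inserting Proposition \ref{prop_splithyp} wherever a hyperbolic equation is solved or a Green's operator preimage is taken, so that type and support are maintained simultaneously.

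Before doing so I would record the FLRW-specific input that every operator occurring in the construction commutes with the relevant projectors. Indeed $P$, $K$ and $K^\dagger$ commute with the projectors by \eqref{eq_invP}, \eqref{eq_invK} and \eqref{eq_invKdagger}; the field redefinition $F$ is fibre-wise and, since $\phi=\phi(\tau)$ and $g$ is of FLRW form, maps each $\Gamma^{S/V/T}(\V)$ to itself (the trace, the terms $\propto g_{ab}$ and the mixing with $\varphi$ are all of scalar type, while a transverse-traceless $\gamma$ has vanishing trace). Hence $T=\tfrac{2}{\alpha}\widebar K$, the gauge-fixed operator $\widetilde P$, the operator $R=\widebar K^\dagger\widebar K$ on $\W$ and $Q=\widebar K^\dagger\circ T$ on $\W$ all commute with $\PP^{S/V/T}_\V$, respectively $\PP^{S/V}_\W$; here one must carry the bookkeeping that $K^\dagger$ sends tensor configurations of $\V$ to zero, as $\W=TM$ carries only scalar and vector parts. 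Since $F$ is a fibre-wise, type- and support-preserving isomorphism intertwining $\Sol^X\leftrightarrow\widebar{\Sol}^X$, $\G^X\leftrightarrow\widebar{\G}^X$ and $\Ker^X_0(K^\dagger)=\Ker^X_0(\widebar K^\dagger)$, it suffices to prove the graded analogues of Theorem \ref{prop_prop_G} for $G^{\widetilde P}$ in the redefined variables. Finally I would choose the future/past partition of unity of Section \ref{sec_solform} constant in the spatial coordinates, so that future and past parts of a type-$X$ section stay of type $X$.

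With these preparations the graded surjectivity (part 1) runs as follows. Given $\Theta\in\widebar{\Sol}^X_\sc$, I first gauge-fix to $\widebar K^\dagger\Theta'=0$ by solving $R\varsigma=-\widebar K^\dagger\Theta$ with $\varsigma\in\Gamma^X_\sc(\W)$, possible by Proposition \ref{prop_splithyp}(2) applied to the Cauchy-hyperbolic $R$; then $\Theta'=\Theta+\widebar K\varsigma$ is a type-$X$, space-like compact solution of $\widetilde P\Theta'=0$, so Proposition \ref{prop_splithyp}(1) gives $\Theta'=G^{\widetilde P}h$ with $h\in\Gamma^X_0(\V)$. To land in $\Ker^X_0(\widebar K^\dagger)$ I would use $\widebar K^\dagger G^{\widetilde P}=G^Q\widebar K^\dagger$ to deduce $\widebar K^\dagger h=Q\eta$, identify $\eta\in\Gamma^X_0(\W)$ via Proposition \ref{prop_splithyp}(3) applied to $Q$, and replace $h$ by $h-T\eta$; the resulting correction together with the gauge term assembles into an element of $\widebar{\G}^X_{\sc,0}$ by the $\widebar{\G}_{\sc,0}$-mapping property of $G^{\widetilde P}\circ T$ established in \cite{HS}, refined to type $X$ by the same commuting and preimage arguments. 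The graded injectivity (part 2) is the analogous refinement of property 1 of Theorem \ref{prop_prop_G}: the implication $h\in P[\Gamma^X_0(\V)]\Rightarrow\widebar K^\dagger h=0$ and $G^{\widetilde P}h\in\widebar{\G}^X_{\sc,0}$ is immediate from $\widebar K^\dagger\widebar P=0$ and the $G^{\widetilde P}\circ T$ identity, while the converse follows the ungraded argument with Proposition \ref{prop_splithyp}(3) guaranteeing that the compactly supported preimage one constructs is of type $X$.

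The step I expect to be the main obstacle is exactly this type-tracking in the correction and injectivity arguments: because the projectors destroy support properties, one cannot project at the end but must verify line by line that every object produced by a Green's operator or by solving $R\varsigma=\cdots$ or $Q\eta=\cdots$ is simultaneously of the correct type and of (space-like) compact support. The crucial enabling fact is that $Q$ (and $R$, $\widetilde P$) genuinely commute with the projectors, which reduces to the identities \eqref{eq_invP}--\eqref{eq_invKdagger} together with the $\V\to\W$ tensor-annihilation of $K^\dagger$; once this is secured, Proposition \ref{prop_splithyp}(3) is precisely the tool that turns ``$Q\eta$ is of type $X$ and compactly supported'' into ``$\eta$ is of type $X$'', closing the argument.
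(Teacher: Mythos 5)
Your proposal is correct and follows essentially the same route as the paper's proof: both arguments re-run the general quantization construction step by step with Proposition \ref{prop_splithyp} inserted wherever a hyperbolic equation is solved or a Green's-operator preimage is taken, relying on the same intertwining relations ($K^\dagger\circ G^{\widetilde P}=G^{Q}\circ K^\dagger$, $\widebar{K}\circ G^{R}=G^{\widetilde P}\circ T$) and on the fact that $P$, $K$, $K^\dagger$, $F$, and hence $T$, $R$, $Q$, $\widetilde P$ and their Green's operators, all commute with the scalar/vector/tensor projectors, and both use the identical structure (gauge-fix the solution, represent it via $G^{\widetilde P}$, correct the test section by $T\eta$ with $\eta$ obtained from $G^{Q}$, and identify the leftover $\widebar{K}G^{R}\eta$ as pure gauge). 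The only cosmetic differences are that you work in the redefined variables throughout and treat all three types uniformly, whereas the paper transports each step back to the original variables and disposes of the vector case at the outset via $\E^V=\{0\}$ and $\Sol^V_\sc/\G^V_\sc=\{0\}$.
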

\begin{proof}
 Since $\E^V=\{0\}$ and $\Sol^V/\G^V_\sc=\{0\}$, we only need to prove the scalar/tensor case.
To this avail, we recall a few identities from Section \ref{sec_quantgen}.
$$\widebar {P}=P\circ F^{-1}\,,\quad \widebar {K}=F\circ K\,,\quad \widebar {K}^\dagger=K^\dagger\,,\quad T=\frac{2}{\alpha}\widebar  K\,,\quad R=\widebar {K}^\dagger\circ \widebar {K}=K^\dagger\circ F\circ K$$$$Q=\widebar {K}^\dagger\circ T=K^\dagger \circ T\,,\quad\widetilde{P}=\widebar {P}+T\circ\widebar {K}^\dagger=P\circ F^{-1}+T\circ K^\dagger\,,$$
where $Q$, $R$ and $\widetilde P$ are (multiples of) normally hyperbolic operators and thus have a well-defined Cauchy problem. In particular, unique advanced and retarded propagators $G^Q_\pm$, $G^R_\pm$ and $G_\pm^{\widetilde P}$ exist, and we define the causal propagators $G^Q:=G^Q_+-G^Q_-$, $G^R:=G^R_+-G^R_-$ and $G^{\widetilde P}:=G^{\widetilde P}_+-G^{\widetilde P}_-$. Since $F$ is invertible, also $\widetilde P\circ F$ has a well-posed Cauchy problem, and the causal propagator of $\widetilde P\circ F$ is $F^{-1}\circ G^{\widetilde P}$. Moreover it holds
$$K^\dagger\circ \widetilde P=Q\circ K^\dagger\,,\quad \widetilde P\circ \widebar {K}=T\circ R\qquad\Rightarrow\qquad K^\dagger \circ G^{\widetilde P}=G^{Q}\circ K^\dagger\,,\quad \widebar  K\circ G^R=G^{\widetilde P}\circ T\,,$$ where the implications can be shown as in \cite[Theorem 3.12]{HS}. Finally, since $P$, $K$ and $K^\dagger$ intertwine, respectively commute with, the scalar/tensor projectors on the appropriate spaces, cf. \eqref{eq_invK}, \eqref{eq_invKdagger}, \eqref{eq_invP}, and one can show that $F$ commutes with $\PP^{S/V/T}_\V$ as well, analogous relations hold for $T$, $R$, $Q$, $\widetilde P$, and the inverses $G^Q_\pm$, $G^R_\pm$, $G_\pm^{\widetilde P}$.

Proof of 1: To see surjectivity in the tensor case, we recall $F[\Gamma^{S/T}(\V)]=\Gamma^{S/T}(\V)$ and $K^\dagger[\Gamma^T_\infty(\V)]=\{0\}$, in particular $\Ker^{T}_0(K^\dagger)=\Gamma^T_0(\V)$. Thus $\widetilde PF\Gamma=P\Gamma+TK^\dagger F\Gamma=0$, and by Proposition \ref{prop_splithyp} there exists $h_1\in\Gamma^T_0(\V)$ s.t. $\Gamma=F^{-1}G^{\widetilde P}h_1$. In the scalar case, we solve $R\varsigma = K^\dagger F K\varsigma =K^\dagger F \Gamma$ for $\varsigma$. By Proposition \ref{prop_splithyp}, this is possible with $\varsigma \in\Gamma^S_\sc(\W)$. We set $\Gamma_3=K\varsigma\in\G^S_\sc$ and obtain by construction $\widetilde PF(\Gamma-\Gamma_3)=0$. Thus, by Proposition \ref{prop_splithyp} there exists $h_2\in\Gamma^S_0(\V)$ s.t. $\Gamma-\Gamma_3=F^{-1}G^{\widetilde P}h_2$. We compute $0=K^\dagger FF^{-1}G^{\widetilde P}h_2=K^\dagger G^{\widetilde P}h_2=G^Q K^\dagger h_2$, hence there exists $f\in\Gamma_0(\W)$ s.t. $K^\dagger h_2=Qf$, cf. \cite[Theorem 3.5.]{Bar2}. By Proposition \ref{prop_splithyp}, $f\in\Gamma^S_0(\W)$. We set $h_3:=h_2-Tf$ and obtain by construction $h_3\in\Ker^S_0(K^\dagger)$. We define $\Gamma_1:=F^{-1}G^{\widetilde P}h_3$ and it remains to show that $\Gamma_4:=\Gamma-\Gamma_3-\Gamma_1\in\G^S_\sc$. To see this, we compute $\Gamma_4=F^{-1}G^{\widetilde P}Tf=F^{-1}\widebar {K}G^Rf=K G^Rf$. This is in $\Gamma_\sc(\V)$ due to \cite[Theorem 3.5.]{Bar2} and in $\G^S_\sc$ because $K$ and $G^R$ preserve the scalar type.

Proof of 2: We start again with the tensor case. If $h_1=Ph_2$ for $h_2\in\Gamma^T(\V)$, then $K^\dagger h_1=K^\dagger Ph_2=0$ because $K^\dagger\circ P=0$ and $h_1=PF^{-1}Fh_2=(P\circ F^{-1}+T\circ K^\dagger)Fh_2=\widetilde P Fh_2$ due to $K^\dagger Fh_2=0$. Thus $F^{-1}G^{\widetilde P}h_1=0$. If instead $h_1\in\Ker^T_0(K^\dagger)=\Gamma^T_0(\V)$ and $F^{-1}G^{\widetilde P}h_1=0$, then there exists $h_3\in\Gamma_0(\V)$ s.t. $h_1=\widetilde P h_3$ by \cite[Theorem 3.5.]{Bar2}, and owing to Proposition \ref{prop_splithyp}, $h_3\in\Gamma^T_0(\V)$. Reversing the computation above, we find that $h_1=\widetilde P h_3=P F^{-1}h_3$, where $F^{-1}h_3\in\Gamma_0(\V)$ since $F^{-1}$ preserves the tensor type. To show the scalar case, we assume $h_1=Ph_2$ for $h_2\in\Gamma^S(\V)$. We note that $h_1\in\Ker^S_0(K^\dagger)$ and compute $F^{-1}G^{\widetilde P}h_1=F^{-1}G^{\widetilde P}(\widetilde P\circ F-T\circ K^\dagger\circ F)h_2=-F^{-1}G^{\widetilde P}TK^\dagger Fh_2$ $=F^{-1}\widebar {K}G^R K^\dagger Fh_2=K G^R K^\dagger Fh_2$. By the same arguments as in the proof of 1, this is an element of $\G^S_\sc$. Assuming instead that $h_1\in\Ker^S_0(K^\dagger)$ and $F^{-1}G^{\widetilde P}h_1=K \varsigma$ for some $\varsigma\in\Gamma^S_\sc(\W)$ (recall $\G^S_\sc=K[\Gamma^S_\sc(\W)]$), we may compute $R\varsigma=K^\dagger F K\varsigma=K^\dagger FF^{-1}G^{\widetilde P}h_1=K^\dagger G^{\widetilde P}h_1=G^Q K^\dagger h_1=0$. Thus by \cite[Theorem 3.5.]{Bar2} and Proposition \ref{prop_splithyp}, there exists $f\in\Gamma^S_0(\W)$ s.t. $\varsigma=G^R f$. We have $F^{-1}G^{\widetilde P}h_1=K \varsigma=K G^R f=F^{-1}\widebar {K}G^R f=F^{-1}G^{\widetilde P}Tf$, thus, by \cite[Theorem 3.5.]{Bar2} and Proposition \ref{prop_splithyp}, there exists $h_2\in\Gamma^S_0(\V)$ such that $h_1=\widetilde P h_2+Tf$. We may further compute $0=K^\dagger h_1=K^\dagger \widetilde P h_2+K^\dagger Tf=QK^\dagger h_2+Qf$, from which $f=-K^\dagger h_2$ follows by \cite[Theorem 3.5.]{Bar2}. Thus $h_1=\widetilde P h_2+Tf=\widetilde P h_2-TK^\dagger h_2=PF^{-1}h_2$ with $F^{-1}h_2\in \Gamma^S_0(\V)$.

\end{proof}

\section{List of symbols}
\label{sec_listsymbols}

We compile a list of the symbols used across various sections of the text along with their definition or a reference to their first appearance.\\\\

\begin{tabular}{p{5cm}p{8cm}}
$\langle \cdot, \cdot \rangle_{\V}$ & bilinear form on $\Gamma(\V)$ prior to field redefinition, \eqref{eq_bilinearV}\\
$\langle \cdot, \cdot \rangle_{\widebar\V}$ & bilinear form on $\Gamma(\V)$ after field redefinition, \eqref{eq_redefinedForm}\\
$\langle \cdot, \cdot \rangle_{\W}$ & bilinear form on $\Gamma(\W)$, \eqref{eq_formW}\\
$\langle \cdot, \cdot \rangle_{ {\Sol}}$ &  bilinear form on $ {\Sol}$, \eqref{eq_def_origquants}\\
$\langle \cdot, \cdot \rangle_{\widebar {\Sol}}$ &  bilinear form on $\widebar {\Sol}$, \eqref{eq_def_tau2}\\
$a$ & scale factor, Section \ref{sec_inflation}\\
$\alpha$ & $\alpha = 1-\xi\phi^2$\\
$A$ & scalar component of $\gamma_{ab}$, \eqref{eq_def_split}\\
\end{tabular}

\begin{tabular}{p{5cm}p{8cm}}
$\A$ & polynomial quantum field algebra corresponding to the presymplectic space $(\widebar\E,\widebar\sigma)\simeq(\E,\sigma)$, Section \ref{sec_quant}\\
$\A^{\mu}$ & polynomial quantum field algebra of $\mu$, $\A^{\mu}\simeq\A^S$, Section \ref{sec_scalsec} \\
$\A^{S/T}$ & polynomial quantum field algebra corresponding to the symplectic space $(\E^{S/T},\sigma)$\\
$b$ & scalar component of $k_{ab}$, \eqref{eq_def_split0}\\
$B$ & scalar component of $\gamma_{ab}$, \eqref{eq_def_split}\\
$\beta$ & $\beta = 1+\left(6\xi-1\right)\xi\phi^2$ \\
$c$ & scalar component of $k_{ab}$, \eqref{eq_def_split0}\\
\end{tabular}

\begin{tabular}{p{5cm}p{8cm}}
$C^\infty_\infty(M,{\cal T})$ & space of smooth functions with values in ${\cal T}$ which vanish at spatial infinity with all derivatives\\
$\gamma_{ab}$ &  metric perturbation prior to field redefinition, Section \ref{sec_eometc}\\
$\Gamma$ & tuple of field perturbations prior to redefinition $\Gamma=(\gamma_{ab},\varphi)^T$, Section \ref{sec_eometc}\\
$\Gamma(\V/\W)$ &  space of smooth sections of $\V$ resp. $\W$\\
$\Gamma_{0/\sc/\tc}(\V/\W)$ &  spaces of smooth sections of $\V$ resp. $\W$ of compact, spacelike compact, timelike compact support, Section \ref{sec_notations}\\
\end{tabular}

\begin{tabular}{p{5cm}p{8cm}}
$\Gamma_\infty(\V/\W)$ & space of smooth sections which vanish at spatial infinity with all derivatives\\
$\Gamma^{S/V/T}(\V/\W)$ & space of smooth sections of scalar/vector/tensor type\\
$\Gamma^{S/V/T}_{\infty/\sc/0}(\V/\W)$ & $\Gamma^{S/V/T}_{\infty/\sc/0}(\V/\W)= \Gamma_{\infty/\sc/0}(\V/\W)\cap\Gamma^{S/V/T}(\V/\W)$\\

$d$ & scalar component of $k_{ab}$, \eqref{eq_def_split0}\\
$D$ & scalar component of $\gamma_{ab}$, \eqref{eq_def_split}\\
$\Delta$ & Laplace operator on $\bbR^3$\\
$e$ & scalar component of $k_{ab}$, \eqref{eq_def_split0}\\
\end{tabular}

\begin{tabular}{p{5cm}p{8cm}}
$E$ & scalar component of $\gamma_{ab}$, \eqref{eq_def_split}\\
${\E}$ & ${\E}= \Ker_0( {K}^\dagger)/{ {P}}[\Gamma_0(\V)]$ \\
$\widebar{\E}$ & $\widebar{\E}= \Ker_0(\widebar {K}^\dagger)/{\widebar {P}}[\Gamma_0(\V)]$ \\
$\E^{S/V/T}$ & $\E^{S/V/T}= \left.\Ker^{S/V/T}_0(K^\dagger)\right/P\left[\Gamma^{S/V/T}_0(\V)\right]$\\
\end{tabular}

\begin{tabular}{p{5cm}p{8cm}}
$f$ & test function\\
$F$ & field redefinition operator, \eqref{eq_def_redefinedq2}\\
$\phi$ & (background) scalar field, Section \ref{sec_eometc}\\
$\varphi$ &  scalar field perturbation prior to field redefinition, Section \ref{sec_eometc}\\
$\Phi$ & Bardeen potential $\Phi= D-\H(B+E^\prime)$\\
\end{tabular}

\begin{tabular}{p{5cm}p{8cm}}
$g_{ab}$ & smooth Lorentzian (background) metric on $M$, Section \ref{sec_notations}\\
$\GG$  & tuple of background fields $\GG=(g,\phi)^T$, Section \ref{sec_eometc}\\
${\G}$ & $ {\G}= {K}\left[\Gamma(\W)\right]$\\
$\widebar {\G}$ & $\widebar {\G}=\widebar {K}\left[\Gamma(\W)\right]$\\
\end{tabular}

\begin{tabular}{p{5cm}p{8cm}}
${\G}_\text{sc}$ & ${\G}_\text{sc}= { {\G}}\cap \Gamma_\sc(\V)$\\
$\widebar{\G}_\text{sc}$ & $\widebar{\G}_\text{sc}= {\widebar {\G}}\cap \Gamma_\sc(\V)$\\
${\G}_{\sc,0}$ & $ {\G}_{\sc,0}= {K}[\Gamma_\sc(\W)] $\\
$\widebar {\G}_{\sc,0}$ & $\widebar {\G}_{\sc,0}=\widebar {K}[\Gamma_\sc(\W)] $\\
$\G_\infty$ & $\G_\infty=\G\cap \Gamma_\infty(\V)$\\
$\G^{S/V}_{(\infty/\sc)}$ & $\G^{S/V}_{(\infty/\sc)}=\G\cap \Gamma_{(\infty/\sc)}^{S/V}(\V)$\\
$\G_{(\infty/\sc)}^{T}$ & $\G_{(\infty/\sc)}^{T}=\{0\}$\\
\end{tabular}

\begin{tabular}{p{5cm}p{8cm}}
$\G^{S/V}_{\sc,0}$ & $\G^{S/V}_{\sc,0}=K\left[\Gamma^{S/V}_\sc(\W)\right]$\\
$G^{\widetilde{P}}$ & causal propagator of $\widetilde{P}$, \eqref{eq_def_G}\\
$h$ & $h=(k_{ab},f)^T$ compactly supported section of $\V$, Section \ref{sec_quantization}\\
$\H$ & $\H=(\partial_\tau a) / a$\\
$\chi$ & gauge-invariant scalar field perturbation $\chi= \varphi - \phi^\prime(B+E^\prime)$\\
$j$ & conserved current prior to field redefinition, \eqref{eq_origcurrent}\\
\end{tabular}

\begin{tabular}{p{5cm}p{8cm}}
$\widebar j$ & conserved current after field redefinition, \eqref{eq_current}\\
$k_{ab}$ & metric component of compactly supported section of $\V$, Section \ref{sec_quantization}\\
$K$ & gauge transformation operator prior to field redefinition, \eqref{eq_gaugetrafosorig}\\
$\widebar K$ & gauge transformation operator after field redefinition, \eqref{eq_redefinedK}\\
$\Ker_0( {K}^\dagger)$ & $\Ker_0( {K}^\dagger)=\{h\in\Gamma_0(\V)\,|\, {K}^\dagger h=0\}$\\
$\Ker_0(\widebar {K}^\dagger)$ & $\Ker_0(\widebar {K}^\dagger)=\{h\in\Gamma_0(\V)\,|\,\widebar {K}^\dagger h=0\}$\\
\end{tabular}

\begin{tabular}{p{5cm}p{8cm}}
$\Ker^{S/V/T}_0(K^\dagger)$ & $\Ker^{S/V/T}_0(K^\dagger)=\Ker_0(K^\dagger)\cap \Gamma_0^{S/V/T}(\V)$\\
$M$  & four-dimensional smooth manifold, Section \ref{sec_notations}\\
$(M,g)$ & four-dimensional globally hyperbolic spacetime, Section \ref{sec_notations}\\
$\mu$ & Mukhanov-Sasaki variable, \eqref{eq_defmu}\\
$n$ & forward-pointing unit normal vector field on $\Sigma$\\
$P$ & linearised Einstein-Klein-Gordon operator prior to field redefinition, \eqref{eq_originaleom}\\
$\widebar P$ & linearised Einstein-Klein-Gordon operator after field redefinition, \eqref{eq_redefinedP}\\
\end{tabular}

\begin{tabular}{p{5cm}p{8cm}}
$\widetilde P$ & gauge-fixed linearised Einstein-Klein-Gordon operator, \eqref{eq_PTilde}\\
$P^T$ & equation of motion operator for $T_{ij}$, \eqref{eq_EOMten}\\
$P^\mu$ & equation of motion operator for $\mu$, \eqref{eq_EOMmu1}\\
$\PP^{S/V/T}_\V$ & projectors $\PP^{S/V/T}_\V:\Gamma_\infty(\V)\to\Gamma^{S/V/T}_\infty(\V)$\\
$\PP^{S/V}_\W$ & projectors $\PP^{S/V}_\W:\Gamma_\infty(\W)\to\Gamma^{S/V}_\infty(\W)$\\
$\Psi$ & Bardeen potential $\Psi= A-(\partial_\tau + \H)(B+E^\prime)$\\
$R$ & Ricci curvature scalar, Section \ref{sec_notations}\\
\end{tabular}

\begin{tabular}{p{5cm}p{8cm}}
$r$ & scalar component of $\varsigma$, \eqref{eq_def_splitgauge}\\
$s$ & scalar component of $\varsigma$, \eqref{eq_def_splitgauge}\\
\end{tabular}

\begin{tabular}{p{5cm}p{8cm}}
${\Sol}$ & $ {\Sol}= \{\Gamma\in \Gamma(\V)\,|\, { {P}}\Gamma=0\}$\\
$\widebar {\Sol}$ & $\widebar {\Sol}= \{\Theta\in \Gamma(\V)\,|\, {\widebar {P}}\Theta=0\}$\\
${\Sol}_\text{sc}$ & ${\Sol}_\text{sc}=  {\Sol}\cap \Gamma_\sc(\V)$\\
$\widebar {\Sol}_\text{sc}$ &$\widebar {\Sol}_\text{sc}= \widebar {\Sol}\cap \Gamma_\sc(\V)$\\
$\Sol_\infty$ & $\Sol_\infty=\Sol\cap \Gamma_\infty(\V)$\\
$\Sol^{S/V/T}_{(\infty/\sc)}$  & $\Sol^{S/V/T}_{(\infty/\sc)}= \Sol\cap \Gamma_{(\infty/\sc)}^{S/V/T}(\V)$\\
\end{tabular}

\begin{tabular}{p{5cm}p{8cm}}
$\varsigma$ & vector field, $\varsigma\in\Gamma(\W)=\Gamma(T M)$\\
${\sigma}$ & presymplectic form on $ \E$, \eqref{eq_def_origquants}\\
$\widebar {\sigma}$ & presymplectic form on $\widebar \E$, \eqref{eq_def_tau}\\
$\Sigma$ & Cauchy surface of $(M,g)$\\

$t_{ij}$ & tensor component of $k_{ab}$, \eqref{eq_def_split0}\\
$T$ & gauge fixing operator $T=2/\alpha \widebar K$ \\
$T_{ij}$ & tensor component of $\gamma_{ab}$, \eqref{eq_def_split}\\
$\tau$ & conformal time, Section \ref{sec_inflation}\\
$\theta_{ab}$ & metric perturbation after field redefinition, Section \ref{sec_eometc}\\
$\Theta$ & tuple of field perturbations after field  redefinition $\Theta=(\theta_{ab},\zeta)^T$, Section \ref{sec_eometc}\\
\end{tabular}

\begin{tabular}{p{5cm}p{8cm}}
$v_i$ & vector component of $k_{ab}$, \eqref{eq_def_split0}\\
$V_i$ & vector component of $\gamma_{ab}$, \eqref{eq_def_split}\\
$V$  & potential for $\phi$\\
$\V$  & $\V= \bigvee^2 T^*M \oplus \left(M\times \bbR\right)$, Section \ref{sec_notations}\\
$w_i$ & vector component of $k_{ab}$, \eqref{eq_def_split0}\\
$W_i$ & vector component of $\gamma_{ab}$, \eqref{eq_def_split}\\
$\W$  & $\W = TM$, Section \ref{sec_notations}\\
\end{tabular}

\begin{tabular}{p{5cm}p{8cm}}
$X_i$ & gauge invariant vector perturbation $X_i=W_i^\prime-V_i$\\
$\xi$ & coupling to scalar curvature, Section \ref{sec_eometc}\\
$z_i$ & vector component of $\varsigma$, \eqref{eq_def_splitgauge}\\
$\zeta$ &  scalar field perturbation after field redefinition, Section \ref{sec_eometc}\\
\end{tabular}

%%%%%%%%%%%%%%%%%%%%%%%%%%%%%%%%%%%%%%%%%%%%%%%%
%%%%%%%%%%%%%%%%%%%%%%%%%%%%%%%%%%%%%%%%%%%%%%%%

\end{document}